\newcommand{\setObjects}{\mathcal{O}\xspace}
\newcommand{\setVoters}{\mathcal{\mathcal{V}}\xspace}
\newcommand{\object}{o\xspace}
\newcommand{\voter}{v\xspace}
 \newcommand{\weight}{w\xspace}
 \newcommand{\declaration}{b\xspace}
 \newcommand{\Declaration}{\mathbf{B}\xspace}
\newcommand{\CM}{$(\setVoters,\setObjects,(\utility_{v})_{v \in \setVoters} ,Q,\Si,\SCF)$}
\newcommand{\SCM}{$(\setVoters,\setObjects,W,(\declaration^*_{v})_{v \in \setVoters},\SCF)$}
\newcommand{\SCF}{\mathcal{A}lgo}
\newcommand{\SCMU}{$(\setVoters,\setObjects,W,(\declaration^*_{v})_{v \in \setVoters},\SCF_{utilitarian})$}
\newcommand{\SCMF}{$(\setVoters,\setObjects,W,(\declaration^*_{v})_{v \in \setVoters},\SCF_{fair})$}
\newcommand{\utility}{{u}\xspace}
\newcommand{\Si}{\mathcal{S}} 
\newcommand{\motnouveau}[1]{\emph{#1}}
\theoremstyle{plain}
\newtheorem{theorem}{Theorem}
\newtheorem{lemma}[theorem]{Lemma}
\newtheorem{proposition}[theorem]{Proposition}
\newtheorem{definition}[theorem]{Definition}
\theoremstyle{definition}
\theoremstyle{remark}
\newtheorem*{note*}{Note}
\newtheorem*{remark*}{Remark}
\theoremstyle{claimstyle}
\newtheorem*{claim*}{Claim}
\newtheorem*{lemma*}{Lemma}
\newtheorem*{proposition*}{Proposition}
\newtheorem*{definition*}{Definition}
\newtheorem*{theorem*}{Theorem}
\begin{document}

\title{Designing Strategyproof Election Systems with Score Voting}

\author{Johanne Cohen\\
        LISN-CNRS, Universit\'e Paris-Saclay, France\\
        \texttt{johanne.cohen@lri.fr}\\
        \And
        Daniel Cordeiro\\
        University of S\~ao Paulo, Brazil\\
        \texttt{daniel.cordeiro@usp.br}\\
        \And
        Valentin Dardilhac\\
        Universit\'e Paris-Saclay, France\\
        valentin.dardilhac@ens-paris-saclay.fr\\
        \And
        Victor Glaser\\
        \'Ecole normale sup\'erieure de Lyon, France\\
        \texttt{victor.glaser@ens-lyon.fr}}

\maketitle

\begin{abstract}
We focus on the strategyproofness of voting systems where voters must choose a number of options among several possibilities. These systems include those that are used for Participatory Budgeting, where we organize an election to determine the allocation of a community's budget (city, region, etc.) dedicated to the financing of projects.

We present a model for studying voting mechanisms and the Constrained Change Property (CCP), which will be used to design voting mechanisms that are always strategyproof. We also define a new notion of social choice function and use it to design a new class of utilitarian voting mechanisms that we call score voting. We prove that the mechanisms designed with core voting with a neutral score function are equivalent to knapsack voting on the same instance and that any score voting designed with a total score function is strategyproof if and only if its score function satisfies CCP. 

These results are combined to devise an algorithm that can find the  closest total score function that makes any given score voting to be strategyproof.
\end{abstract}


\newpage
\section{Introduction} \label{Section:1}

Social choice theory is a branch of science that studies how individual preferences can be aggregated in a collective choice~\cite{Bra16}. Social choice theory has been applied to study applications in several domains. In particular, it has been used to study what is known in the literature as \emph{Knapsack voting}, adapted from Cabannes' idea of Participatory Budgeting~\cite{cabannes2004participatory}.

Knapsack Voting (KP) seeks to invite citizens to participate in the process of deciding how public money is spent. This form of participatory democracy was first employed by the city of Porto~Alegre, Brazil, in 1989. Since then it has been used by different cities around the world like Madrid, Seoul, Bogota, New York, and Paris. For instance, in 2016 Paris applied KP to allow citizens to vote on how to allocate a budget of 100 million Euros~\cite{cabannes2020another}.

Citizens vote independently for a subset of projects considering multiple criteria (cost of the project, location, beneficiaries, etc.). Their choices are then used to reach a joint decision in a fair and principled way. 

In this work, we are interested in studying the properties of voting mechanisms---algorithms that select a solution by taking into account the opinion of the voters---that make voting resilient to manipulation. We study the concept of \emph{strategyproofness} (the equivalent for auctions being called truthfulness) of those mechanisms, which is the idea that the best voting strategy for a voter is to be sincere, i.e., the player has no incentive to strategically change its preferred vote in order to increase its outcome.

Designing a strategyproof voting mechanism is hard due to several impossibilities results resulting from the Gibbard-Satterthwaite theorem \cite{Gib73, sat75} and its extensions. The main contributions of this work are the following.

The first contribution is a new model for voting mechanisms that allows the study of voting mechanisms regardless of their type (utilitarian or fair). We called this model the \emph{common Choice Mechanism} (CM). Using this model, we describe conditions that make non-dictatorial mechanisms non-strategyproof and show how the social choice functions respecting the ``Constrained Change Property'' (CCP) can be used to design CMs that are strategyproof for the unitary case.

The second contribution of this work is the notion of \emph{score functions}. We use this notion to design a new class of utilitarian voting mechanisms that we call \emph{score voting}. We show that the mechanisms designed with score voting that have the ``neutrality property'' are equivalent to knapsack voting. We also show that any score voting designed with a \emph{total} score function is strategyproof if and only if its score function satisfies CCP. We present an algorithm that use this result to find the closest total score function that makes a score voting strategyproof.

The remaining of this document is organized as follows. \Cref{sec:related} presents works on computational social choice and Participatory Budgeting. In \cref{sec:notations} we present a model to study choice mechanisms (CMs) and the notations used in this document; we also formally define the notion of strategyproofness. Section~\ref{sec:impossibility} studies the strategyproofness of different mechanisms and shows how the CCP property can be used to design CMs that are strategyproof. \Cref{sec:scorevoting} studies several \emph{score voting} mechanisms, shows how they can be used to design strategyproof mechanisms, and present an algorithm to compute the closest strategyproof total score function. Finally, the appendix details all the omitted proofs.

\section{Related Work}
\label{sec:related}

Participatory budgeting (PB) \cite{cabannes2004participatory,cabannes2020another} has been applied by different municipalities as a democratic tool to allow citizens to prioritize investments on several projects given a limited budget. The idea was first applied in 1989 in the city of Porto Alegre, Brazil, and has been used in several cities, notably in Latin America and Europe~\cite{Aziz2021}.

Different electoral systems and their properties have been studied by \emph{Computational social choice}~\cite{Bra16}, a branch of science that studies the computational aspects of collective decision-making. It covers problems regarding voting theory (including mechanisms design, the computational complexity of choosing a winner, strategic voting), fairness in allocations, coalition formation, etc.

In particular, there is a great deal of interest in studying the manipulation of decisions by decision-makers. Voters can strategically change their true preferences in order to obtain a better outcome. Decision-making mechanisms that are immune to strategic voting are called \emph{strategyproof} (also called truthful depending on the domain).

A common electoral system used in participatory budgeting is the $k$-approval voting. Each voter chooses (``approves'') up to $k$ projects and the projects with the highest number of approvals are funded (respecting the budget constraints).

Goel et al.~\cite{Goel19} introduced the \motnouveau{Knapsack Voting} scheme. The idea comes from the fact that applying PB is conceptually similar to solving the classical Knapsack problem, with the set of chosen budget items fitting a limited budget $B$ while maximizing societal value \cite{chen2016mathematical}. In this scheme, each citizen votes for a subset of the objects such that the sum of the costs of the objects satisfies the budget constraint. They showed that this schema is strategyproof and welfare-maximizing when the outcome for the voter is given by the $\ell^1$ distance from the outcome and its true preference and partially strategyproof under additive concave utilities. 

Aggregating budget division schemes that maximize the utilitarian social welfare of voters have a tendency to overprioritize majority preferences, resulting in unfairness problems~\cite{Fre21}. The seminal work by Moulin~\cite{moulin1980strategy} shows that $\ell^1$ preferences are a particular case of \textit{single-peaked} preferences and presents a family of voting schemes that are both incentive compatible and proportional by adding some fixed (``phantom'') ballots to the voter's ballots and choosing the median of the larger set. This result was later generalized by several works~\cite{barbera1994characterization,border1983straightforward,duddy2015fair,Fre21,peters1992pareto}.

Voting schemes that maximize the utilitarian social welfare are possible because \textit{single-peaked} preferences assume that there exists an ordering on the alternatives. More general mechanisms may not be strategyproof due to
an important impossibility result was independently proved by 
Gibbard~\cite{Gib73} and Satterthwaite~\cite{sat75}. The Gibbard-Satterthwaite theorem states that every resolute, non-imposed, and non-dictatorial social choice function for three or more alternatives is susceptible to strategic manipulation \cite{Bra16}.

\section{Preliminaries for collective decision-making problems }
\label{Section:2}
\label{sec:notations}

The notations used in this paper are based on the standard notation given by Brandt et al.~\cite{Bra16}.

A \motnouveau{common Choice Mechanism} (CM) is a $6$-uplet \CM\ which is defined as follows.  We consider a set of voters $\setVoters = [n]$ (where $[i] = \{1,\dots, i\}$, for $i \in \mathbb{N}$)
and a finite set of $m$ alternatives (objects) $\setObjects$ that answers the set of questions Q requested from the voters. The set of all valid possible collective decisions is denoted by $\Si \subseteq \setObjects$.

Each voter $\voter \in \setVoters$ has a private preference over all solutions given by a utility function\footnote{The set of all subsets of $S$ is  denoted by $\mathcal{P}(S)$.} $\utility_\voter: \mathcal{P}(\setObjects) \to \mathbb{R}$. According to a set of questions \emph{$Q$}, that defines the format of the ballots, each voter $\voter$ answers these questions via a ballot  $\declaration_{\voter}$.
A  \motnouveau{declaration profile} $\Declaration = (\declaration_{1}, \dots, \declaration_{n})$ consists of a ballot $\declaration_{\voter}$ for each voter $\voter$.
In the remaining of the document, $(\declaration_{\voter};\declaration_{-\voter})$ is a shorthand for $(\declaration_{1}, \dots,\declaration_\voter, \dots, \declaration_{n})$, used here to highlight the ballot of voter $\voter$ against that of all other voters.

Voting can mean different things depending on the specified form of a ballot and a collective decision.  For example, the ballot can be a subset $e\in \mathcal{P}(\setObjects)$,  a linear ordering of the objects, or the value function among all the objects. The \motnouveau{social choice function} $\SCF$, aggregates a set of individual ballots $\Declaration$ into a collective decision and returns a \motnouveau{solution}, which is a \motnouveau{winning set} of objects $\SCF(\Declaration)$.

Our study focuses on the design of the social choice function that incentives an individual voter to vote sincerely.
We denote by \emph{$ \declaration^*_\voter$} the ballot of the voter $\voter$ if she sincerely answers the questions based on her utility.  

\begin{definition}
We say that a CM is \motnouveau{strategyproof} if: for all profile ballots $\Declaration$, for all voters $\voter \in \setVoters$,  we have $
\utility_\voter (\SCF( \declaration^*_\voter;\Declaration_{- \voter}))\ge \utility_\voter (\SCF( \declaration_\voter ;\Declaration_{- \voter}))$.
\end{definition} 

Our work focuses on mechanisms designed to maximize social welfare assuming voters have utilities over alternatives. 

\begin{definition}
 If the ballot is a valuation of the objects, a \motnouveau{social welfare-maximizing function (SWF)} is an optimization function $\mathcal{P}(\setObjects)\to \mathbb{R} $ that maximizes a social welfare.
\end{definition}

In this work, we distinguish two forms of social welfare-maximizing functions:
\begin{itemize}
\item a  social welfare function denoted $f_{utilitarian}$  is \motnouveau{utilitarian}  if it maximizes $\sum_{\voter \in \setVoters}
\sum_{\object \in s}\declaration_\voter(\object)$, $\forall s \in \Si$;
\item  a social welfare function denoted $f_{fair}$, is  \motnouveau{fair} if maximizes  $\min_{\voter \in \setVoters}\sum_{\object \in s}\declaration_\voter(\object),$  $\forall s \in \Si$.
\end{itemize}

An algorithm $\SCF$ that finds the optimal solution of a utilitarian (resp. fair) function is denoted $\SCF_{utilitarian}$ (resp. $\SCF_{fair}$).

In order to study different types of ballots, we use a particular class of CMs called \motnouveau{Simple CMs} that takes into account different weights of objects and uses the sincere ballot of the voter in its utility. \motnouveau{Simple CM} is a $5$-uplet \SCM\ defined as follows. The ballots are functions: $\setObjects \to \mathbb{R}$. Every object $\object$ has also a weight denoted $\weight(\object)$ (hence, the weight is a function $\weight: \setObjects \to \mathbb{R}$). Value $W$ is a weight constraint that restricts valid solutions, so that $\Si=\{e \in \mathcal{P}(\setObjects): \sum_{o\in e} \weight(o) \leq W \}$.  The utility depends on the values of the  set in the sincere ballot: $\utility_\voter(e)=\sum_{\object \in e} \declaration^*_\voter(\object)$.

Now we define the type of the Simple CM according to the type of ballot:
   \begin{itemize}
       \item a simple CM is an \motnouveau{approval voting} if   the ballot $\declaration_{v}$ is a a function $\declaration_\voter: \setObjects\to \{0,1\}$. This ballot can also be defined as a vector of $\{0,1\}^{|\setObjects|}$;

     \item a simple CM is a \motnouveau{ranking voting} if the ballot $\declaration_{v}$  of voter $\voter$ is a rank of the objects, i.e.\ a bijection $\declaration_\voter: \setObjects\to \{1,|\setObjects|\}$;
     
       \item a simple CM is the  \motnouveau {value function voting} if the ballot is a valuation of the objects of $\setObjects$, i.e.\ a function $\declaration_\voter:\setObjects\to \mathbb{R}$.  
   \end{itemize}

 \textbf{Example of participatory budgeting problem:} Given a set of projects $\setObjects$ (the alternatives), a set of voters $\setVoters$ needs to select a set of projects they have identified to be interested in. Each project $\object$ has a cost $\weight(o)$, and there is a fixed total budget of $W$. The question $Q$ arises of which projects should be funded. Each voter $\voter$ votes for a subset $d_{v} \subseteq \setObjects$, such that it satisfies the budget constraint $W$ (i.e., 
$\sum_{\object \in d_{v} } \weight(\object) \leq W$).  Since the ballot is a subset $\setObjects$, it is also possible to describe participatory budgeting with a simple CM, which is \SCMU. The participatory budgeting problem has a utilitarian social welfare function because the objective is to find the set of objects that maximize the number of votes.


Finally, Knapsack voting is an approval voting \SCMU\  when all objects have the same weight and $\SCF_{utilitarian}$ returns the optimal solution for the utilitarian social welfare function.  

\section{Properties on different votes}
\label{sec:impossibility}

Before understanding the strategyproofness of CM, we focus on the optimization problem of the social welfare function. The considered optimization problems are formally related to the Knapsack problems. We focus on different types of ballots, different social welfare-maximizing functions (utilitarian or fair), and the weight of objects (unitary weight, i.e. all the objects have the same weight $1$, or without no restriction of weight). For simplicity, we gather all our results in Table~\ref{tab:complexityWeightOne} (all the proofs are in the appendix). 
Unfortunately, most of the variants of Knapsack problems are NP-complete (see \cite{Kellerer}). To our knowledge, no work in the literature studies the knapsack problem using the fair optimization function.

\begin{table}[h]

\centering

\begin{tabular}{ |c||c|c|c|  } 
 \hline
     Optimization     & \multicolumn{3}{c|}{Ballot}    \\\cline{2-4}
     objective &   approval voting & ranking voting &  value function    \\[0.2cm]\hline  \hline   
     utilitarian & \multicolumn{2}{c|}{P}  &  P   for the unitary case (Proposition \ref{proposition:p1} )\\
                    &  \multicolumn{2}{c|}{(Proposition \ref{proposition:p13} and Proposition \ref{proposition:p17})} & NP-complete  (Proposition \ref{proposition:p19})\\[0.2cm]\hline \hline 
    fair      &      \multicolumn{3}{c|}{ NP-complete even for the unitary case.} \\[0.2cm]\cline{2-4}   
              &       Proposition \ref{proposition:p7}    &  Proposition \ref{proposition:p9}  &  Proposition \ref{proposition:p11} \\[0.2cm]
 \hline
\end{tabular}

\caption{Complexity  to compute the winning set. }
\label{tab:complexityWeightOne}
\end{table}

\begin{table}[h]

\centering

\begin{tabular}{ |c||c|c|c|  } 
 \hline
     Optimization     & \multicolumn{3}{c|}{Ballot}    \\\cline{2-4}
     objective &   approval voting & ranking voting &  value function    \\[0.2cm]\hline \hline   
 &  strategyproof   for the unitary case & \multicolumn{2}{c|}{ Not-strategyproof even for the } \\
                  &     (Goel et al.~\cite{Goel19})      &    \multicolumn{2}{c|}{   unitary case}   \\[0.2cm]\cline{2-4} 
utilitarian                  &  not-strategyproof   for the  non-unitary case &  & \\
                  &     Proposition \ref{proposition:SP:AV:utilitarian:general}    &  Proposition \ref{proposition:SP:RV:utilitarian:general}  &  Proposition \ref{proposition:SP:VF:utilitarian:unitary} \\[0.2cm]\hline \hline 
     fair      &      \multicolumn{3}{c|}{ not-strategyproof even  for the unitary case} \\[0.2cm]\cline{2-4}   
                &     Proposition \ref{proposition:SP:AV:fair:unitary}    &  Proposition \ref{proposition:SP:RV:fair:general}  &  Proposition \ref{proposition:SP:VF:fair:unitary} \\[0.2cm]
 \hline
\end{tabular}

\caption{Strategyproofness on the CM when the social choice function returns the optimal solution of a social welfare-maximizing 
function.  }
\label{tab:Strategyproofness}
\end{table}


     


Now, we  study the strategyproofness and the complexity to compute the result on some simple CMs. We study scores and types of algorithms (utilitarian or fair). 

We study the strategyproofness property on simple CM when its social choice function returns the optimal solution of the social welfare-maximizing function. 
 The Gibbard-Satterthwaite theorem\footnote{Theorem~\ref{theo:2} is a rephrasing of this theorem using our notation.} gives some impossibility results about strategyproofness.

\begin{theorem}[\cite{Gib73}]\label{theo:2} 
  Whenever the utility $\utility_\voter$ is represented by a ranking of the objects, one of the following propositions is true:
\begin{itemize}
    \item The ballot only considers two possible outcomes (ex: a yes/no question);
    \item The social choice function  is dictatorial; a voter can choose the outcome;
    \item The CM is not strategyproof.
\end{itemize}
\end{theorem}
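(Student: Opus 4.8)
The plan is to obtain the statement as a direct translation of the classical Gibbard--Satterthwaite theorem~\cite{Gib73,sat75} into the CM vocabulary. First I would make the dictionary explicit: when each $\utility_\voter$ is represented by a ranking of the objects, a sincere ballot $\declaration^*_\voter$ amounts to a (strict) preference order, the ballots in a profile $\Declaration$ range over the unrestricted domain of such orders, the set of alternatives is the set $\Si$ of attainable collective decisions, and the social choice function $\SCF$ is \emph{resolute} since it always returns a single winning set $\SCF(\Declaration) \in \Si$. Thus a CM of this form is exactly a resolute social choice function on $\Si$ over the full domain of linear profiles, and strategyproofness of the CM (Definition of strategyproofness) is precisely strategyproofness of $\SCF$.

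Next I would split into cases. If the CM is not strategyproof, we are in the third bullet and there is nothing to do. Otherwise $\SCF$ is a resolute, strategyproof social choice function; consider the range of $\SCF$ as the profile varies. If this range has at most two elements, then only two outcomes are ever realized, which is the first bullet (``a yes/no question''). If the range has at least three elements, then $\SCF$, restricted to its range, is a resolute, strategyproof, non-imposed (onto) social choice function on $\ge 3$ alternatives, so Gibbard--Satterthwaite applies and yields a voter $\voter$ whose top choice is always selected, i.e.\ $\SCF$ is dictatorial, the second bullet. Since the three cases are exhaustive, this proves the trichotomy.

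The step that needs the most care is the translation of hypotheses rather than any calculation: one must check that ``$\utility_\voter$ represented by a ranking of the objects'' really induces a \emph{strict} preference over $\Si$ (handling, or excluding by convention, the possibility that two distinct attainable decisions are tied, so that the unrestricted-strict-domain hypothesis of G--S is met), and that passing to the range of $\SCF$ genuinely preserves strategyproofness, so that invoking the ``range $\ge 3$'' form of the theorem is legitimate. An alternative, self-contained route that avoids quoting G--S would be to show that a resolute strategyproof $\SCF$ with range of size $\ge 3$ is monotone, hence induces via the usual option-set construction a social welfare function satisfying independence of irrelevant alternatives and the Pareto property, and then apply Arrow's theorem to extract the dictator; but since the present statement is explicitly advertised as a rephrasing of Gibbard--Satterthwaite, the reduction above is the most economical proof.
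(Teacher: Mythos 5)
Your proposal is correct and matches the paper's approach: the paper offers no proof of Theorem~\ref{theo:2} at all, presenting it only as a rephrasing of the Gibbard--Satterthwaite theorem with a citation, and your argument is exactly the translation-plus-case-split that the paper leaves implicit. Your added care about resoluteness, the unrestricted strict domain, and restriction to the range is a reasonable fleshing-out of that implicit reduction rather than a different route.
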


This theorem is  powerful because it can be used for many existing voting mechanisms. Goel \textit{et al.} shows a surprising result:

\begin{theorem}[\cite{Goel19}] \label{theo:1}
Unitary approval voting is strategyproof.
\end{theorem}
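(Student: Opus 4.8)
The plan is to fix a voter $\voter$ together with the ballots $\Declaration_{-\voter}$ of everyone else and show that the sincere ballot is a best response, i.e. $\utility_\voter(\SCF(\declaration^*_\voter;\Declaration_{-\voter})) \ge \utility_\voter(\SCF(\declaration_\voter;\Declaration_{-\voter}))$ for every ballot $\declaration_\voter$. The first step is to make the social choice function explicit in the unitary case. Every object has weight $1$, so $\Si = \{e \in \mathcal{P}(\setObjects) : |e| \le W\}$, and since $\SCF_{utilitarian}$ maximises $\sum_{\voter \in \setVoters}\sum_{\object \in s}\declaration_\voter(\object) = \sum_{\object \in s}\mathrm{sc}(\object)$ over $s \in \Si$ — where $\mathrm{sc}(\object)$ is the number of voters approving $\object$ under the profile at hand — an optimal $s$ consists of the $k := \min(W,|\setObjects|)$ objects of largest score. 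Fixing once and for all a tie-breaking order on $\setObjects$ makes this resolute: order the objects by the key $(\mathrm{sc}(\object),\ \text{tie-breaking position})$ and return the top $k$ (so the winning set always has exactly $k$ objects, ties — including between zero-score objects — broken by this order). Finally recall that in a Simple CM $\utility_\voter(S) = \sum_{\object \in S}\declaration^*_\voter(\object)$, the number of objects of $S$ that $\voter$ sincerely approves; so $\voter$ wants an approval set of size at most $W$ that puts as many of her sincerely approved objects as possible, and as few of the others, into the winning set.

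The engine of the proof is a single-object swap lemma: if the score of exactly one object $\object$ changes by $1$ while all other scores stay fixed, then the new winning set is either equal to the old one, or obtained from it by a single swap in which one of the two exchanged objects is $\object$ itself. Indeed, changing only $\object$'s key leaves the relative order of every pair of objects other than $\object$ untouched and moves $\object$ monotonically in the order, so the only possible change to the top-$k$ prefix is $\object$ crossing the cut: if $\object$'s score went up and $\object$ enters, it displaces exactly the object formerly ranked $k$-th; if it went down and $\object$ leaves, it is replaced by the object formerly ranked $(k{+}1)$-st. I would prove this carefully, as it is the one place where the tie-breaking convention is actually used; the argument is elementary positional bookkeeping but must be done precisely.

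The remaining step connects the two ballots by a path of unit changes. Writing $A$ and $A^*$ for the approval sets of $\declaration_\voter$ and $\declaration^*_\voter$, form a sequence $\declaration_\voter = c_0, c_1, \dots, c_t = \declaration^*_\voter$ in which $c_{i+1}$ is obtained from $c_i$ by first removing, one at a time, the objects of $A \setminus A^*$, and then adding, one at a time, the objects of $A^* \setminus A$. Every $c_i$ is feasible: during the removal phase its approval set is a subset of $A$, and during the addition phase a subset of $A^*$, so it always has size at most $W$. Each step changes exactly one object's score by $1$, so the swap lemma applies. If step $i$ adds an object $\object \in A^*$: either the winning set is unchanged, or $\object$ enters and some $\object'$ leaves, and $\utility_\voter$ changes by $\declaration^*_\voter(\object) - \declaration^*_\voter(\object') = 1 - \declaration^*_\voter(\object') \ge 0$. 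If step $i$ removes an object $\object \notin A^*$: either the winning set is unchanged, or $\object$ leaves and some $\object'$ enters, and $\utility_\voter$ changes by $\declaration^*_\voter(\object') - \declaration^*_\voter(\object) = \declaration^*_\voter(\object') \ge 0$. Hence $\utility_\voter(\SCF(c_{i+1};\Declaration_{-\voter})) \ge \utility_\voter(\SCF(c_i;\Declaration_{-\voter}))$ at every step, and chaining these inequalities gives exactly $\utility_\voter(\SCF(\declaration^*_\voter;\Declaration_{-\voter})) \ge \utility_\voter(\SCF(\declaration_\voter;\Declaration_{-\voter}))$, which is strategyproofness.

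The main obstacle is the swap lemma, and with it the need to commit to a deterministic top-$k$ winning-set rule so that flipping a single approval on or off moves at most one object across the cut; once that is settled, everything reduces to the two-line utility computations above. It is worth noting that this is precisely where the value-function and non-unitary approval variants break down, since there a single ballot change can shift an object's score by an arbitrary amount and rearrange the cut in bulk — consistent with the negative entries of Table~\ref{tab:Strategyproofness}.
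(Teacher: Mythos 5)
Your proof is correct. Note that the paper does not actually prove this statement itself --- it is imported from Goel et al.\ as Theorem~\ref{theo:1} --- but it does prove the generalization Theorem~\ref{theo:4} (CCP implies strategyproofness for unitary approval voting), and your argument is exactly that proof instantiated for the concrete top-$W$-by-score rule. Your ``single-object swap lemma'' is precisely a verification that the knapsack-voting social choice function (with a fixed tie-breaking order) satisfies CCP: a unit change in one object's score either leaves the winning set unchanged or swaps that object across the cut against exactly one other object. Your chaining along the path $c_0,\dots,c_t$ from $\declaration_\voter$ to $\declaration^*_\voter$ is the paper's induction on $|\declaration_\voter\setminus\declaration^*_\voter|$, with the same two one-line utility computations ($1-\declaration^*_\voter(\object')\ge 0$ and $\declaration^*_\voter(\object')-0\ge 0$) at each step. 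What your write-up adds over the paper is the explicit positional bookkeeping showing the top-$k$ rule really does satisfy the swap property --- a step the paper leaves implicit (it only remarks elsewhere, e.g.\ in the proof of Proposition~\ref{prop:total}, that diagonal score matrices respect CCP) --- and you are right that committing to a deterministic tie-breaking order is where this must be pinned down. The only cosmetic caveat is that your chain runs from the insincere ballot toward the sincere one with utility non-decreasing, which is the reverse orientation of the paper's induction but yields the identical inequality.
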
 

We want to generalize Goel et al.'s result to other CM. Hence, we study the strategyproofness property on simple CM when its social choice function returns the optimal solution of the social welfare-maximizing function. Unfortunately, we obtain only results of impossibilities (by giving some counterexample or by applying the Gibbard-Satterthwaite theorem). For simplicity, we gather all our results in Table~\ref{tab:Strategyproofness}.


The Gibbard-Satterthwaite theorem cannot be easily extended to more general utilities. We use the common Choice Mechanism model to study the limitations of strategyproofness. Let us now describe the tools that will be used for an impossibility result on CM.

\begin{definition}
Let  CM=\CM\ be a common choice mechanism. A \emph{feasible} solution $s \in \Si$ is called \emph{first-class} set if for every 
$s^* \in \Si$, there is a permutation $\sigma_{s^*}$ such that $\forall ~ \voter \in \setVoters , \utility_{\sigma_{s^*}(v)}(s)\ge \utility_v(s^*)$.

A common choice mechanism CM=\CM\ is called
\motnouveau{first-class} if it always chooses a first-class set whenever there is one.
\end{definition}



The notion of a first-class CM can be used to show a new impossibility result:

\begin{theorem}\label{theo:3} 
A first-class CM ($E,\Si,\setVoters,\{\utility_\voter\}_{\voter\; \in \setVoters} ,Q,\mathcal{A}lgo$) is not strategyproof if both those conditions hold:
\begin{itemize}
    \item $Q$ forces the type of the ballots so that the ballot of voter~$\voter$ has to contain the feasible solution $m_\voter=argmax \{ \utility_\voter(s):s \in \Si \} $ and the feasible solution $m^*_\voter=argmax \{ \utility_\voter(s):s \in \Si \backslash \{m_\voter\}\}$. Plus, given the two preferred feasible solutions of a voter~$\voter$, $Q$ also forces the type of the ballots so that the ballots contain their images given by the utility function $u_\voter$, which must be distinct; and
    \item the utilities can take at least $3$ distinct values (e.g., $0,1,2$, or a ranking on at least $3$ objects, and only one is selected) independently for each feasible solution.
\end{itemize}
\end{theorem}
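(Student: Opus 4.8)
The plan is to build, for an arbitrary first-class CM satisfying the two hypotheses, an explicit instance and an explicit voter who benefits from misreporting. First I would fix a voter $\voter$ and exploit the second hypothesis to choose utilities that take (at least) three distinct values on feasible solutions; concretely I would pick two feasible solutions $s_1,s_2\in\Si$ for $\voter$ with $\utility_\voter(s_1)>\utility_\voter(s_2)$ and arrange matters so that $m_\voter = s_1$ is $\voter$'s favourite and $m^*_\voter = s_2$ is the runner-up. The key idea is to then choose the other voters' \emph{true} utilities so that, when everyone is sincere, $s_2$ is the first-class set the mechanism is forced to output (because it dominates, after a suitable permutation of voter identities, every feasible solution including $s_1$), whereas if $\voter$ alone misreports in the way $Q$ allows — reporting a ballot whose encoded favourite/runner-up pair and whose encoded utility images point at $s_1$ rather than $s_2$ — the resulting profile forces the mechanism to output $s_1$, which $\voter$ strictly prefers.

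The main steps, in order, are: (1)~unpack the first-hypothesis constraint on $Q$ to see exactly what a voter's ballot must encode — the two top feasible solutions $m_\voter,m^*_\voter$ and their two \emph{distinct} utility images — so that I know precisely which alternative ballots are available to $\voter$ as a deviation; (2)~construct the sincere profile: give the $n-1$ other voters utilities (taking values in, say, $\{0,1,2\}$ per feasible solution, which the second hypothesis permits) chosen so that $s_2$ is first-class, i.e.\ for every $s^*\in\Si$ there is a permutation $\sigma_{s^*}$ of $\setVoters$ with $\utility_{\sigma_{s^*}(v)}(s_2)\ge \utility_v(s^*)$ for all $v$; here the crucial point is to also verify that $s_1$ is \emph{not} similarly dominating, so the mechanism is genuinely pinned to $s_2$ (or at least to something no better for $\voter$ than $s_2$); (3)~construct the deviating ballot $\declaration_\voter$ for $\voter$: since $\voter$ may declare any valid ballot, have her swap the roles, declaring $s_1$ as runner-up-looking and $s_2$ as something worse, so that in the new profile $s_2$ loses its first-class status while $s_1$ becomes first-class, forcing $\SCF$ to return $s_1$; (4)~conclude $\utility_\voter(\SCF(\declaration_\voter;\Declaration_{-\voter})) = \utility_\voter(s_1) > \utility_\voter(s_2) \ge \utility_\voter(\SCF(\declaration^*_\voter;\Declaration_{-\voter}))$, contradicting strategyproofness.

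The hard part will be step~(2) together with its interaction with step~(3): I need a single combinatorial construction of the other voters' utilities that simultaneously (a)~makes $s_2$ first-class under the sincere profile, (b)~prevents $s_1$ (and any alternative strictly better for $\voter$ than $s_2$) from being first-class under the sincere profile, yet (c)~is robust enough that $\voter$'s single allowed deviation flips the first-class alternative from $s_2$ to $s_1$. Making (a) and (b) coexist is delicate precisely because first-classness is defined up to a permutation of voter identities, so I must control not just the multiset of utility values each voter assigns but how those multisets compare across the whole population; I expect to handle this by making the other voters' utility profiles ``flat'' (e.g.\ assigning the maximal value $2$ to $s_1$ and $s_2$ and a strictly smaller value to everything else, in a pattern that is symmetric enough for the permutation argument to go through for $s_2$ but breaks for $s_1$ because $\voter$'s sincere ballot ranks $s_1$ strictly above $s_2$). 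Once that instance is in hand, steps~(1), (3) and~(4) are essentially bookkeeping against the definitions of first-class CM and strategyproofness.
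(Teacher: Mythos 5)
Your overall strategy --- build a sincere profile whose first-class set is the deviator's second choice $s_2$, then have the deviator lower her declared value of $s_2$ so that the first-class set flips to her favourite $s_1$ --- is the same as the paper's. The gap is in the specific construction you commit to in steps (2)--(3), namely making $s_2$ first-class and $s_1$ \emph{not} first-class under the sincere profile. First, the instance you sketch (the other voters assign the maximal value to both $s_1$ and $s_2$) is not a legal profile: the first hypothesis requires the images of $m_\voter$ and $m^*_\voter$ to be \emph{distinct} for every voter. Second, and more seriously, if $s_1$ strictly fails to be first-class sincerely, that is because the multiset of utilities of $s_1$ fails to dominate that of $s_2$ in some coordinate, and that coordinate need not be one the deviator controls. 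Concretely, take two voters with $\utility_\voter(s_1)=2$, $\utility_\voter(s_2)=1$, $\utility_w(s_1)=0$, $\utility_w(s_2)=2$ (all other solutions valued below): then $s_2$ (multiset $\{2,1\}$) is uniquely first-class and $s_1$ (multiset $\{2,0\}$) is not, exactly as you want; but after $\voter$ lowers her declared value of $s_2$ to any $\lambda$ with $0<\lambda<1$, the declared multiset of $s_2$ becomes $\{2,\lambda\}$, which still dominates $\{2,0\}$, so $s_2$ remains the unique first-class set and the deviation gains nothing. To reverse the domination, $\voter$ would have to declare a value at or below $\utility_w(s_1)$, which the hypotheses do not guarantee is available to her, and even at equality both sets become first-class, so the mechanism may still return $s_2$. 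So the difficulty you flag in your last paragraph is not mere delicacy: requirements (a), (b) and (c) cannot in general be met simultaneously by a unilateral deviation.

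The paper sidesteps this by making the construction perfectly \emph{symmetric}: two voters $\voter_1,\voter_2$ with $m^*_{\voter_2}=m_{\voter_1}$, $m^*_{\voter_1}=m_{\voter_2}$, equal top values and equal, strictly smaller, second values. Then $m_{\voter_1}$ and $m_{\voter_2}$ have identical utility multisets, both are first-class (and nothing else is), and ``the mechanism returns the set that is second-best for one of the two voters'' holds without loss of generality by symmetry --- no need to pin the mechanism to a unique first-class set beforehand. That voter then lowers her declared value of the chosen set to a third distinct value (which the second hypothesis guarantees exists, independently of the other declared values); since the two multisets were equal, this immediately makes her own favourite the \emph{unique} first-class set, which the mechanism must now return. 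Replacing your asymmetric step (2) by this symmetric instance repairs the argument; your steps (1) and (4) then go through as the bookkeeping you describe.
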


\begin{proof}
Lets take two voters $\voter_1$ and $\voter_2$ such that $m^*_{\voter_2}=m_{\voter_1}$ and $m^*_{\voter_1}=m_{\voter_2}$ and $\utility_{\voter_1}(m_{\voter_1})=\utility_{\voter_2}(m_{\voter_2})$ and $ \utility_{\voter_2}(m_{\voter_1})=\utility_{\voter_1}(m_{\voter_2})\neq \utility_{\voter_1}(m_{\voter_1})$.

The only first-class sets are $m_{v_1}$ and $m_{v_2}$ (because one of the voters of a first-class set must have a preference greater or equal than $\utility_{\voter_1}(m_{\voter_1})$ for this set, which is only the case for $m_{\voter_1}$ and $m_{\voter_2}$). Hence, if the ballots are sincere, the CM chooses either $m_{\voter_1}$ or $m_{\voter_2}$ (lets say w.l.o.g.\ that it is $m_{\voter_2}$). If $\voter_1$ lies on her ballot by telling that $\utility_{\voter_1}(m_{\voter_2})$ is lower than the reality (it does not change the other values of the utility because of their independence), then $m_{\voter_2}$ is not a first-class set anymore. Thus, $m_{v_1}$ is 
chosen (being the only first-class set), and the utility of $\voter_1$ increases. Hence, this CM is not strategyproof.
\end{proof}

Theorem~\ref{theo:3} applies in different cases than Gibbard-Satterthwaite's theorem because the next proposition is an application of Theorem~\ref{theo:3} (it is not the  case for Gibbard-Satterthwaite's theorem).

\begin{proposition} \label{proposition:p15}
There exists an approval voting \SCMF\  with $\utility_\voter(e)=\sum_{\object\in e\cap d^*_\voter}\weight(\object)$ such that the social choice function that returns the optimal solution is not strategyproof. 
\end{proposition}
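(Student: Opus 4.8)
The plan is to instantiate Theorem~\ref{theo:3} with a concrete fair approval voting Simple CM, so the bulk of the work is checking that both hypotheses of that theorem are met by a suitably chosen instance, and that a utilitarian-to-fair switch does not break anything. First I would fix a small instance: take $\setObjects$ to contain (at least) two objects, say $\setObjects=\{o_1,o_2\}$ with unit weights and budget $W=1$, so that $\Si=\{\{o_1\},\{o_2\}\}$ (and possibly $\emptyset$, which is dominated). With only two feasible non-empty solutions, for each voter $\voter$ the best set $m_\voter$ and the second-best $m^*_\voter$ are exactly the two singletons, hence automatically contained in any ballot over $\setObjects$: in an approval voting Simple CM the ballot is the indicator vector in $\{0,1\}^{\setObjects}$, which determines $\utility_\voter(e)=\sum_{\object\in e\cap d^*_\voter}\weight(\object)$ on every $e\in\Si$. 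So the ``$Q$ forces the ballot to contain $m_\voter$, $m^*_\voter$ and their (distinct) utility images'' clause of Theorem~\ref{theo:3} is satisfied, provided we can arrange the two utility values of a voter to be distinct — which we do by letting a voter approve exactly one of the two objects, giving utilities $1$ and $0$.

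Next I would address the second hypothesis of Theorem~\ref{theo:3}, that the utilities take at least three distinct values independently across feasible solutions. With two objects and $0/1$ approvals the utility of a singleton is only $0$ or $1$, so I would enlarge the instance: add a third object $o_3$, still unit weight, but increase the budget or keep $W=1$ and instead allow the value of an object inside the utility to be scaled — concretely, keep the ballot $\{0,1\}$-valued (so it is genuinely an approval voting) but use non-unitary weights $\weight(o_1)=2$, $\weight(o_2)=1$, $\weight(o_3)=1$ with $W=2$, so a voter approving $\{o_1\}$ or $\{o_2,o_3\}$ gets utility $2$ or $0$, while mixed approvals yield intermediate totals; this realizes at least the three values $0,1,2$ on the feasible sets, independently, as required. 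The point is only to have enough freedom in the utilities; any concrete choice of weights and a handful of voters achieving three independent values on the relevant feasible sets works.

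Then the crux: I must verify that $\SCF_{fair}$ applied to this instance yields a \emph{first-class} CM, because Theorem~\ref{theo:3} only bites on first-class mechanisms. Here I would argue directly that a fair (maximin) optimizer is first-class: if $s$ is a first-class set, then for every $s^*\in\Si$ there is a permutation $\sigma_{s^*}$ with $\utility_{\sigma_{s^*}(\voter)}(s)\ge\utility_\voter(s^*)$ for all $\voter$, hence $\min_\voter \utility_\voter(s)=\min_\voter \utility_{\sigma_{s^*}(\voter)}(s)\ge\min_\voter \utility_\voter(s^*)$; taking the $s^*$ that maximizes the maximin objective shows $s$ is itself maximin-optimal, so $\SCF_{fair}$ returns a first-class set whenever one exists. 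With first-classness established, I construct the two ``symmetric'' voters $\voter_1,\voter_2$ exactly as in the proof of Theorem~\ref{theo:3} — $m^*_{\voter_2}=m_{\voter_1}$, $m^*_{\voter_1}=m_{\voter_2}$, equal top utilities, unequal and equal-to-each-other second utilities — which with $0/1$ approvals on two designated objects is immediate (voter $1$ approves $o_1$, voter $2$ approves $o_2$), and add enough further voters (approving both or neither of these objects) so that the global picture is unchanged while the third-value-independence hypothesis still holds via the extra objects. Invoking Theorem~\ref{theo:3} then gives that this \SCMF{} with the stated utility is not strategyproof.

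The main obstacle I anticipate is reconciling the two hypotheses simultaneously: hypothesis~2 wants the utilities to range over $\ge 3$ values \emph{independently} per feasible solution, but an approval ballot is only $0/1$-valued, so independence of the per-set utilities must come from the weights/structure of $\Si$ rather than from the ballot entries — I will need to choose $\setObjects$, the weights, $W$, and the voter set carefully so that the three-value-independence genuinely holds on the feasible sets while the ballot type remains honest approval voting and the symmetric pair $\voter_1,\voter_2$ from Theorem~\ref{theo:3} still exists. Everything else (first-classness of $\SCF_{fair}$, containment of $m_\voter,m^*_\voter$ in the ballot) is routine once the instance is pinned down.
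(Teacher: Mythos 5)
Your proposal follows the same route as the paper's own proof: the paper also establishes Proposition~\ref{proposition:p15} by observing that the fair approval voting with non-unitary weights is a first-class CM whose utilities take at least three values, and that the argument of Theorem~\ref{theo:3} ``adapts''. Two points, however, need repair before your outline becomes a proof. First, your justification that $\SCF_{fair}$ yields a first-class CM is a non sequitur: what you actually prove is that every first-class set is maximin-optimal, which places the first-class sets \emph{among} the candidates the maximin optimizer may return, not that the optimizer returns one of them whenever one exists (ties could be broken toward a non-first-class maximin optimum). Since the statement is existential, this is easily patched by fixing the tie-breaking, or by arranging the instance so that the maximin optima are exactly the first-class sets, but as written the implication runs the wrong way. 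Second, you correctly single out the ``independence'' clause of Theorem~\ref{theo:3} as the crux and then leave it unresolved. The resolution --- which is what the paper's phrase ``if the object is selected the right way'' means --- is that voter $\voter_1$ lowers her declared utility of $m_{\voter_2}$ by deleting from her ballot an object $o\in m_{\voter_2}\setminus m_{\voter_1}$: this leaves the declared utility of $m_{\voter_1}$ untouched and can only \emph{decrease} the declared utilities of the other feasible sets, so after the lie $m_{\voter_1}$ is the unique first-class (and unique maximin-optimal) set. Literal independence is not required; a one-sided, monotone perturbation that spares $m_{\voter_1}$ suffices. With these two repairs, and an explicit choice of weights and voters (which the paper itself also omits), your argument coincides with the paper's.
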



Now, we characterize a sufficient property on CMs to be strategyproof. From now on, we only consider approval voting (the ballots can also be defined as a vector of $\{0,1\}^{|\setObjects|}$).


We find that another property seems essential to characterize strategyproofness.  From now on, any voting mechanism that we study will be used on at least $3$ objects. We focus on a compelling property:

\begin{definition}[Constrained Change Property (CCP)]
Let us consider the approval voting \SCM.
 The social choice function $\SCF$ respects the Constrained Change Property  (CCP)  if
 a voter $\voter$ switches from an object $\alpha$ to an object $\beta$ in a ballot, the social choice function can either:
\begin{itemize}
    \item Eject $\alpha$ of the output solution and have it replaced by another object;
    \item Get $\beta$ to be chosen and eject a previously chosen object (and only one);
    \item Not change the output solution.
\end{itemize}
\end{definition}

For the first condition, it is enough to consider an approval voting with a single voter. The social choice functions should select only the objects in her ballot: if the voter does not vote for the object $\alpha$, it must no longer be in the output solution. For the last two conditions, modifying a ballot by replacing an object with another does not have to alter the whole solution because the projects have the same cost. Hence there should be a relative symmetry between the objects. 

This property on the social choice function  implies the  strategyproofness (corresponding to the following theorem):

\begin{theorem}\label{theo:4} 
Let us consider an approval voting CM \SCM, in the unitary case.
If the social choice function respects CCP, then the CM is strategyproof.
\end{theorem}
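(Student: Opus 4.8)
The plan is to show that a sincere voter $\voter$ can never strictly improve her utility by submitting an insincere ballot $\declaration_\voter$ instead of $\declaration^*_\voter$, using CCP to control how the output of $\SCF$ moves as we edit $\voter$'s ballot one object at a time. First I would reduce an arbitrary deviation to a sequence of elementary ``switch'' moves: any approval ballot (a $\{0,1\}$-vector) can be transformed into any other by a finite sequence of single-coordinate flips, and in fact — since CCP is phrased in terms of \emph{switching $\alpha$ for $\beta$} — I would group these flips into swap moves (remove one approved object $\alpha$, add one non-approved object $\beta$), possibly preceded/followed by pure additions or pure deletions. So it suffices to prove that no single such elementary move applied to $\declaration^*_\voter$ can raise $\utility_\voter(\SCF(\cdot\,;\Declaration_{-\voter}))$, and then argue that a composition of non-improving moves is non-improving — the latter needs the monotonicity-type care described below.

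The core step is the case analysis supplied by CCP for a switch from $\alpha$ to $\beta$. Let $\Solution = \SCF(\declaration^*_\voter;\Declaration_{-\voter})$ and $\Solution' = \SCF(\declaration_\voter;\Declaration_{-\voter})$ be the outputs before and after the switch. Recall that in the unitary case every feasible solution has the same size $k = \lfloor W/\weight \rfloor$ (all weights equal), and $\utility_\voter(e) = \sum_{\object \in e \cap d^*_\voter}\declaration^*_\voter(\object)$, i.e. $\voter$'s utility counts only objects that are \emph{both} in the chosen set and in her \emph{sincere} approval set. CCP says $\Solution'$ is obtained from $\Solution$ in one of three ways: (i) $\alpha$ is ejected and replaced by some object; (ii) $\beta$ enters and exactly one previously chosen object leaves; or (iii) $\Solution' = \Solution$. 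In case (iii) the utility is unchanged. In case (ii), the object $\beta$ that enters is \emph{not} in $d^*_\voter$ (the voter switched \emph{away from} sincerity, so the only objects she can newly approve are ones she does not sincerely approve — here is where I must be careful about which of the two ballots is the sincere one; if $\alpha \in d^*_\voter$ then $\beta \notin d^*_\voter$), so $\beta$ contributes nothing to $\utility_\voter$, while the ejected object may or may not have been in $d^*_\voter$; either way $\voter$'s utility weakly decreases. In case (i), $\alpha$ leaves; if $\alpha \in d^*_\voter$ then $\voter$ loses $\declaration^*_\voter(\alpha) \ge 0$ and the replacement object, whatever it is, cannot help unless it lies in $d^*_\voter$ — and one must check it cannot, or that the net effect is still $\le 0$. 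The upshot is that every elementary switch away from $\declaration^*_\voter$ is utility-non-increasing for $\voter$; pure deletions of objects in $d^*_\voter$ obviously do not help (by the single-voter remark after the CCP definition, un-approving $\alpha$ removes it from the output), and pure additions of objects not in $d^*_\voter$ add nothing to $\utility_\voter$.

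The main obstacle will be the composition/telescoping argument: CCP is a statement about a \emph{single} switch starting from an arbitrary ballot, but it does not immediately tell us that the \emph{sincere-relative} utility decreases along a whole path from $\declaration_\voter$ back to $\declaration^*_\voter$, because after the first switch the ballot is no longer sincere and the ``utility'' we track ($\sum_{\object \in e \cap d^*_\voter}\declaration^*_\voter(\object)$) is pinned to $d^*_\voter$ rather than to the current ballot. I would handle this by running the argument in the reverse direction — start from the deviating ballot $\declaration_\voter$ and apply switches that move it \emph{towards} $\declaration^*_\voter$, showing each such step \emph{weakly increases} $\utility_\voter(\SCF(\cdot))$ — so that the terminal sincere ballot has utility at least that of $\declaration_\voter$, which is exactly $\utility_\voter(\SCF(\declaration^*_\voter;\Declaration_{-\voter})) \ge \utility_\voter(\SCF(\declaration_\voter;\Declaration_{-\voter}))$. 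Making ``move towards $\declaration^*_\voter$'' precise (each switch should put one more coordinate in agreement with $\declaration^*_\voter$, i.e. add an object of $d^*_\voter$ while removing an object not in $d^*_\voter$, or a pure correcting addition/deletion) and verifying via the CCP case analysis that such a corrective switch can only add a $d^*_\voter$-object to the output or remove a non-$d^*_\voter$-object — never the reverse — is the delicate part, and is where the unitary (equal-weight) hypothesis and the ``relative symmetry between objects'' remark are really used.
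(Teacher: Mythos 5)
Your proposal is correct and takes essentially the same route as the paper: the paper also proceeds by induction on the number of disagreements between $\declaration_\voter$ and $\declaration^*_\voter$, at each step performing exactly the corrective switch you describe (remove an $\alpha\in\declaration_\voter\setminus\declaration^*_\voter$, add a $\beta\in\declaration^*_\voter\setminus\declaration_\voter$) and using the three CCP cases to show the utility weakly increases toward the sincere ballot. Your direct telescoping in the ``reverse'' direction is, if anything, a slightly cleaner packaging of the paper's induction, which additionally assumes $\declaration_\voter$ is an optimal deviation in order to conclude equality.
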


\begin{proof}
%
Let $\voter$ be a voter and $\declaration_\voter^*$ her sincere ballot. Suppose that $\voter$ gives a ballot $\declaration_\voter$ that is not sincere, which is optimal to maximize  her utility.  We prove that voting $\declaration_\voter^*$ or $\declaration_\voter$ would give the same utility for voter $\voter$: \emph{i.~e},  $\utility_\voter(\SCF(\declaration_v;\Declaration_{-v}))=\utility_\voter(\SCF(\declaration_\voter^{*};\Declaration_{-v}))$  for all ballot profiles $\Declaration$.

We prove the statement by induction on a parameter $|\declaration_\voter\backslash \declaration_\voter^*|$, corresponding to the number of different objects between $\declaration_\voter$ and $\declaration_\voter^*$.

$|\declaration_\voter\backslash \declaration_\voter^*|=0$ means that $\declaration_\voter=\declaration_\voter^*$. So, the strategyproof ballot is optimal, and the statement for $k=0$ is proved. Suppose the statement holds when $|\declaration_\voter\backslash \declaration_\voter^*| \leq k$
for some value of $k\geq 0$ (Assumption $R_k$).

Now, assume  $|\declaration_\voter\backslash \declaration_\voter^*|=k+1$.
Moreover, we consider that the utility of $\voter$, having voted $\declaration_\voter$ is  than the one if she voted $\declaration_\voter^*$, because $\declaration_\voter$ is optimal (if they are the same, just set $\declaration_\voter=\declaration_\voter^*$, and the proof is over).

Lets take an object $\beta$ which is in $\declaration_\voter^*$ and not in $\declaration_\voter$ and $\alpha$ which is in $\declaration_\voter$ and not in $\declaration_\voter^*$. It is possible because  $\declaration_\voter^* \neq \declaration_\voter$.

Let $\declaration_\voter'$ be  $\declaration_\voter$ by adding $\beta$ and removing $\alpha$ : $\declaration_\voter'=\declaration_\voter-\{\alpha\} \cup \{\beta\} $. We  focus on $\utility_\voter(\SCF(\declaration_\voter';\Declaration_{-v}))$. Due to CCP property, there are three possibilities:\newline
\begin{enumerate}
    \item Assume that $\alpha$ gets ejected and an object $\gamma$ replaces it. In other words, we have $\SCF(\declaration_\voter';\Declaration_{-v}) = 
  \SCF(\declaration_\voter;\Declaration_{-v})-\{\alpha\}\cup \{\gamma\}$. We have  $\utility_\voter(\SCF(\declaration_\voter,\Declaration_{-v})\cup\{\gamma\}-\{\alpha\})\ge \utility_\voter(\SCF(\declaration_\voter,\Declaration_{-v})$, because 
$\declaration_\voter^*(\alpha)=0$ and $\declaration_\voter^*(\gamma)\ge 0$.  So, $\utility_\voter(\SCF(\declaration_\voter',\Declaration_{-\voter}))\ge \utility_\voter(\SCF(\declaration_\voter,\Declaration_{-v}))$.
    \item If $\beta$ is chosen and an object $\gamma$ is ejected, then
    $\SCF(\declaration_\voter';\Declaration_{-v}) = 
  \SCF(\declaration_\voter;\Declaration_{-v})-\{\gamma\}\cup \{\beta\}$. We have
  $\utility_\voter(\SCF(\declaration'_\voter,\Declaration_{-v})\cup\{\beta\}-\{\gamma\})\ge \utility_\voter(\SCF(\declaration_\voter,\Declaration_{-v})$
  because
    $\declaration^*_\voter(\beta)=1$, and $\declaration^*_\voter(\gamma)\le 1$. Thus,  $\utility_\voter(\SCF(\declaration_\voter',\Declaration_{-v}))\ge \utility_\voter(\SCF(\declaration_\voter,\Declaration_{-v}))$.
    \item Assume that the solution is not modified.  Then, $\SCF(\declaration_\voter';\Declaration_{-v}) = 
  \SCF(\declaration_\voter;\Declaration_{-v})$ and $\utility_\voter(\SCF(\declaration_\voter',\Declaration_{-v}))\ge \utility_\voter(\SCF(\declaration_\voter,\Declaration_{-v}))$.
    
    \end{enumerate}
    
In every case, $\utility_\voter(\SCF(\declaration_\voter';\Declaration_{-v}))\ge \utility_\voter(\SCF(\declaration_\voter;\Declaration_{-v}))\ge \utility_\voter(\SCF(\declaration_\voter^*;\Declaration_{-v}))$. The difference between $\declaration_\voter'$ and the sincere ballot $\declaration_\voter^*$, $|\declaration_\voter^* \backslash \declaration_\voter'|$, is $k$. By assumption $R_k$, $\utility_\voter(\SCF(\declaration_\voter^*,\Declaration_{-v}))=\utility_\voter(\SCF(\declaration_\voter';\Declaration_{-v}))\ge \utility_\voter(\SCF(\declaration_\voter;\Declaration_{-v}))$. Hence, $\utility_\voter(\SCF(\declaration_\voter^*,\Declaration_{-v}))= \utility_\voter(\SCF(\declaration_\voter;\Declaration_{-v}))$ and $R_{k+1}$ is true. 
    
With the induction, $\utility_\voter(\SCF(p_v;\Declaration_{-v}))=\utility_\voter(\SCF(\declaration_\voter^{*};\Declaration_{-v}))$ for every optimal ballot~$\declaration_v$.
\end{proof}

In the next section, we present results regarding the CCP property. In order to study the conditions on which the converse of Theorem~\ref{theo:4} does hold, we introduce the concept of \motnouveau{score functions}.

\section{Score Voting}
\label{sec:scorevoting}

We design an extension of the simple CM \SCMU\ model for the case where objects have unitary weight, which is always strategyproof. We introduce the notion of \motnouveau{score functions}, which allows us to introduce a correlation between objects: selecting an object will impact the selection of another object. Each time an object is taken, it can favour another object via a scoring process.


\begin{definition}[Score functions]
A \motnouveau{score function} on $m$ objects is defined by a couple ($M$, $\SCF$) with $m \times m$ real matrix $M$ and an associated social choice function $\SCF$.
Given  an integer $W$, and a ballot profile $\Declaration=(\declaration_1,\dots,\declaration_n)$, with $\declaration_1,\dots,\declaration_n$ seen as a column vector of $m$ elements, the social choice function $\SCF$ taking $\Declaration$ as input returns the winning set of $W$ objects which maximizes the inner product $M\cdot \mathbf{e}$ where the $i^{th}$ element  of  vector $\mathbf{e}$ represents the number of times where $\object_i$ is in  $\declaration_1$, $\dots$, $\declaration_n$. 
If such objects are not clearly defined, we use a tie break, a strict order given by the social choice function~$\SCF$.
\end{definition}

\begin{definition}[Score Voting]
A simple CM \SCM\ in which the social choice function is a score function is called a \emph{Score Voting}.
\end{definition}

Score Votings can be used to model utilitarian election systems where the voter must choose several options among a set of possible choices. The model is similar to Knapsack Voting~\cite{Goel19}, where every ballot that contains an object $\object$ gives one point to $\object$, and the winner is the object with the most points. The idea of a score function is to generalize the votes, allowing voters to give any amount of points to any given object

\noindent\textbf{Example:} The city council would like to propose four projects to the residents, such as ``renovating a library'' ($\object_1$), ``creating a bike path'' ($\object_2$), ``funding a soccer team'' ($\object_3$),   or ``funding a basketball team'' ($\object_4$). However, it can only fund two of the four, and she wants to avoid funding two sports-related projects. Thus, she can construct a score function with $M=  \begin{pmatrix}
3 & 0  & 0 & 0 \\
0 & 3  & 0 & 0\\
0 & 0  &3  &  -2\\
0 & 0  & -2 & 3\\
\end{pmatrix}$.

The ballot  distribution is the following: 
four voters want to fund only the sports projects, two voters want to fund only the non-sports projects, and two voters   want to fund the ``creating a bike path'', and ``funding a basketball team''. The following vector is summarized  by 
 $\begin{pmatrix}
2   \\
4  \\
4     \\
6 \\
\end{pmatrix}$.  Since $  \begin{pmatrix}
3 & 0  & 0 & 0 \\
0 & 3  & 0 & 0\\
0 & 0  &3  &  -2\\
0 & 0  & -2 & 3\\
\end{pmatrix}   \begin{pmatrix}
2   \\
4  \\
4     \\
6 \\
\end{pmatrix} =in \begin{pmatrix}
6  \\
12  \\
4 = (12-8)    \\
10 =  (18 - 8)\\
\end{pmatrix}$, the winner set is $\{$``creating a bike path'', ``funding a basketball team'' $\}$ because the scores of these objects are the highest.





We will study in the following sections the strategyproofness of score voting systems. We show that the strategyproofness of such systems can be characterized by their score function. In the remaining text, we will extend the use of the word strategyproof also to apply it to score functions that guarantee score voting systems to be strategyproof. In other words, we say that a score function is strategyproof if all score votings that use this function are strategyproof.

\subsection{Neutrality Property Over the Score Voting}


In this section, we show how to use score voting to design an election system where objects receive equal treatment. We use the standard matrix notations where  $M_{i,j}$ is the coordinate $(i,j)$ of matrix $M$, and $\mathbf{U}_i$ the coordinate $i$ of vector $\mathbf{U}$. The vector such that $(\delta_i)_i=1$ and $\forall j\neq i$, $(\delta_i)_j=0$, is denoted by  $\mathbf{\delta_i}$.

We start defining the \motnouveau{neutrality property} for election systems that treat objects equally:


\begin{definition}

A score function ($M$, $\SCF$) is \motnouveau{neutral} if  score matrix $M$ has the following form: $\forall k\in [m], i_1 \neq \dots \neq i_k \in [m], $
\begin{itemize}
\item$ M(\mathbf{\delta_{i_1}}+\dots+\mathbf{\delta_{i_k}})_{i_1}= \dots=
M(\mathbf{\delta_{i_1}}+\dots+\mathbf{\delta_{i_k}})_{i_k}$ ;
\item For all other coordinates $j\neq i_1,\dots,i_k$, $M(\mathbf{\delta_{i_1}}+\dots+\mathbf{\delta_{i_k}})_{j}$ is strictly less than  $ M(\mathbf{\delta_{i_1}}+\dots+\mathbf{\delta_{i_k}})_{i_1}$.
\end{itemize}
\end{definition}

For example, knapsack voting is a score voting respecting neutral property. Indeed, it is enough to notice that the identity matrix\footnote{The identity matrix is a $m\times m$ square matrix with ones on the main diagonal and zeros elsewhere.} of size $m$ $$ \begin{pmatrix}
  1 & 0 & \cdots & 0 \\
  0 & 1 & \cdots & 0 \\
  \vdots & \vdots & \ddots & \vdots \\
 0 & 0 & \cdots & 1 \end{pmatrix}.$$
with a tie break algorithm $\SCF$ corresponds to their score function.





We will now establish the set of score functions that are neutral. All of the following proofs are in the appendix.

\begin{proposition}\label{proprodile}
If ($M$, $\SCF$) is a neutral score function, then $M=D +\lambda_1 A^{(1)} +\dots +\lambda_m A^{(m)}_m$ with $\lambda_1,\dots,\lambda_m \in \mathbb{R}$, $D$ is a positive constant diagonal matrix and $\forall i,j,k \in [m], i\neq j$ $A^{(j)} $ is such that $A^{(j)}_{k,j}=1$ and $A^{(j)}_{k,i}=0$.
\end{proposition}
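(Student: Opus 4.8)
We are given a neutral score function $(M, \SCF)$ on $m$ objects, and we want to show that $M$ decomposes as $D + \sum_{j=1}^m \lambda_j A^{(j)}$, where $D$ is a positive constant diagonal matrix and each $A^{(j)}$ is the matrix that has a $1$ in every entry of column $j$ and zero elsewhere. The key observation is that the neutrality condition is a statement about the vector $M\mathbf{x}$ only for $\mathbf{x}$ of the special form $\mathbf{\delta_{i_1}} + \dots + \mathbf{\delta_{i_k}}$, i.e. $0/1$-vectors; but since such vectors span $\mathbb{R}^m$, control over $M\mathbf{x}$ on all of them pins down $M$ entirely. So the plan is to extract, from the two bullet points in the definition of neutrality, enough linear equations on the entries $M_{i,j}$ to force the stated form.

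First I would apply neutrality to the singleton vectors $\mathbf{\delta_j}$ (the case $k=1$). The first bullet is vacuous and the second bullet says $M_{i,j} < M_{j,j}$ for all $i \neq j$; applying this for every $j$ gives in particular that the diagonal entries dominate their columns. Next I would apply neutrality to the all-ones vector $\mathbf{1} = \mathbf{\delta_1} + \dots + \mathbf{\delta_m}$ (the case $k = m$): the first bullet forces all $m$ coordinates of $M\mathbf{1}$ to be equal, i.e. every row of $M$ has the same sum; call this common row sum $c$. The real work is in the intermediate cases. Fix any pair $i \neq j$ and apply neutrality to $\mathbf{x} = \mathbf{1} - \mathbf{\delta_i}$ (the $0/1$-vector supported on all indices except $i$), whose ``active'' coordinates are all indices $\neq i$. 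The first bullet says all coordinates $(M\mathbf{x})_\ell$ for $\ell \neq i$ are equal. Since $M\mathbf{x} = M\mathbf{1} - M\mathbf{\delta_i} = c\mathbf{1} - M_{\bullet,i}$ where $M_{\bullet,i}$ is the $i$-th column of $M$, we get that $c - M_{\ell,i}$ is independent of $\ell$ over all $\ell \neq i$, i.e. $M_{\ell, i}$ takes a common value, say $\mu_i$, for all $\ell \neq i$. Doing this for every $i$ shows: each column of $M$ is constant off the diagonal.

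Now assembling: write $\lambda_i := \mu_i$ for the common off-diagonal value of column $i$, and let $d_i := M_{i,i} - \lambda_i$. Then $M = D + \sum_{i=1}^m \lambda_i A^{(i)}$ where $D = \mathrm{diag}(d_1, \dots, d_m)$, simply by checking the decomposition column by column (off-diagonal entries of column $i$ are $\lambda_i = \lambda_i \cdot 1 + 0$, and the diagonal entry is $d_i + \lambda_i$). It remains to argue two things: that $D$ is constant (all $d_i$ equal) and positive. For constancy, use the row-sum condition: the $\ell$-th row sum is $\sum_{i \neq \ell} \lambda_i + M_{\ell,\ell} = \sum_{i} \lambda_i + d_\ell = c$, so $d_\ell = c - \sum_i \lambda_i$ is the same for every $\ell$; call it $d$. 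For positivity, combine the singleton inequality $M_{\ell,\ell} > M_{i,\ell}$ (take any $i \neq \ell$), which reads $d + \lambda_\ell > \lambda_\ell$, hence $d > 0$. This gives $D = d I$ with $d > 0$, completing the proof. The main obstacle — really the only subtle point — is recognizing that neutrality applied to the near-complete $0/1$-vectors $\mathbf{1} - \mathbf{\delta_i}$ (rather than just singletons and the full vector) is exactly what forces the columns to be constant off-diagonal; once one has that idea the rest is bookkeeping. One should also double-check the edge case $m = 2$, where ``strictly less'' in the second bullet still applies and the argument goes through unchanged, and note that the hypothesis that at least $3$ objects are in play is not actually needed here.
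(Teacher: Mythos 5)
Your proof is correct, and it takes a more direct route than the paper's. The paper proceeds by induction on the size of the leading principal submatrix: at each step it combines the equality of all coordinates of $M(\mathbf{\delta_1}+\dots+\mathbf{\delta_{n+1}})$ with the induction hypothesis to deduce that the new column is constant above the diagonal, and then uses the pairwise instances $M(\mathbf{\delta_k}+\mathbf{\delta_{n+1}})_k=M(\mathbf{\delta_k}+\mathbf{\delta_{n+1}})_{n+1}$ together with the full row-sum identity to pin down the last row and show the two diagonal offsets $c$ and $c'$ coincide. You instead pick out three families of test vectors once and for all --- the singletons $\mathbf{\delta_j}$ (giving $M_{i,j}<M_{j,j}$), the all-ones vector (giving equal row sums), and the complements $\mathbf{1}-\mathbf{\delta_i}$ (giving that each column is constant off the diagonal) --- and the decomposition then falls out by bookkeeping. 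Your choice of $\mathbf{1}-\mathbf{\delta_i}$ is the key simplification: it replaces the paper's entire inductive machinery with a single subtraction $M(\mathbf{1}-\mathbf{\delta_i})=c\mathbf{1}-M_{\bullet,i}$. One small point in your favour: you derive strict positivity of the common diagonal offset $d$ from the strict inequality in the second bullet of neutrality, whereas the paper's base case only records $M_{1,1}\ge M_{2,1}$, so your argument actually justifies the word ``positive'' in the statement more cleanly. Both proofs use only instances of neutrality permitted by the definition, so the two are equally valid; yours is shorter and avoids the induction entirely.
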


We can now compute the winning set of all scoring votes according to knapsack voting.


\begin{theorem} \label{th:neutral}
Every  neutral score voting has the same winning set as knapsack voting.
\end{theorem}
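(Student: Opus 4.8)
The plan is to use the structural characterization of neutral score functions from Proposition~\ref{proprodile}, which says that a neutral matrix $M$ has the form $D + \lambda_1 A^{(1)} + \dots + \lambda_m A^{(m)}$, where $D$ is a positive constant diagonal matrix (say $D = c\,\mathrm{Id}$ with $c > 0$) and each $A^{(j)}$ is the rank-one matrix with all entries in column $j$ equal to $1$. The key observation is that $A^{(j)}\mathbf{e} = (\mathbf{e})_j \cdot \mathbf{1}$, so $(M\mathbf{e})_i = c\,(\mathbf{e})_i + \sum_{j=1}^m \lambda_j (\mathbf{e})_j$. Writing $C(\mathbf{e}) = \sum_{j} \lambda_j (\mathbf{e})_j$, which does not depend on $i$, we get $(M\mathbf{e})_i = c\,(\mathbf{e})_i + C(\mathbf{e})$. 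Thus for a fixed ballot profile (hence a fixed aggregate vector $\mathbf{e}$), the ordering of coordinates of $M\mathbf{e}$ agrees with the ordering of coordinates of $\mathbf{e}$ itself, since $c > 0$ and $C(\mathbf{e})$ is a common additive shift.

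First I would set up the two mechanisms on the same instance: an arbitrary ballot profile $\Declaration = (\declaration_1,\dots,\declaration_n)$ with all objects of unitary weight and budget constraint $W$, so both mechanisms must return a set of exactly $W$ objects (or at most $W$; I will handle the ``exactly $W$'' case as in the definition). Knapsack voting picks the $W$ objects with the largest number of approvals, i.e.\ the $W$ largest coordinates of $\mathbf{e}$; the neutral score voting picks the $W$ objects maximizing $\sum_{i \in s}(M\mathbf{e})_i$. By the identity above, $\sum_{i\in s}(M\mathbf{e})_i = c\sum_{i\in s}(\mathbf{e})_i + W\cdot C(\mathbf{e})$, and since $c\sum_{i\in s}(\mathbf{e})_i$ is maximized exactly when $\sum_{i\in s}(\mathbf{e})_i$ is maximized, the set of maximizers coincides with that of knapsack voting. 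Hence the winning sets agree whenever the maximizer is unique.

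Then I would address ties. When several sets $s$ achieve the same value of $\sum_{i\in s}(\mathbf{e})_i$, both mechanisms resolve the ambiguity via a tie-break — a strict order on objects (or on sets) supplied by the social choice function $\SCF$. Since the neutral score voting's tie-break is exactly the strict order attached to its $\SCF$, and we are free to take knapsack voting's tie-break to be the same strict order, the two mechanisms select the identical set in the tied case as well. I would phrase the theorem statement's ``same winning set'' as: for the neutral score voting $(M,\SCF)$, knapsack voting equipped with the tie-break $\SCF$ produces the same output on every profile.

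The main obstacle is not the algebra — which is a one-line consequence of Proposition~\ref{proprodile} — but making the ``equivalent'' claim precise, in particular handling the tie-breaking rule cleanly and noting that the common shift $C(\mathbf{e})$ can in principle be negative or make some scores negative without affecting the argument, because we only ever compare scores for sets of the same cardinality $W$. I would also remark that positivity of $D$ (i.e.\ $c>0$, not merely $c \neq 0$) is exactly what is needed so that larger approval counts translate to larger scores rather than smaller ones; this is where the hypothesis that $D$ is a \emph{positive} constant diagonal matrix is used.
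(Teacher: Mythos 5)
Your proof is correct and follows essentially the same route as the paper's: both rest on Proposition~\ref{proprodile}, from which the score of object $i$ is $c\,(\mathbf{e})_i$ plus a shift $\sum_j \lambda_j(\mathbf{e})_j$ common to all objects, so the score ordering coincides with the approval-count ordering. Your write-up is in fact more careful than the paper's (explicit tie-breaking, explicit use of $c>0$), but it is the same argument.
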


\subsection{Total score functions}

The previous results show that neutrality is a hypothesis that is too restrictive for election systems. If neutrality is satisfied, then there is only one possible winning set: the one of knapsack voting.

We use score functions to study a class of mechanisms that are strategyproof under a less restrictive hypothesis. For that, we introduce the notion of the totality of score functions:

\begin{definition}
A score function ($M$, $\SCF$) is called \emph{total} if there exists a ballot profile  $\Declaration=(\declaration_1,\dots,\declaration_n)$ such that   $\SCF(\declaration_1+\dots+\declaration_n)=s$, $\forall s \in \Si$. 
\end{definition}

Observe that Knapsack voting is not equivalent to total score functions (see Proposition~\ref{prop:total}).

\begin{proposition} \label{prop:total}
There exists a score voting such that the associated  score function ($M$, $\SCF$) is total and verifies  CCP property. It does not have the same winning set as  knapsack voting.
\end{proposition}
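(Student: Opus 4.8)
The plan is to exhibit an explicit score voting on $m = 3$ objects, check that its score function is total and satisfies CCP, and verify that its winning set differs from that of knapsack voting on some profile. Since the claim is purely existential, a single well-chosen example suffices; the work is in picking a matrix $M$ that is simple enough to analyze by hand yet asymmetric enough to break neutrality (and hence, by \Cref{th:neutral}, to differ from knapsack voting).

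\medskip
\noindent\textbf{Choice of instance.} I would take $m = 3$ objects with unitary weights, weight constraint $W = 1$ (so $\Si = \{\{o_1\},\{o_2\},\{o_3\}\}$), and a diagonal-dominant but non-uniform matrix, e.g.\ $M = \operatorname{diag}(a_1,a_2,a_3)$ with $a_1 > a_2 > a_3 > 0$ together with a fixed tie-break order $o_1 \succ o_2 \succ o_3$. The winning singleton is the $\{o_i\}$ maximizing $a_i e_i$ where $e_i$ counts the approvals of $o_i$. First I would argue \textbf{totality}: for each $s = \{o_i\}$, a ballot profile concentrating enough approvals on $o_i$ (e.g.\ all $n$ voters approving only $o_i$, with $n$ large) makes $a_i e_i$ strictly dominate the other two scores, so $\SCF$ returns $\{o_i\}$; hence every $s \in \Si$ is achievable, and the score function is total. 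Next I would verify \textbf{CCP}: when a voter switches approval from $o_\alpha$ to $o_\beta$, the count $e_\alpha$ drops by one and $e_\beta$ rises by one, all other counts unchanged; since $|\text{output}| = W = 1$, the output is a single object, and any change in the argmax of $(a_i e_i)_i$ either keeps the same object, or replaces the current winner by a new one — which is exactly one of the three CCP cases (ejecting $\alpha$ with a replacement, or getting $\beta$ chosen while ejecting the previous winner, or no change). I would note that the only subtlety is to confirm that a change of winner triggered by the switch really does fall into one of the permitted patterns; with $W=1$ this is immediate because every feasible solution is a single object, so "replace one object by another" covers all nontrivial transitions.

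\medskip
\noindent\textbf{Separating from knapsack voting.} Finally I would pick a profile where $o_1$ and $o_2$ receive equal approval counts $e_1 = e_2 = e_3$ — say each object approved by the same number of voters, so $\mathbf{e} = (c,c,c)$ for some $c > 0$. Knapsack voting scores all three equally and breaks the tie by its own rule; but with $a_1 > a_2 > a_3$ the score voting assigns strictly larger value $a_1 c$ to $o_1$, so $\{o_1\}$ wins outright regardless of tie-break. By choosing the knapsack tie-break (or simply the profile) so that knapsack voting's winner is $\{o_2\}$ or $\{o_3\}$, the two mechanisms disagree; if one wants to avoid relying on an adversarial tie-break, a cleaner route is a profile with $e_1 = c$, $e_2 = c+1$ where $a_1 c > a_2(c+1)$ (possible if $a_1/a_2 > (c+1)/c$ for the chosen $c$), so knapsack voting picks $\{o_2\}$ while score voting picks $\{o_1\}$. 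Either way the winning sets differ, completing the proof.

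\medskip
\noindent\textbf{Main obstacle.} The only real care-point is the CCP verification: one must make sure the example is genuinely CCP-compliant and not merely "usually" so. Restricting to $W = 1$ neutralizes this difficulty, since the output is always a singleton and the three CCP clauses collectively describe all possible one-object-to-one-object transitions; the diagonal structure of $M$ guarantees that switching a voter's approval changes only the two relevant coordinates of $\mathbf{e}$, so no unintended cascade can occur. Thus the bulk of the argument is bookkeeping, and the essential content is simply that a non-uniform diagonal $M$ with $W=1$ is total, CCP, and non-neutral, hence — via \Cref{th:neutral} — not equivalent to knapsack voting.
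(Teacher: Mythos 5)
Your proposal is correct and takes essentially the same route as the paper, which uses the diagonal matrix $\begin{pmatrix}1&0\\0&3\end{pmatrix}$ with $W=1$ and the profile $\{o_1\},\{o_1\},\{o_2\}$ to separate the score-voting winner from the knapsack winner; your non-uniform diagonal matrix with $W=1$ and a profile satisfying $a_1 c > a_2(c+1)$ is the same idea, just on three objects instead of two (which, incidentally, better respects the paper's stated convention of working with at least three objects). The totality and CCP verifications you give are also the ones the paper relies on, with your CCP argument spelled out more carefully than the paper's one-line appeal to diagonality.
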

\begin{proof}
We can take as an example the matrix $
	\begin{pmatrix}
1 & 0 \\
0 & 3 
\end{pmatrix}$. ($M$, $\SCF$) is total. Indeed, if $W=1$, to obtain the winning set $\{o_i\}$, a ballot should be $\{i\}$. If $W=0$ or $W=2$, only one winning set is possible. Moreover, ($M$, $\SCF$) respects CCP since $M$ is diagonal. The winning set on the ballots $\{o_1\},\{o_1\},\{o_2\}$ is $\{o_1\}$ with knapsack voting and $\{o_2\}$ with ($M$, $\SCF$).
%
%
\end{proof}

From that, and from the fact that knapsack voting is total, we conclude that the set of total score functions strictly contains and is greater the set of neutral score functions. We now characterize the  strategyproofness of total score functions. To do so, we will define another set of constraints \emph{$\Delta$} on the matrix of score functions. We will prove for the restricted case of $m=3$ ($3$ objects) that strategyproofness is equal to CCP and to this new set \emph{$\Delta$} restricted to total functions.

\begin{definition}
We define $\Delta$ the following set of constraints on a score function ($M$,~$\SCF$):

 \begin{align*}
        \forall i,j,k,  \ \ \     & M_{i,i} - M_{i,j} \geq M_{k,i} - M_{k,j}       \\
         \forall i,j,j\neq i,  \ \ \                  & M_{i,j} < M_{j,j} \ if \  i>j   \\
         & M_{i,j} \leq M_{j,j}\  otherwise
    \end{align*}


We denote by \emph{$T_\Delta$} the set of all total score functions that respect $\Delta$.
\end{definition}

\begin{proposition}
The set $T_\Delta$ is not equal to the set of score functions that respect $\Delta$.
\end{proposition}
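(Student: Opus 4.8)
The containment $T_\Delta\subseteq\{\,(M,\SCF)\ :\ (M,\SCF)\text{ respects }\Delta\,\}$ holds by definition, so it suffices to exhibit a single score function that respects $\Delta$ but is not total. The plan is to take $m=3$ objects $\setObjects=\{o_1,o_2,o_3\}$, weight constraint $W=2$, the matrix
\[
M=\begin{pmatrix} 2 & 1 & 0\\[2pt] 1 & 1 & 0\\[2pt] 0 & 0 & 1\end{pmatrix},
\]
and any tie-break rule inside $\SCF$ whose output on the all-zero aggregated ballot is not the singleton $\{o_1\}$. The intuition is that, unlike the (total) diagonal example of Proposition~\ref{prop:total}, here $o_1$ and $o_2$ are coupled so that $o_1$ can never be chosen without $o_2$ being chosen too; this makes the feasible solution $\{o_1\}\in\Si$ unreachable.

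The first step is to check that $M$ respects $\Delta$. The sign conditions are immediate: below the diagonal $M_{2,1}=1<2=M_{1,1}$, $M_{3,1}=0<2=M_{1,1}$ and $M_{3,2}=0<1=M_{2,2}$; above the diagonal $M_{1,2}=1\le 1=M_{2,2}$, $M_{1,3}=0\le 1=M_{3,3}$ and $M_{2,3}=0\le 1=M_{3,3}$. The inequalities $M_{i,i}-M_{i,j}\ge M_{k,i}-M_{k,j}$ over all $i,j,k\in[3]$ form a finite list, each instance reducing to a comparison of small integers; I would simply run through them (for instance, for $(i,j)=(2,3)$ the left-hand side equals $1$ and the largest right-hand side, attained at $k\in\{1,2\}$, is also $1$).

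The second step is to show the feasible solution $\{o_1\}$ (which lies in $\Si$ since $|\{o_1\}|=1\le 2=W$) is never a winning set. Write $\mathbf{e}=(e_1,e_2,e_3)$ for the aggregated ballot, so $M\mathbf{e}=(2e_1+e_2,\ e_1+e_2,\ e_3)$, and let $g(S)=\sum_{o_i\in S}(M\mathbf{e})_i$ be the aggregate score that $\SCF$ maximises over the feasible sets $S\in\Si$. Suppose for contradiction that $\SCF$ outputs $\{o_1\}$ for some profile. If $(M\mathbf{e})_2>0$, then the feasible set $\{o_1,o_2\}$ satisfies $g(\{o_1,o_2\})=g(\{o_1\})+(M\mathbf{e})_2>g(\{o_1\})$, so $\{o_1\}$ is not a maximiser, a contradiction. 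Otherwise $(M\mathbf{e})_2\le 0$; since $(M\mathbf{e})_2=e_1+e_2\ge 0$ this forces $e_1=e_2=0$, hence also $(M\mathbf{e})_1=2e_1+e_2=0$. If moreover $e_3>0$, then $g(\{o_3\})=e_3>0=g(\{o_1\})$ again contradicts maximality; and if $e_3=0$, then $\mathbf{e}=\mathbf{0}$ and $g\equiv 0$, so the output is the tie-break's choice, which was chosen not to be $\{o_1\}$. Every case yields a contradiction, so $(M,\SCF)$ is not total while respecting $\Delta$, which proves that $T_\Delta$ is strictly contained in the set of $\Delta$-respecting score functions.

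I expect the genuine difficulty to be finding the matrix rather than the verifications. The strict inequality $M_{i,j}<M_{j,j}$ for $i>j$ in $\Delta$ forbids the most natural candidate --- two identical rows, which would tie $o_1$ and $o_2$ exactly --- so one has to make row $2$ strictly smaller than row $1$ in the first coordinate while keeping all entries non-negative, so that $(M\mathbf{e})_1>0$ still forces $(M\mathbf{e})_2>0$; simultaneously one must respect the difference inequalities of $\Delta$ (which neighbouring candidate matrices easily violate) and dispose of the all-abstention profile, which is exactly what the tie-break clause handles.
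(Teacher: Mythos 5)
Your verification that $M=\begin{pmatrix}2&1&0\\1&1&0\\0&0&1\end{pmatrix}$ satisfies $\Delta$ is correct, but the non-totality argument does not go through, because it rests on a reading of the social choice function that is not the paper's. Here $\SCF$ returns ``the winning set of $W$ objects'' maximizing the score, i.e.\ the output always consists of exactly $W$ objects --- the $W$ objects with the largest coordinates of $M\mathbf{e}$ (this is how the paper's worked $4\times4$ example is computed) --- and totality is assessed by matching $W$ to the size of the target set, as in the proof of Proposition~\ref{prop:total}, which treats $W=0$, $W=1$ and $W=2$ separately. Under that semantics your target $\{o_1\}$ with $W=2$ is unreachable for \emph{every} matrix, including the identity matrix that the paper asserts is total, so its unreachability says nothing about your particular $M$; and with the relevant value $W=1$ the singleton $\{o_1\}$ \emph{is} reachable (the profile $\mathbf{e}=(1,0,0)$ gives scores $(2,1,0)$). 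Worse, with the tie-break convention encoded in $\Delta$ (the larger index wins ties, which is exactly what makes your equality $M_{1,2}=M_{2,2}$ admissible, cf.\ Proposition~\ref{proposition:a1}), every subset of size $k$ is reachable with $W=k$ --- e.g.\ $\mathbf{e}=(0,2,1)$ gives scores $(2,2,1)$ and the tie-break yields $\{o_2\}$ --- so your $(M,\SCF)$ appears to be total. Your alternative semantics, maximizing $\sum_{o_i\in S}(M\mathbf{e})_i$ over all $S$ of cardinality at most $W$, cannot be the intended one either, since under it knapsack voting itself would fail to be total.

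The repair is to make a set of size \emph{exactly} $W$ unreachable, which is what the paper does: with $M=\begin{pmatrix}4&0&0\\0&4&0\\3&3&100\end{pmatrix}$ and $W=2$, for $\{o_1,o_2\}$ to be the two top-scoring objects one would need $4e_1\ge 3e_1+3e_2+100e_3$ and $4e_2\ge 3e_1+3e_2+100e_3$; summing these forces $e_1=e_2=e_3=0$, so any nonempty profile pushes $o_3$ into the winning pair. Your instinct of coupling two objects so that one drags the other along is the right one, but the coupling must defeat the top-$W$ selection rule, not the subset-sum rule.
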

\begin{proof}
Let $M$ be 
$
	\begin{pmatrix}
4 & 0 & 0\\
0 & 4 & 0\\
3 & 3 & 100
\end{pmatrix}$. It respects $\Delta$ but is not total (with $W=2$, $\{\object_1,\object_2\}$ is not a feasible winning set).
\end{proof}

To prove that CCP and strategyproofness are equivalent whenever $m=3$, we will prove that:
\begin{itemize}
    \item any strategyproof total score function satisfies $\Delta$; and
    \item for any total score function ($M$, $\SCF$) that satisfies $\Delta$, $\SCF$ satisfies CCP.
\end{itemize}

We will first prove in Proposition~\ref{proposition:a1} that if ($M$, $\SCF$) is a strategyproof total score function, the equations $M_{i,j}<M_{j,j}$ are satisfied for all $i,j,j\neq i$. Then, we will use Lemma~\ref{lemma:a1} to prove in Proposition~\ref{proposition:a2} that if ($M$, $\SCF$) is a strategyproof total score function, the equations $ M_{j,k}-M_{j,i}\ge M_{k,k}-M_{k,i}$ are satisfied.

\begin{proposition} \label{proposition:a1}
If ($M$, $\SCF$) is a strategyproof total score function, then, for all $ i,j$, $j\neq i$: 
\begin{itemize}
    \item if $i$ is chosen over $j$ in the tie break of $\SCF$, then, $M_{i,j}<M_{j,j}$;
    \item otherwise, $M_{i,j}\le M_{j,j}$.
\end{itemize}
\end{proposition}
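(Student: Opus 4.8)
The plan is to argue by contradiction using the totality of the score function to realize a ballot profile where the outcome is exactly an object pair on which $M$ violates the claimed inequality, then exhibit a profitable manipulation. So suppose $(M,\SCF)$ is a strategyproof total score function and, for some $i\neq j$, the conclusion fails: either $i$ is chosen over $j$ in the tie break but $M_{i,j}\ge M_{j,j}$, or $i$ is not chosen over $j$ (so $j$ is chosen over $i$) but $M_{i,j}>M_{j,j}$. The key structural fact I would use is that in both cases the matrix entries make $\{\object_i\}$ weakly preferred (and, combined with the tie break, actually selected) over $\{\object_j\}$ by $\SCF$ whenever the aggregate vector has a single vote concentrated appropriately — this is where the hypothesis bites.

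First I would set $W=1$, so $\Si$ contains the singletons $\{\object_i\}$ and $\{\object_j\}$ (totality guarantees every $s\in\Si$, in particular these singletons, is achievable as an outcome, so they are genuinely feasible winning sets). Consider a voter $\voter$ whose sincere ballot is $\declaration_\voter^*=\{j\}$ (i.e.\ the indicator vector $\mathbf{\delta_j}$), and arrange the other voters' ballots $\Declaration_{-\voter}$ so that, together with $\voter$'s sincere vote, the aggregate vector is exactly $\mathbf{\delta_j}$ — for instance $n=1$, or the remaining voters cast empty-like ballots. Then the score vector is $M\mathbf{\delta_j}$, whose $i$-th coordinate is $M_{i,j}$ and whose $j$-th coordinate is $M_{j,j}$. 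Under the failure hypothesis, $M_{i,j}\ge M_{j,j}$ (first case) forces $\SCF$ to output $\{\object_i\}$ — strictly if $M_{i,j}>M_{j,j}$, and by the tie break if $M_{i,j}=M_{j,j}$ since $i$ beats $j$ there; in the second case $M_{i,j}>M_{j,j}$ again forces $\{\object_i\}$. Either way $\SCF(\declaration_\voter^*;\Declaration_{-\voter})=\{\object_i\}$, giving $\voter$ utility $\declaration_\voter^*(\object_i)=0$.

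Next I would produce the manipulation: let $\voter$ instead report $\declaration_\voter=\{j\}$ — wait, that is sincere; rather, the point is to compare against a report that makes the aggregate land on $\mathbf{\delta_j}$ with enough weight that $j$ wins, or to start from a profile where $\{\object_j\}$ is the sincere outcome and show a deviation toward $i$. Concretely I would instead fix the sincere aggregate so that the honest outcome would be $\{\object_j\}$ (possible by totality: choose $\Declaration_{-\voter}$ and $\declaration_\voter^*=\{j\}$ realizing $s=\{\object_j\}$), and then have some other voter $\voter'$ with sincere ballot not containing $j$ — if no honest profile yields $\{\object_j\}$ as winner when it "should", totality is contradicted directly. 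The cleanest route, which I would carry out in detail, is: by totality pick $\Declaration$ with $\SCF(\Declaration)=\{\object_j\}$; voter $\voter$ with $j\in\declaration_\voter^*$ gets utility $1$; now the failure of the inequality lets a different voter $\voter'$ (with $j\notin\declaration_\voter^{*\prime}$, $i\notin\declaration_\voter^{*\prime}$, sincere utility $0$ on both) deviate to shift the aggregate toward $\mathbf{\delta_j}$-type domination and change the winner to $\{\object_i\}$ — no, $\voter'$ gains nothing there. I will therefore run the contradiction through the voter who benefits: starting from the profile where honest play gives $\{\object_i\}$ as forced above while $\voter$ sincerely prefers $\{\object_j\}$, note $\voter$ currently has utility $0$; since the honest outcome is already the bad one for $\voter$, there is no deviation to improve — so the real argument must instead be that strategyproofness forces the tie-break/inequality alignment by considering a voter who, by misreporting, would move a tie in $\SCF$'s score vector in their favor. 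The main obstacle, and the step I expect to require the most care, is exactly this: setting up the right profile (using totality) so that the tie or near-tie between $\object_i$ and $\object_j$ in the score vector is pivotal for a single voter, and checking that the contemplated misreport is (a) feasible as a ballot and (b) actually changes $M\mathbf{e}$ enough to flip the winner from $\{\object_j\}$ to $\{\object_i\}$ or vice versa, contradicting strategyproofness. I expect the bookkeeping to mirror the induction in the proof of Theorem~\ref{theo:4}, reduced to the single coordinate pair $(i,j)$, and to split into the strict ($i>j$) and non-strict subcases precisely according to how the tie break orders $i$ and $j$.
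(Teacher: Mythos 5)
You have the right two endpoints but never connect them, and you say so yourself. The setup is correct: with $W=1$, a violation $M_{i,j}\ge M_{j,j}$ (strict, or an equality with the tie break favouring $i$) means that on the aggregate vector $\mathbf{\delta_j}$ the winner is not $\{\object_j\}$, while totality supplies some profile $\Declaration$ with $\SCF(\Declaration)=\{\object_j\}$. What is missing is how to extract a profitable lie from these two facts. The paper's proof does it by interpolating between the two profiles one ballot at a time: take any voter $k$ in $\Declaration$ with $\declaration_k\neq\mathbf{\delta_j}$, posit that her sincere ballot is $\mathbf{\delta_j}$ (strategyproofness is quantified over all sincere preferences, so this is free), and replace $\declaration_k$ by $\mathbf{\delta_j}$. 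Either the winner is still $\{\object_j\}$, in which case repeat with the next such voter, or it is not. The first branch cannot persist through all $n$ replacements, because the unanimous profile has aggregate $n\mathbf{\delta_j}$ and $M(n\mathbf{\delta_j})=nM\mathbf{\delta_j}$, which by the assumed violation does not elect $\object_j$. Hence at some step a voter whose sincere ballot is $\mathbf{\delta_j}$ sees the outcome drop from $\{\object_j\}$ (utility $1$) to something else (utility $0$) exactly when she switches from her insincere ballot to her sincere one; she strictly prefers to lie, contradicting strategyproofness.

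Your text circles this construction several times (``the point is to compare against a report that\ldots'', ``the real argument must instead be\ldots'') but each attempted manipulation either has the deviating voter gaining nothing or is abandoned mid-sentence, and you end by explicitly flagging the pivotal-voter construction as the step you have not carried out. That construction is the proof; without it the proposal only establishes that the two profiles exist, not that any voter can profitably deviate.
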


\begin{proof}
We build a score voting such that $W=1$ (the winning set is a singleton).

Let $i,j$ be two distinct integers in $[m]$  such that $M_{i,i}< M_{j,i}$ or $M_{i,i}= M_{j,i}$ and $j>i$ in the tie break.

Let $\mathbf{\delta_{i}}$ be a vector whose coordinates are $0$ except the $i$th coordinate which is $1$. By definition, the outcome (the winning set) on ballot $\mathbf{\delta_{i}}$ is not $\{o_i\}$: $\mathcal{A}lgo(M(\mathbf{\delta_i})) \neq \{o_i\}$.

Since ($M$, $\SCF$) is a total score function,  there exists a set of ballots $\Declaration=(\declaration_1,\dots,\declaration_n)$ such that the winning set $\mathcal{A}lgo(M(\Declaration))$ of the score voting on $\Declaration$ is $\{o_i\}$.


Several cases are possible. 
If $\declaration_1=\dots=\declaration_n=\mathbf{\delta_i}$, then $\mathcal{A}lgo(M(\declaration_1+\dots+\declaration_n))=\mathcal{A}lgo(M(\mathbf{\delta_i}))$. This leads to a contradiction with $\mathcal{A}lgo(M(\mathbf{\delta_i})) \neq \{o_i\}$.

Assume now that there exists a voter $k$ such that $\declaration_k \neq \mathbf{\delta_i}$. If voter $k$ wants object $\object_i$ and changes her ballot to $\mathbf{\delta_i}$, two cases are possible:
\begin{enumerate}
    \item The winning set is $\{o_i\}$. The new set of ballots is $\Declaration'=(\declaration_1,\dots,\declaration_{k-1},\mathbf{\delta_i},\dots,\declaration_n)$. In this case, another voter $k'$ such that $\declaration_{k'} \neq \mathbf{\delta_i}$ on the set $\Declaration'$ changes her ballot, and the process repeats. This process can be repeated at most $n$ times, until the set of ballot becomes $\Declaration''$ with $\declaration_1=\dots=\declaration_n=\mathbf{\delta_i}$, since
    the winning set on $\Declaration''$ is not $\{o_i\}$. Thus, after some steps, the other case must happen.
    \item The winning set is not $\{o_i\}$. Thus voter $k$ should vote $\declaration_k$. That shows that $M$ is not strategyproof.
\end{enumerate}

Hence, $M$ is either not total or not strategyproof. That shows that the contraposition of the proposition is true.
\end{proof}

The next lemma highlights the strength of total score functions. We show that given objects $x,y,z$ there exists a set of ballots such that $z$ gets the best score out of all of the winning sets, and $y$ gets a higher score than $x$.

We use this result to prove propositions \ref{proposition:a2} and \ref{proposition:p28} by contraposition.

\begin{lemma}\label{lemma:a1}

If ($M$, $\SCF$) is a total score function with a matrix of size at least $3 \times 3$ and $x,y,z$ are three objects, then  there exists a set of ballot $\Declaration=k_1\delta_1+\dots+k_x\delta_x+k_y\delta_y+k_z\delta_z+\dots+k_m\delta_m$ such that $M(\Declaration) =\begin{pmatrix}
t_1 \\
\dots \\
t_x \\
t_y \\
t_z \\
\dots
\end{pmatrix}$ with $t_z>t_y>t_x$, $t_z>t_i$ for every other object~$i$ and $k_1,k_2,k_3,\dots\neq 0$.
\end{lemma}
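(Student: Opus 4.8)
The statement asks, given a total score function $(M,\SCF)$ with $m\ge 3$ and three distinguished objects $x,y,z$, to exhibit a ballot profile whose aggregate vote $\mathbf{e}$ (a nonnegative integer vector with \emph{all} coordinates nonzero) produces scores $M\mathbf{e}$ in which $z$ strictly dominates everything, and $y$ strictly beats $x$. My plan is to start from the totality hypothesis and then perturb. By totality there is a ballot profile $\Declaration_0$ with $\SCF(M(\Declaration_0))=\{o_z\}$ when $W=1$; let $\mathbf{e}_0$ be its aggregate vector, so that $(M\mathbf{e}_0)_z$ is (weakly, up to tie-break) the unique maximum coordinate. First I would record that, by the definition of the winning set, $(M\mathbf{e}_0)_z \ge (M\mathbf{e}_0)_i$ for all $i$, with the tie-break favouring $z$ where there is equality.

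The next step is to break the remaining ties and in particular separate $x$ from $y$ without destroying $z$'s dominance. The idea is to add a small multiple of $\mathbf{\delta_y}$ to $\mathbf{e}_0$: replacing $\mathbf{e}_0$ by $\mathbf{e}_0 + t\,\mathbf{\delta_y}$ shifts coordinate $i$ of the score by $t\,M_{i,y}$. I need to argue two things: (i) for the relevant range of $t$, coordinate $z$ stays strictly above every other coordinate, and (ii) coordinate $y$ ends up strictly above coordinate $x$. Point (ii) requires that $M_{y,y} > M_{x,y}$, i.e. that adding a vote for $y$ helps $y$ more than it helps $x$; this follows from Proposition~\ref{proposition:a1} applied with the pair $(x,y)$ — since $(M,\SCF)$ is total (and, implicitly in this chain of lemmas, strategyproof, or at least we may invoke the $\Delta$ inequalities the paper is establishing), we have $M_{x,y}\le M_{y,y}$, and by a symmetric choice of which of the two ties we are in we can force the strict inequality $M_{x,y} < M_{y,y}$ by naming $x,y$ in the order that makes the tie-break strict. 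Point (i) is handled by choosing $t$ large enough that $y$ overtakes $x$ but controlling $z$: since $(M\mathbf{e}_0)_z$ strictly exceeds $(M\mathbf{e}_0)_i$ for $i\ne z$ (after the tie-break is absorbed — if equality held for some $i$, first add a tiny multiple of $\mathbf{\delta_z}$, using $M_{z,z}>M_{i,z}$ again from Proposition~\ref{proposition:a1}, to make it strict), there is a positive gap, and we only need $t\,|M_{z,y}-M_{i,y}|$ to stay below that gap; combined with the lower bound on $t$ needed for (ii), one checks a valid $t$ exists, possibly after first inflating $\mathbf{e}_0$ by an integer factor so that "small" perturbations are still realizable by integer ballots.

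Finally I would ensure the side condition $k_1,\dots,k_m\ne 0$: simply add $\mathbf{\delta_1}+\dots+\mathbf{\delta_m}$ (or a large multiple $N(\mathbf{\delta_1}+\dots+\mathbf{\delta_m})$) to the constructed vector. Adding the all-ones vector changes every score coordinate by the same amount $\sum_j M_{i,j}$ — wait, that is \emph{not} uniform across $i$ in general, so instead I add $N\mathbf{1}$ only after re-examining the inequalities: adding $N\mathbf{1}$ shifts coordinate $i$ by $N\sum_j M_{i,j}$, which could ruin dominance. The cleaner fix is to carry the "make all coordinates nonzero" requirement from the start: take the totality profile, scale it, and observe we are always free to add any fixed vector whose contribution to the relevant score differences is negligible compared to the gaps we have built — or, more simply, first prove the lemma ignoring the nonzero condition and then note that replacing $\mathbf{e}$ by $\mathbf{e}+\mathbf{1}$ and re-running the same gap argument with the (finitely many, fixed) extra terms absorbed into a further rescaling works.

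\textbf{Main obstacle.} The delicate point is item (i): guaranteeing that the single perturbation in the $y$-direction that I need for the $t_y>t_x$ separation does not let some third object's score sneak above $z$'s. This is where I must simultaneously use the strict dominance of $z$ (obtained from Proposition~\ref{proposition:a1}'s strict inequalities $M_{z,z}>M_{i,z}$, after a preliminary $\mathbf{\delta_z}$-nudge) and carefully bound the perturbation size, scaling the base profile up front so that there is enough "room" to take a strictly positive but sufficiently small integer $t$. Everything else — the totality input, the Proposition~\ref{proposition:a1} inequalities, the bookkeeping for the nonzero-coordinate condition — is routine once that tension is resolved.
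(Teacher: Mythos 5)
There is a genuine gap, in two places. First, your separation of $t_y$ from $t_x$ rests on the inequality $M_{x,y}<M_{y,y}$, which you import from Proposition~\ref{proposition:a1}; but that proposition assumes the score function is \emph{strategyproof}, whereas Lemma~\ref{lemma:a1} assumes only totality. The lemma is invoked precisely inside contrapositive arguments (Propositions~\ref{proposition:a2}, \ref{proof:6} and \ref{proposition:p28}) where the function is in the process of being shown non-strategyproof or non-CCP, so a proof of the lemma cannot presuppose those $\Delta$-type inequalities. The paper sidesteps this entirely by using totality a \emph{second} time: it takes one profile whose winning set is $\{z\}$ with $W=1$ and another whose winning set is $\{y,z\}$ with $W=2$; the second profile is itself a certificate that $y$'s score strictly exceeds $x$'s (and every other non-$z$ score), with no appeal whatsoever to the matrix entries. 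The two profiles are then mixed with integer weights $\lambda_1,\lambda_2$.

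Second, even granting $M_{y,y}>M_{x,y}$, the quantitative step you yourself flag as the main obstacle is not resolved by your proposed fix. You need an integer $t$ with $t>\bigl((M\mathbf{e}_0)_x-(M\mathbf{e}_0)_y\bigr)/(M_{y,y}-M_{x,y})$ (so that $y$ overtakes $x$) and simultaneously $t<\bigl((M\mathbf{e}_0)_z-(M\mathbf{e}_0)_i\bigr)/(M_{i,y}-M_{z,y})$ for every $i$ with $M_{i,y}>M_{z,y}$ (so that $z$ stays strictly on top). Inflating $\mathbf{e}_0$ by a factor $N$ multiplies both numerators by $N$, so the feasibility of this system is scale-invariant and can simply fail for an adversarial $M$ and an adversarial totality witness $\mathbf{e}_0$, over neither of which you have any control. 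A preliminary nudge by $s\,\delta_z$ does not escape this either: it shifts both bounds linearly in $s$, and whether the window opens depends on a slope comparison between matrix entries that need not go your way. This missing degree of freedom is exactly what the paper's second totality witness supplies: each of the two profiles guarantees one of the two required strict inequalities with a positive per-unit margin, so a suitable ratio $\lambda_1/\lambda_2$ always exists. (The bookkeeping for $k_i\neq 0$ is also left unresolved in your sketch, as you note yourself, but that is minor by comparison.)
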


\begin{proposition} \label{proposition:a2} \label{proof:4}
For the total score functions, respecting CCP property implies respecting the set of constraints $\Delta$.
\end{proposition}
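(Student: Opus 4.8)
The plan is to prove the contrapositive in the spirit of the other propositions in this section: assume a total score function $(M,\SCF)$ satisfying CCP, and show it must satisfy each of the three families of constraints defining $\Delta$. The strict/non-strict inequalities $M_{i,j} < M_{j,j}$ (when $i>j$ in the tie break) and $M_{i,j} \le M_{j,j}$ (otherwise) are exactly what one obtains by inspecting CCP on a single-voter instance with $W=1$: by the first clause of CCP, if a lone voter does not vote for $\alpha$ then $\alpha$ cannot be in the output, so on the ballot $\mathbf{\delta_j}$ the winning set must be $\{o_j\}$, which forces $M(\mathbf{\delta_j})_j \ge M(\mathbf{\delta_j})_i$ for all $i \ne j$, with strictness dictated by the tie-break direction. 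Writing $M(\mathbf{\delta_j})_j = M_{j,j}$ and $M(\mathbf{\delta_j})_i = M_{i,j}$ gives precisely the second and third lines of $\Delta$.

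The remaining and harder family is $M_{i,i} - M_{i,j} \ge M_{k,i} - M_{k,j}$ for all $i,j,k$. Here I would argue by contradiction: suppose for some triple there is a violation, i.e. $M_{i,i} - M_{i,j} < M_{k,i} - M_{k,j}$, equivalently $M_{i,i} + M_{k,j} < M_{k,i} + M_{i,j}$. The idea is to exhibit a ballot profile $\Declaration$ and a single-voter switch from object $\alpha$ to object $\beta$ that CCP cannot accommodate. Concretely, I would take $W = 1$ (singleton winning sets) and use Lemma~\ref{lemma:a1} (with the roles of $x,y,z$ chosen appropriately among $i,j,k$ and the other objects given huge negative weight so they never compete, which the lemma's $k_i \neq 0$ freedom and the totality hypothesis allow) to engineer a profile where $o_i$ is currently winning but $o_j$ is a close runner-up, and where a switch by one voter from $\alpha$ to $\beta$ — chosen so that it adds exactly the increment $M_{\cdot,j} - M_{\cdot,i}$ to the $\mathbf e$-vector — makes the score of $o_j$ overtake that of $o_i$ while no third object $o_k$ is affected in the ``legal'' CCP way. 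Because CCP on an approval ballot with $W=1$ only permits: (a) ejecting $\alpha$ and replacing it (irrelevant here since $\alpha \notin$ winning set when the set is a singleton $\{o_i\}$ with $i \neq \alpha$), (b) $\beta$ gets chosen and one object is ejected, or (c) no change — the transition $\{o_i\} \to \{o_j\}$ with $j \neq \beta$ is forbidden, yielding the contradiction. The violated inequality is exactly what makes such an overtaking-without-$\beta$ transition unavoidable.

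The main obstacle I anticipate is the bookkeeping in the second part: one must produce, using totality plus Lemma~\ref{lemma:a1}, a profile in which $o_i$ wins, the gap between the scores of $o_i$ and $o_j$ is small enough that a single admissible ballot switch (from some $\alpha$ to some $\beta$, both distinct from $i,j$ or with a careful choice) closes it, and simultaneously the switch leaves the relative order of $o_i$ and $o_k$ such that CCP is forced to produce the forbidden transition. Getting the increments to line up requires writing $M(\Declaration')_t - M(\Declaration)_t = M_{t,\beta} - M_{t,\alpha}$ for each $t$ and choosing $\alpha,\beta$ so these increments realize the difference appearing in the violated inequality; the tie-break clauses already established must be invoked to rule out the borderline equality cases. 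Once the profile is in hand the contradiction with CCP is immediate, so the whole difficulty is concentrated in the construction, and Lemma~\ref{lemma:a1} is precisely the tool designed to carry it.
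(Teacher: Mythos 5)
Your overall strategy is the same as the paper's: treat the diagonal constraints separately, and for the main family $M_{i,i}-M_{i,j}\ge M_{k,i}-M_{k,j}$ argue by contradiction, using Lemma~\ref{lemma:a1} to build a profile in which a single ballot switch forces a change of winning set that none of the three bullets of CCP can account for. There is, however, a genuine gap exactly where you locate the difficulty. You need a profile in which \emph{one} admissible switch reverses the order of the leader and the overtaking object, and you propose to obtain it by making their score gap ``small enough''; but with integer ballot multiplicities the scores move in discrete increments of matrix entries, so you cannot drive the gap below the per-switch gain $\epsilon_1=(M_{k,i}-M_{k,j})-(M_{i,i}-M_{i,j})$ by any limiting argument. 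The paper closes this by first rescaling the Lemma~\ref{lemma:a1} profile by $\lceil\epsilon_2/\epsilon_1\rceil+1$ and enlarging the runner-up's multiplicity so that the gap is at most a fixed $\epsilon_2$ while the ordering is preserved, and then performing the $\lceil\epsilon_2/\epsilon_1\rceil+1$ transfers \emph{one voter at a time}: since the ranking of the two objects is reversed between the first and last profile, some single transfer in the sequence is the one at which the solution changes, and that transfer is the CCP violation. You also leave the choice of $\alpha,\beta$ open (``both distinct from $i,j$ or with a careful choice''), and this is not cosmetic: for the observed change to escape all three bullets, the ejected object must not be $\alpha$ and the newly selected object must not be $\beta$. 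The paper guarantees this by taking $\alpha$ outside the current winning set and $\beta$ already inside it (it transfers votes from $x$ to $z$ with $z$ the top scorer, setting $W$ so that $z$ is the last selected object, so that the entry of the third object $y$ is inexplicable). Without fixing these two points the construction does not close.

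A smaller soft spot: for the constraints $M_{i,j}\le M_{j,j}$ you argue ``directly from the first clause of CCP'' that a lone voter's ballot $\delta_j$ must elect $\{o_j\}$. The definition of CCP is a disjunction of three permitted behaviours, one of which is ``do not change the output,'' so it does not literally force $\alpha$ to leave the solution when dropped; the sentence you rely on is the paper's informal motivation, not part of the definition. The paper instead obtains these constraints by chaining Proposition~\ref{proposition:a1} (violating them destroys strategyproofness) with Theorem~\ref{theo:4} (CCP implies strategyproofness); either adopt that route, or give a totality-based, one-switch-at-a-time argument in the style of Proposition~\ref{proposition:a1}.
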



\begin{proposition} \label{proposition:3:CCP}
For the total score functions, with $3$ objects, respecting the set of equations $\Delta$ implies respecting property $CCP$.
\end{proposition}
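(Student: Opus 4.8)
I want to show that for a total score function $(M,\SCF)$ on $3$ objects, if $M$ satisfies the constraints $\Delta$ then $\SCF$ satisfies CCP. Recall that CCP concerns what happens to the winning set when a single voter switches one approved object $\alpha$ for another object $\beta$: the only permitted effects are (i) $\alpha$ is ejected and replaced by some object, (ii) $\beta$ enters and exactly one previously chosen object is ejected, or (iii) nothing changes. The plan is to fix an arbitrary ballot profile $\Declaration$ and an arbitrary single-voter switch $\alpha \to \beta$, let $\mathbf e$ be the count vector before the switch and $\mathbf e' = \mathbf e - \mathbf\delta_\alpha + \mathbf\delta_\beta$ the count vector after, and then track how the score vector $M\mathbf e$ changes to $M\mathbf e' = M\mathbf e - M\mathbf\delta_\alpha + M\mathbf\delta_\beta$. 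Since $W$ is the number of selected objects and $m=3$, the winning set has size $W \in \{0,1,2,3\}$; the cases $W = 0$ and $W = 3$ are trivial (the solution is forced), so the real work is $W = 1$ (pick the single top-scoring object) and $W = 2$ (pick the two top-scoring objects, i.e. drop the single bottom-scoring one).

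\textbf{Key steps.} First I would set up notation: write the coordinatewise change in the score vector induced by the switch as $\Delta_i := (M\mathbf e')_i - (M\mathbf e)_i = M_{i,\beta} - M_{i,\alpha}$ for each object $i$, which depends only on $M$ and not on $\Declaration$. The first family of $\Delta$-constraints, $M_{i,i} - M_{i,j} \ge M_{k,i} - M_{k,j}$, rewritten with $j = \alpha$, $i = \beta$ reads $M_{\beta,\beta} - M_{\beta,\alpha} \ge M_{k,\beta} - M_{k,\alpha}$, i.e. $\Delta_\beta \ge \Delta_k$ for all $k$; similarly with $i = \alpha$, $j = \beta$ it gives (using a sign flip) that $\Delta_\alpha \le \Delta_k$ for all $k$. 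So the crucial consequence of $\Delta$ is: \emph{among the three coordinates, $\beta$'s score increases the most and $\alpha$'s increases the least} — formally $\Delta_\alpha = \min_i \Delta_i$ and $\Delta_\beta = \max_i \Delta_i$. Second, I would use this to constrain how the ordering of the three scores can change. In the $W=1$ case: if the old winner was some object $w$, I check each possibility for the new winner $w'$; the monotonicity $\Delta_\beta \ge \Delta_w \ge \Delta_\alpha$ forces that either $w' = w$ (no change — case (iii)), or $w' = \beta$ (so $\beta$ is chosen and $w$ ejected — case (ii)), or the old winner was $\alpha$ and now someone else wins ($\alpha$ ejected and replaced — case (i)); what \emph{cannot} happen is that the new winner is the third object $\gamma \notin \{\alpha,\beta\}$ while the old winner was also not $\alpha$, because that would require $\Delta_\gamma$ to strictly exceed $\Delta_w$ for the previous winner $w \ne \alpha$, contradicting... actually it would require $\gamma$ to overtake $w$ while $\beta$ does not, which is consistent with $\Delta_\beta \ge \Delta_\gamma$ only if $\beta$ was already ahead or also overtakes — here I need to be careful and check that if $\gamma$ overtakes $w$ then so does $\beta$, so $\beta$ would be the winner, reducing to case (ii). The second and third $\Delta$-constraints ($M_{i,j} < M_{j,j}$ or $\le$ with the tie-break convention) are what I use to handle ties and the boundary "$\beta$ just barely catches up": they ensure that when $\beta$'s score ties the current leader, the tie-break resolves consistently with $\beta$ being selectable, which is exactly the first bullet of CCP (an object not voted for drops out) applied in reverse. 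Third, the $W=2$ case is dual: the winning set is everything except the unique loser, so I track the \emph{minimum}-scoring object, and by the same monotonicity ($\alpha$ rises least) the loser can only stay the same, become $\alpha$ (so $\alpha$ ejected, replaced by the old loser — case (i)), or the old loser was $\beta$ and is now replaced (case (ii) read dually). I would organize this as a short case analysis on the identity of the old and new extremal object.

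\textbf{Main obstacle.} The delicate point is the handling of ties and the tie-break order, i.e. making the asymmetric constraints ($M_{i,j} < M_{j,j}$ when $i > j$, $\le$ otherwise) do exactly the right work at the boundary cases where two scores coincide after the switch — in particular ensuring that "$\beta$ gets chosen" in CCP's second bullet is genuinely achievable and not blocked by an unfavorable tie-break, and conversely that totality (which via Proposition~\ref{proposition:a1} is tied to precisely these strict/non-strict inequalities) is what licenses these constraints in the first place. I expect to need a clean lemma stating that under $\Delta$, for any count vector and any single switch $\alpha\to\beta$, the post-switch score order refines the pre-switch order in the sense that $\beta$ can only move up past objects and $\alpha$ can only move down past objects (with the tie-break breaking in the "right" direction), and then CCP for $m = 3$ falls out by enumerating the at-most-handful of order types on three elements. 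The rest is bookkeeping.
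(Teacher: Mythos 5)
Your plan is essentially the paper's own argument: you extract from the first family of $\Delta$-constraints that the score increment $M_{i,\beta}-M_{i,\alpha}$ is maximized at $i=\beta$ and minimized at $i=\alpha$, so after the switch the score ordering can only change by $\beta$ moving up past other objects and $\alpha$ moving down past them (with the remaining $\Delta$-inequalities and the tie-break handling the boundary/equality cases), whence the top-$W$ set changes only in the ways CCP allows. The paper phrases this as rank preservation between the pre- and post-switch orders rather than via your $W\in\{0,1,2,3\}$ case split, but the content is the same, and your flagged worry about a third object $\gamma$ overtaking the old winner resolves exactly as you suspect since $\Delta_\beta\ge\Delta_\gamma$.
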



The results show that for total score functions, the set of functions that respect CCP is included in the set of those that respect $\Delta$ and that they are equal whenever $m=3$.

Also, we can state the following important result regarding strategyproofness:

\begin{proposition} \label{proof:6}
Strategyproof total score functions respect  the set of constraints $\Delta$. 
\end{proposition}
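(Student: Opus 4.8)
\textbf{Proof plan for Proposition~\ref{proof:6}.}

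The plan is to reduce the claim to the two auxiliary results that the paper has already announced and partially set up, namely Proposition~\ref{proposition:a1} and Proposition~\ref{proposition:a2} (whose statement is recorded above), bridging them with Lemma~\ref{lemma:a1}. The set of constraints $\Delta$ consists of two families: the inequalities $M_{i,j}<M_{j,j}$ (with the weak version $M_{i,j}\le M_{j,j}$ when $i$ does not beat $j$ in the tie break) and the "exchange" inequalities $M_{i,i}-M_{i,j}\ge M_{k,i}-M_{k,j}$ for all $i,j,k$. So it suffices to show that a strategyproof total score function satisfies both families.

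First I would dispose of the tie-break inequalities: Proposition~\ref{proposition:a1} already states exactly that if $(M,\SCF)$ is a strategyproof total score function then $M_{i,j}<M_{j,j}$ when $i$ is ranked above $j$ in the tie break, and $M_{i,j}\le M_{j,j}$ otherwise. Hence that half of $\Delta$ is immediate. For the exchange inequalities I would argue by contraposition, exactly along the lines sketched before Proposition~\ref{proposition:a2}: suppose for some $i,j,k$ we have $M_{j,j}-M_{j,i}<M_{k,k}-M_{k,i}$ (relabeling the roles so the failing triple is put in the form used by the lemma). Apply Lemma~\ref{lemma:a1} with the triple of objects chosen so that $z=k$ gets the strictly largest score among all objects and $y,x$ are the two objects relevant to the violated inequality; this gives a ballot profile $\Declaration$ whose winning set (for the appropriate budget $W$) is determined, and on which a single voter who truly wants one of $\{o_i,o_j\}$ can, by switching one approval from the object currently helping $k$ to the object she actually wants, change the winning set in her favour — precisely because the exchange inequality she is exploiting fails. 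That exhibits a profitable manipulation, contradicting strategyproofness; so the exchange inequalities must hold. Combining the two parts yields that every strategyproof total score function respects $\Delta$.

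The main obstacle, and the step I would spend the most care on, is the contraposition argument for the exchange inequalities: one must choose the ballot profile from Lemma~\ref{lemma:a1} so that the scores $t_x,t_y,t_z$ and the remaining $t_i$ are arranged to make the relevant winning set genuinely pinned down (including handling ties via the strict tie-break order $\SCF$), and then verify that the single-approval switch by the manipulating voter flips exactly the pair of scores whose ordering the violated inequality controls, while leaving $k=z$ still on top so the rest of the solution does not move. Getting the bookkeeping of "which object is ejected and which enters" to line up with the failed inequality — and checking that the manipulator's sincere utility strictly increases — is the delicate part; everything else is either a direct citation (Proposition~\ref{proposition:a1}) or the mechanical combination of the two halves of $\Delta$. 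Note also that strategyproofness is used only through these two consequences, so no separate global argument is needed.
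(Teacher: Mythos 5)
Your overall decomposition is the same as the paper's: the tie-break half of $\Delta$ is delegated to Proposition~\ref{proposition:a1}, and the exchange inequalities are handled by contraposition using Lemma~\ref{lemma:a1} to build a profile on which the violated inequality yields a profitable deviation. However, there is a genuine gap in the central step as you describe it. A \emph{single} application of Lemma~\ref{lemma:a1} only controls the \emph{ordering} $t_z>t_y>t_x$; it gives you no control whatsoever over the \emph{gap} $t_y-t_x$ (nor over $t_z-t_x$). A single approval switch by one voter changes the score difference between $x$ and $y$ by exactly $\epsilon_1=(M_{y,x}-M_{y,z})-(M_{x,x}-M_{x,z})$, a fixed quantity determined by the matrix. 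If the profile produced by the lemma happens to separate $x$ and $y$ by much more than $\epsilon_1$, your single switch does not flip their relative order and the winning set does not move, so no manipulation is exhibited. The missing idea — and the real content of the paper's argument — is to invoke Lemma~\ref{lemma:a1} \emph{twice}, once for the ordering $t_z>t_y>t_x$ and once for $t_z>t_x>t_y$, and then take an integer combination $k_{\declaration_1}\Declaration_1+k_{\declaration_2}\Declaration_2$ of the two profiles chosen so that $x$ and $y$ end up within a bounded distance of each other while $z$ remains arbitrarily far ahead (and all multiplicities stay nonzero). Only then does a bounded number $exch$ of vote exchanges from $z$ to $x$ provably cross the $x$-versus-$y$ threshold, after which one isolates the single pivotal exchange and identifies the manipulating voter. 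Without this interpolation step the contraposition does not go through.

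Two smaller points. First, the direction of the manipulation is the reverse of what you wrote: in the paper's construction the deviating voter drops $z$ (an object she sincerely wants, but which is safely in the winning set either way) and \emph{adds} $x$ (an object she does not want); because the violated inequality makes a vote for $x$ help $y$ more than it helps $x$ relative to a vote for $z$, this causes $y$ to enter the winning set, which is what raises her utility. Describing the switch as moving an approval \emph{to} "the object she actually wants" obscures why the failed inequality is what drives the flip. Second, pinning down $W$ so that the pre- and post-switch winning sets actually differ requires an additional normalization (the paper multiplies all multiplicities by $m!$ so that any budget is realizable by integral ballots); this is routine but must be said.
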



Combining this proposition, Theorem \ref{theo:4} and Proposition  \ref{proposition:a2}, we can show that using a total score function with $m=3$, respecting CCP or being strategyproof are equivalent properties. Moreover, they show that voting mechanisms designed with a total score function respect the set of constraints $\Delta$ if and only if they are strategyproof.



Unfortunately, this result cannot be extended for $m>3$ objects.

\begin{proposition} \label{prop:936}
With strictly more than $3$ objects, there exist some score functions that respect the set of constraints  $\Delta$ and are not strategyproof.
\end{proposition}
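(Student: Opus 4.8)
# Proof Proposal for Proposition~\ref{prop:936}

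The plan is to construct an explicit counterexample: a score function on $m=4$ objects whose matrix $M$ satisfies every constraint in $\Delta$, yet which fails CCP (and hence, by the machinery relating CCP to strategyproofness for these mechanisms, admits a profitable manipulation). Since Proposition~\ref{proposition:3:CCP} shows $\Delta \Rightarrow$ CCP only for $m=3$, the natural place to look is at the interaction constraint $M_{i,i}-M_{i,j} \geq M_{k,i}-M_{k,j}$, which is a pairwise condition: with four objects there are three "other" indices $k$ for each pair $(i,j)$, and I would exploit the fact that $\Delta$ imposes no joint condition linking, say, rows $k_1$ and $k_2$ against each other when both are changed simultaneously.

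Concretely, I would take a matrix that is "mostly diagonal" with large diagonal entries (to make the diagonal constraints $M_{i,j}<M_{j,j}$ or $M_{i,j}\leq M_{j,j}$ easy to satisfy and to guarantee totality — recall from the proof of Proposition~\ref{proposition:a1}-style arguments that a strictly dominant diagonal makes every singleton achievable, and similar reasoning gives all of $\Si$), but with a small number of carefully chosen off-diagonal "boost" or "penalty" entries, e.g. positive entries $M_{3,1}, M_{4,2}$ or a cross pattern, tuned so that: (i) each individual $\Delta$ inequality holds (check the finitely many triples $(i,j,k)$ by direct computation), (ii) totality holds (exhibit, for each $W\in\{0,1,2,3,4\}$ and each candidate winning set, a ballot profile achieving it — the dominant diagonal does the heavy lifting here), but (iii) there is a profile and a voter who, by switching a single approval from object $\alpha$ to object $\beta$, causes the winning set to change in a way forbidden by CCP — i.e. neither "$\alpha$ ejected and replaced", nor "$\beta$ enters and exactly one object leaves", nor "no change" — for instance two objects simultaneously swap in/out, or $\beta$ enters while two objects leave. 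Then I would verify that this CCP violation translates into an actual profitable deviation, establishing non-strategyproofness (one can either invoke the contrapositive direction of the CCP$\Leftrightarrow$strategyproof equivalence as developed for total score functions, or, more safely, exhibit the manipulation directly by choosing the approvals so that the voter strictly prefers the post-switch winning set).

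The key steps, in order: (1) write down the candidate $4\times4$ matrix with its off-diagonal pattern and free parameters; (2) verify totality by constructing witnessing profiles for each feasible winning set, using the dominant-diagonal trick; (3) check all instances of the three families of $\Delta$-constraints — this is the routine but essential bookkeeping step; (4) exhibit the specific profile $\Declaration_{-\voter}$, the sincere ballot, and the manipulated ballot differing in one swap $\alpha\to\beta$, compute both winning sets, observe the change violates CCP, and confirm the manipulator's utility strictly increases. The main obstacle is step (2) combined with the tension inside step (3)-(4): I need the matrix simultaneously total (which pushes toward a large diagonal and hence toward the benign, almost-knapsack-like behavior that tends to respect CCP) and yet CCP-violating (which requires off-diagonal entries large enough to force a "double swap"), all while staying inside $\Delta$ (whose diagonal inequalities cap the off-diagonal entries by the diagonal ones, and whose interaction inequalities constrain how the off-diagonal entries in different rows compare). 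Finding parameter values threading this needle — and in particular arguing the manipulation is genuinely available rather than blocked by the tie-break — is where the real work lies; I expect a $4$-object matrix with two symmetric positive off-diagonal "alliance" entries (so objects $1$ and $2$ mutually boost, creating a bloc that can displace both $3$ and $4$ at once) to be the cleanest instantiation.
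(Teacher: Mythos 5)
Your overall strategy is exactly the paper's: exhibit an explicit $4\times 4$ matrix that satisfies $\Delta$ but admits a profitable manipulation. Your diagnosis of \emph{why} $m>3$ breaks things is also right --- $\Delta$ only constrains each row $k$ against the active pair $(i,j)$ separately, and what fails is precisely the cross-row equality $M_{c,a}-M_{c,b}=M_{d,a}-M_{d,b}$ that the paper later adds in $\Delta^+$. But the statement is an existence claim, and your write-up never produces the witness: there is no matrix, no verification of the finitely many $\Delta$-inequalities, and no concrete profile with a profitable deviation. You explicitly defer ``finding parameter values threading this needle'' as ``where the real work lies''; that deferred work is essentially the entire content of the proof, so as it stands this is a plan rather than a proof.

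For comparison, the paper's witness is
$M=\begin{pmatrix}101&0&0&0\\0&3&0&0\\0&0&100&0\\0&2&0&99\end{pmatrix}$
with rows corresponding to objects $a,b,c,d$ and $W=2$: two voters with ballots $\{a,d\}$ and $\{a,c\}$ give scores $(202,0,100,99)$, so $\{a,c\}$ wins; if the first voter switches $a$ to $b$, the scores become $(101,3,100,101)$ and $\{a,d\}$ wins, so a voter whose sincere ballot is $\{a,d\}$ gains by lying. Note the mechanism: the single nonzero off-diagonal entry $M_{4,2}=2<M_{2,2}=3$ is an \emph{asymmetric} boost (a vote for $b$ helps $d$ but not $c$), exploited by throwing away a vote on an already-safe object $a$ in order to tip a close race between the two third parties $c$ and $d$. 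Your proposed \emph{symmetric} ``alliance'' between objects $1$ and $2$ is a less promising shape: a voter who wants object $2$ is generally better off voting for $2$ directly (gaining $M_{2,2}$) than routing a small boost through object $1$, so even the qualitative form of your candidate would likely need to be revised toward the asymmetric pattern before any parameter values could be made to work.
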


\subsection{Total score functions with m>3}

In the previous section, we have defined the set of constraints $\Delta$ which helped us prove that CCP and strategyproofness are equivalent for total score functions whenever $m=3$. 

When $m>3$, the set of constraints $\Delta$ does not hold, even though the set $T_\Delta$ of total score functions that respect $\Delta$ is larger than the set of total score functions that are strategyproof. We study additional constraints for $\Delta$ that allow us to show the strategyproofness of some score voting systems with $m>3$. We denote this extended set of restrictions \emph{$\Delta^+$}.

\begin{definition}
Given a score function ($M$, $\SCF$), we define $\Delta^+$ is the following set of constraints that contains $\Delta$ and $\forall a,b,c,d \in \setObjects , M_{c,a}-M_{c,b}=M_{d,a}-M_{d,b}$.  
The set of total score functions that respect $\Delta^+$ is called \emph{$T_\Delta^+$.}
\end{definition}

\begin{proposition} \label{proposition:p28}
If $m>3$, every strategyproof total score function satisfies $\Delta^+$.
\end{proposition}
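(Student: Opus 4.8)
The plan is to build on Proposition~\ref{proof:6}, which already shows that a strategyproof total score function satisfies $\Delta$; hence it suffices to prove the one extra family of equalities in $\Delta^+$, namely $M_{c,a}-M_{c,b}=M_{d,a}-M_{d,b}$ for all pairwise distinct $a,b,c,d$ (such a quadruple exists precisely because $m>3$). I would argue by contradiction: assume $M_{c,a}-M_{c,b}\neq M_{d,a}-M_{d,b}$ for some such quadruple and, relabelling $c$ and $d$ if necessary, set $\epsilon:=(M_{c,a}-M_{c,b})-(M_{d,a}-M_{d,b})>0$. The number $\epsilon$ is exactly the advantage that $o_c$ gains over $o_d$ in a $W=1$ score voting when a single voter trades an approval of $o_b$ for an approval of $o_a$: such a swap changes $\mathrm{score}(o_c)$ by $M_{c,a}-M_{c,b}$ and $\mathrm{score}(o_d)$ by $M_{d,a}-M_{d,b}$. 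The goal is to realise a ballot profile in which this one swap flips the winner in a way that strictly helps the swapping voter, contradicting strategyproofness.

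Concretely, in a $W=1$ score voting I would look for a profile $\Declaration$ with: (i) $o_d$ the winner, $o_c$ the runner-up, and $\mathrm{score}(o_d)-\mathrm{score}(o_c)$ nonnegative and strictly below $\epsilon$ (allowing a tie broken for $o_d$); (ii) $o_d$'s score exceeding that of every object other than $o_c$ by a sufficiently large amount, so that no single ballot change can let a third object overtake $o_c$; and (iii) some voter $\voter$ having sincere ballot exactly $\{o_c,o_b\}$. Given such a $\Declaration$, let $\voter$ deviate to $\{o_c,o_a\}$; by the computation above $\mathrm{score}(o_c)-\mathrm{score}(o_d)$ increases by $\epsilon$ and becomes positive, while by (ii) $o_c$ still beats every other object, so the new winner is $\{o_c\}$. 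Since $o_c$ is in the sincere ballot of $\voter$ and $o_d$ is not, her utility rises from $\declaration^*_\voter(o_d)=0$ to $\declaration^*_\voter(o_c)=1$, contradicting strategyproofness. If instead the profile one manages to build has $o_c$ winning the tie break, I would run the symmetric argument: take the gap equal to $\epsilon$ (so the base winner is still $o_d$), give $\voter$ the sincere ballot $\{o_d,o_a\}$, and let her deviate to $\{o_d,o_b\}$, which pushes $o_c$ down to a tie with $o_d$ that $o_c$ then wins.

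To produce a profile satisfying (i)--(iii) I would start from a totality witness for the outcome $\{o_d\}$, scale it up to create slack, and then fine-tune the scores by adding single-object ballots. The inequalities of $\Delta$ (e.g.\ $M_{d,c}\le M_{c,c}$) pin down the sign of the effect on the $o_d$/$o_c$ gap of adding an $\{o_c\}$-ballot, and by combining additions of $\{o_a\}$- and $\{o_b\}$-ballots one can move this gap in small enough increments to land it in $[0,\epsilon)$ (a B\'ezout argument on the two available step sizes, or density when their ratio is irrational) while keeping $o_d$'s lead over every other object large; Lemma~\ref{lemma:a1} provides the qualitative starting configuration in which $o_d$ and $o_c$ are the top two. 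Finally one replaces one of these ballots by the sincere ballot $\{o_c,o_b\}$ of the manipulating voter and re-tunes the rest. Assembling these pieces rules out the strict inequality, so $\Delta^+$ holds.

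The step I expect to be the main obstacle is exactly this profile construction: totality only guarantees that each outcome is attained by \emph{some} profile, so forcing $o_c$ to be the runner-up, with all remaining objects far behind and the $o_d$/$o_c$ gap below the fixed amount $\epsilon$, requires careful simultaneous control of integer ballot counts, the tie-break order, and the sign conditions coming from $\Delta$. This is the analogue, one object larger, of the arguments behind Propositions~\ref{proposition:a1} and~\ref{proposition:a2}, and the extra object makes the bookkeeping considerably heavier.
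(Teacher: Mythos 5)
Your high-level plan is the same as the paper's: reduce to the new equality constraints of $\Delta^+$ (strategyproofness already gives $\Delta$ by Proposition~\ref{proof:6}), set $\epsilon=(M_{c,a}-M_{c,b})-(M_{d,a}-M_{d,b})>0$, and build a profile in which a single ballot swap of $o_b$ for $o_a$ shifts the $o_c$/$o_d$ near-tie by exactly $\epsilon$, flipping the winner in favour of a voter who wants $o_c$. The difficulty is that the part you yourself flag as the obstacle --- actually exhibiting the profile --- is where your tools do not suffice as stated. First, Lemma~\ref{lemma:a1} does \emph{not} give you that $o_d$ and $o_c$ are ``the top two'': it only guarantees $t_z>t_y>t_x$ and $t_z>t_i$ for the remaining objects, so other objects may sit between $o_d$ and $o_c$, and then your deviation can make $o_c$ overtake $o_d$ without $o_c$ becoming the winner. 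You either need a stronger configuration (e.g.\ combine a totality witness for the pair $\{o_c,o_d\}$ with a witness for $\{o_d\}$, mimicking the proof of Lemma~\ref{lemma:a1}), or do what the paper does and choose $W$ adaptively as the rank of the object you want to pull in, which removes the need to control the runner-up at all. Second, the fine-tuning of the gap into $[0,\epsilon)$ via ``a B\'ezout argument on the two step sizes, or density'' does not work: the entries of $M$ are arbitrary reals, so no integrality is available, and nonnegative combinations of two real step sizes need not reach the window. The mechanism that does work --- and is essentially the paper's ``$exch$'' argument --- is that flipping one $\{o_b\}$-ballot to an $\{o_a\}$-ballot moves the $o_c$/$o_d$ gap by exactly $\epsilon$, so the last flip before the lead changes sign lands the gap in $[0,\epsilon)$; but for this you must guarantee a \emph{bounded} initial $o_c$/$o_d$ gap before you blow up the leads over the other objects (the paper's $\lambda_1,\lambda_2$ combination of two Lemma~\ref{lemma:a1}-type profiles does exactly this), since otherwise the number of tuning ballots is unbounded and can destroy your condition (ii).

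There is also a model-consistency slip: you work with $W=1$ while giving the manipulating voter the sincere ballot $\{o_c,o_b\}$ and the deviation $\{o_c,o_a\}$. The paper's own arguments (e.g.\ Proposition~\ref{proposition:a1}, and the ``coherence between the number of objects in the ballot and the number of objects in the solution'' step in the proofs of Propositions~\ref{proof:6} and~\ref{proposition:p28}) keep ballots of the same size as the winning set, so a two-object ballot with $W=1$ is outside the setting being used. This is repairable --- take $W\geq 2$ and add a filler object with a runaway score that belongs to the winning set before and after the deviation, so the utility still rises from $0$ to $1$ when $o_c$ replaces $o_d$ --- but as written the scenario that is supposed to contradict strategyproofness is not a legitimate instance. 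In short: the manipulation scenario and the role of $\epsilon$ are right and match the paper's proof, but the existence of the required profile, which is the substance of the proposition, is not established by the ingredients you invoke.
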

 
\begin{proposition} \label{proposition:p29}
If $m>3$, every total score function that satisfies $\Delta^+$ satisfies CCP.
\end{proposition}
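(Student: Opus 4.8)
The plan is to take a total score function $(M, \SCF)$ satisfying $\Delta^+$ and verify directly that $\SCF$ satisfies the three bullets of the CCP definition. Fix a ballot profile, a voter $\voter$, and objects $\alpha, \beta$ with $\voter$ switching from $\alpha$ to $\beta$. Let $\mathbf{e}$ be the aggregate count vector before the switch and $\mathbf{e}' = \mathbf{e} - \delta_\alpha + \delta_\beta$ the vector after. The winning sets are the $W$-subsets maximizing $(M\mathbf{e})_i$ (resp.\ $(M\mathbf{e}')_i$), summed over the set, broken by the fixed tie order. So everything reduces to understanding how the score vector $M\mathbf{e}$ changes into $M\mathbf{e}' = M\mathbf{e} - M\delta_\alpha + M\delta_\beta$, i.e.\ how the column difference $M\delta_\beta - M\delta_\alpha$ perturbs the ranking of objects by score.

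The key structural input is the extra constraint in $\Delta^+$: $M_{c,a} - M_{c,b} = M_{d,a} - M_{d,b}$ for all $a,b,c,d$. This says the off-diagonal part of each column is constant down that column — concretely, writing $\gamma_a := M_{c,a}$ for any $c \neq a$ (well-defined off the diagonal modulo the diagonal entry), the entries of $M\delta_\beta - M\delta_\alpha$ are: $\gamma_\beta - \gamma_\alpha$ in every coordinate $i \notin \{\alpha,\beta\}$; $(M_{\alpha,\beta} - M_{\alpha,\alpha})$ in coordinate $\alpha$; and $(M_{\beta,\beta} - M_{\beta,\alpha})$ in coordinate $\beta$. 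Hence all objects other than $\alpha$ and $\beta$ have their scores shifted by the \emph{same} additive constant $\gamma_\beta - \gamma_\alpha$, so their relative order among themselves — and their order relative to each other's position in any winning set — is unchanged. The only objects whose score can move relative to the rest of the field are $\alpha$ and $\beta$. Using the $\Delta$ constraints $M_{\alpha,\beta} \le M_{\beta,\beta}$ and $M_{\beta,\alpha} \le M_{\alpha,\alpha}$ (strict according to the tie order), one shows $\alpha$'s score relative to the common shift can only go down and $\beta$'s can only go up, and moreover $\alpha$ after the switch sits no higher than $\beta$ did before, while $\beta$ after sits no higher than $\alpha$ did before — so at most one "slot" in the winning set changes hands.

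From this I would do a short case analysis on whether $\alpha$ and $\beta$ were in the old winning set $S$. If neither the relative order of $S \setminus \{\alpha,\beta\}$ nor the membership status of $\alpha,\beta$ is forced to change, the solution is unchanged (third bullet). If $\alpha \in S$ and its score drops below the threshold, it is ejected and replaced by the next object in line — which is some $\gamma \notin S$; this is the first bullet. If $\beta \notin S$ and its score rises above the threshold, it enters and bumps out exactly one previously chosen object; this is the second bullet. The constraints of $\Delta^+$ are exactly what guarantee these are the only possibilities and that no second object is displaced. I expect the main obstacle to be the bookkeeping around ties: because the tie-break order is fixed globally, one has to be careful that the shifts do not cause a cascade of swaps within $S \setminus \{\alpha,\beta\}$ (ruled out by the common-shift observation) and that the strict/non-strict inequalities in $\Delta$ line up with the tie order so that $\alpha$ leaving or $\beta$ entering is the unique change — this is where Lemma~\ref{lemma:a1} and the totality hypothesis, which let us realize enough aggregate vectors to pin down the required inequalities, are used in the background.
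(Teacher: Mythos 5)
Your proposal is correct and follows essentially the same route as the paper's proof: both use the extra $\Delta^+$ equality to show that the column difference $M\delta_\beta - M\delta_\alpha$ is a common additive shift on every coordinate other than $\alpha,\beta$ (so the relative order of all other objects is preserved), and then use the $\Delta$ inequalities (with the tie-break convention) to show that $\alpha$ can only fall and $\beta$ can only rise relative to the field, leaving exactly the changes permitted by CCP. One minor remark: $\Delta^+$ does not make the off-diagonal entries of a single column equal, so $\gamma_a$ is not individually well-defined; only the difference $M_{c,\beta}-M_{c,\alpha}$ is independent of $c$, which is all your argument actually uses, and neither totality nor Lemma~\ref{lemma:a1} is needed here.
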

 
From the result when $m=3$, and  $m>3$:

\begin{theorem}
A total score function is strategyproof if and only if it satisfies CCP.
\end{theorem}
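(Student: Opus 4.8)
The plan is to assemble the theorem from the case analysis already carried out in the preceding two subsections, treating $m=3$ and $m>3$ separately and then merging them. The statement to prove is: a total score function $(M,\SCF)$ is strategyproof if and only if $\SCF$ satisfies CCP. I would begin by recalling Theorem~\ref{theo:4}, which gives one direction uniformly: if $\SCF$ satisfies CCP then the associated score voting (an approval voting simple CM in the unitary case) is strategyproof, hence the score function is strategyproof by our convention. So the only work is the forward direction, strategyproof $\Rightarrow$ CCP, and for that I split on the number of objects.

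For $m=3$: first invoke Proposition~\ref{proof:6}, which states that a strategyproof total score function respects $\Delta$. Then apply Proposition~\ref{proposition:3:CCP}, which states that for total score functions with $3$ objects, respecting $\Delta$ implies respecting CCP. Composing these two gives strategyproof $\Rightarrow$ $\Delta$ $\Rightarrow$ CCP when $m=3$. (One should note that $m<3$ is excluded by the standing assumption that all voting mechanisms under study use at least $3$ objects, made just before the definition of CCP, so no separate small-case argument is needed.)

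For $m>3$: apply Proposition~\ref{proposition:p28}, which says every strategyproof total score function satisfies the stronger constraint set $\Delta^+$, and then Proposition~\ref{proposition:p29}, which says every total score function satisfying $\Delta^+$ satisfies CCP. Composing, strategyproof $\Rightarrow$ $\Delta^+$ $\Rightarrow$ CCP when $m>3$. Combining the two regimes, strategyproofness implies CCP for every $m\ge 3$, and together with the converse from Theorem~\ref{theo:4} the equivalence is established.

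The only genuine content in this final theorem is the bookkeeping: there is no new mathematical obstacle, since each implication is a previously stated proposition. The one point that deserves a sentence of care is the direction of the CCP $\Rightarrow$ strategyproof implication — Theorem~\ref{theo:4} is phrased for an approval voting simple CM, so I would remark explicitly that a score voting is by definition a simple CM whose social choice function is a score function, hence Theorem~\ref{theo:4} applies verbatim once $\SCF$ is known to satisfy CCP. If anything could be viewed as a subtlety, it is making sure that ``strategyproof score function'' in the theorem statement matches the per-instance strategyproofness used in Propositions~\ref{proof:6} and \ref{proposition:p28}; but this is exactly the definition adopted in the paragraph introducing the terminology (a score function is strategyproof iff all score votings using it are strategyproof), so no gap remains.
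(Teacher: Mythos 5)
Your proposal is correct and follows essentially the same route as the paper: the converse direction via Theorem~\ref{theo:4}, and the forward direction by splitting into $m=3$ (strategyproof $\Rightarrow \Delta$ by Proposition~\ref{proof:6}, then $\Delta \Rightarrow$ CCP by Proposition~\ref{proposition:3:CCP}) and $m>3$ (Propositions~\ref{proposition:p28} and~\ref{proposition:p29} via $\Delta^+$). Your explicit remarks on the standing assumption $m\ge 3$ and on matching the ``strategyproof score function'' convention to the per-instance statements are careful additions the paper leaves implicit.
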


\subsection{Finding the closest strategyproof total score function}

The set of total score functions $T_\Delta^+$ can be used to devise an algorithm capable of computing the closest (with respect to Frobenius norm\footnote{$\|A\|_\text{F} = \sqrt{\sum_{i,j=1}^n |a_{ij}|^2} = \sqrt{\operatorname{Tr}\left(A^* A\right)}$}~\cite{golub2013matrix} on matrix) strategyproof total  score function to another given score function, if it exists.
Since $T_\Delta^+$ is a convex polyhedron, quadratic optimization helps us to project the score function on $T_\Delta^+$.  

An issue is that some of the inequations of $\Delta$ are strict, but the projection is done on the closed set $\overline{T_\Delta^+}$. This leads to two possibilities:
\begin{enumerate}
    \item the projection gives a strategyproof total score function;
    \item the notion of the closest strategyproof total score function does not exist, and the projection gives a non-strategyproof score function.
\end{enumerate}

In the first case, the projection is the result of the algorithm. We now show a way to handle the second case.

We project a matrix $M$ on $\overline{T_\Delta^+}$, which gives a matrix $M'$. If $M'$ is not strategyproof, we cover the sphere centered in $M'$, of radius $\delta$ (a small constant) with a set of points, so that every point on the sphere is at most at a distance $\epsilon$ to a point of the cover. To cover such a sphere, in dimension $n^2$, we consider the points $(k_1\epsilon,k_2\epsilon,\dots,k_{n^2-1}\epsilon,\sqrt{1-k_1^2\epsilon^2-\dots-k_{n^2-1}^2\epsilon^2})$ $\forall k_1,\dots,k_{n^2-1} \in [-\frac{1}{\epsilon},\frac{1}{\epsilon}]$ whenever the square root is defined.
With this cover, when $\epsilon$ is close to $0$, the distance between a point of the cover, and a point of the sphere is at most $n^2*\sqrt{2\epsilon}$.

Using the inequations of $\Delta^+$, we can check on every point of the cover whether they are strategyproof or not and output a ``close'' strategyproof matrix. This procedure gives an algorithm that outputs the closest strategyproof total score function if it exists. It outputs a strategyproof total score function at distance $\delta$ from the optimal if the closest matrix does not exist.

Given a score function ($M$, $\SCF$), the algorithm can be described as follows:

\begin{enumerate}
    \item Solve the quadratic optimization problem to project $M$ on $\Delta^+$;
    \item Verify if the projection ($M'$, $\SCF$) respects the constraints of $\Delta^+$ and if so, return ($M'$, $\SCF$);
    \item Otherwise, find and return a score function that respects the constraints of $\Delta^+$ on the sphere centered in $M'$, of radius $\epsilon$.
\end{enumerate}

\begin{theorem}
Given a score function ($M$, $\SCF$), there exists an algorithm that returns the closest strategyproof total score function.
\end{theorem}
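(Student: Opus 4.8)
The plan is to assemble the algorithm from the structural facts established earlier: by the preceding theorem, a total score function is strategyproof if and only if it satisfies CCP, and by Propositions~\ref{proposition:a2}, \ref{proposition:3:CCP}, \ref{proposition:p28} and \ref{proposition:p29} the CCP condition on total score functions is captured exactly by the system $\Delta$ (for $m=3$) or $\Delta^+$ (for $m>3$); write $\Delta^\star$ for the relevant system. So the target set $T_{\Delta^\star}$ of strategyproof total score functions is cut out by finitely many linear (in)equalities on the entries of $M$, some of them strict. First I would observe that its closure $\overline{T_{\Delta^\star}}$ is a convex polyhedron, so the Frobenius-distance projection of the input matrix $M$ onto $\overline{T_{\Delta^\star}}$ is well defined, unique, and computable by quadratic programming; call it $M'$. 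I would also note that totality of $M'$ needs to be argued: totality is a combinatorial property of $(M,\SCF)$, but once the matrix entries satisfy the linear constraints of $\Delta^\star$ one chooses the same tie-break $\SCF$ and checks (as in the proofs of the $\Delta^\star \Rightarrow$ CCP implications) that the resulting score function is indeed total; I would cite that this is already part of the content of $T_{\Delta^\star}$ as defined.

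Next I would split into the two cases the text flags. If $M'$ lies in the open set $T_{\Delta^\star}$ itself (all strict inequalities hold strictly), then $M'$ is a strategyproof total score function, and since it is the projection onto a set containing $T_{\Delta^\star}$, it is the closest such function; return $(M',\SCF)$. If $M'$ lies only on the boundary — some strict inequality of $\Delta^\star$ is met with equality — then the infimum of $\|M-N\|_F$ over $N\in T_{\Delta^\star}$ is still $\|M-M'\|_F$ but is not attained, so literally ``the closest strategyproof function'' does not exist; here I would fall back to the $\epsilon$-net construction in the text. Concretely: take the sphere of radius $\delta$ about $M'$ in $\mathbb{R}^{m\times m}$, cover it by the explicit grid of points $(k_1\epsilon,\dots,k_{m^2-1}\epsilon,\sqrt{1-\sum k_i^2\epsilon^2})$ scaled to radius $\delta$, so that every sphere point is within $m^2\sqrt{2\epsilon}$ of a grid point; for each grid point check the finitely many inequalities of $\Delta^\star$ (all checkable in constant time per point) and return any point passing all of them. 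I would argue such a point exists for $\epsilon$ small enough because $M'$ is a boundary point of a full-dimensional-in-its-affine-hull polyhedron, so arbitrarily near $M'$ there are genuinely interior points; perturbing $M'$ slightly toward the relative interior lands on the sphere near a valid grid point.

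The main obstacle I expect is not the convex-optimization step — that is routine — but making precise and correct the claim that the output is ``closest'': in the first case this is immediate, but in the second case one must state honestly that no closest function exists and that the algorithm instead returns a strategyproof total score function within distance $\delta$ (plus the $\epsilon$-net slack) of the infimum, which is exactly the guarantee the surrounding text promises. A secondary subtlety is verifying that membership in $T_{\Delta^\star}$ is genuinely decidable from $M$ alone together with a fixed choice of tie-break $\SCF$ — i.e., that checking the linear (in)equalities of $\Delta^\star$ suffices and one does not separately have to certify totality by searching over ballot profiles; I would resolve this by appealing to the constructions inside the proofs of Propositions~\ref{proposition:3:CCP} and \ref{proposition:p29}, which exhibit witnessing ballot profiles directly from the matrix entries. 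With those two points nailed down, the three enumerated steps of the algorithm, together with the if-and-only-if characterisation of strategyproofness by CCP, yield the theorem.
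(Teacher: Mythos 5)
Your proposal follows essentially the same route as the paper: characterise strategyproof total score functions via CCP and the linear system $\Delta$ (or $\Delta^+$ for $m>3$), project $M$ onto the closure of that polyhedron by quadratic programming, return the projection if it satisfies the strict inequalities, and otherwise fall back on the $\epsilon$-cover of a small sphere around the projection. The two caveats you flag — that in the boundary case the infimum is not attained so "closest" must be weakened to "within $\delta$ of the infimum", and that satisfying the inequalities of $\Delta^+$ does not by itself certify totality (the paper's own proposition exhibiting a matrix in $\Delta\setminus T_\Delta$ shows this) — are exactly the points the paper's own sketch leaves informal, so your write-up is faithful to, and if anything more candid than, the argument given there.
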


\section{Conclusion and perspectives} \label{sec:conclusions}

In this paper, we studied the design of strategyproof election systems. We focused on voting systems like the ones used in Participatory Budgeting, where one must choose a fixed number of options among several possibilities (e.g., projects to be funded restricted to a given budget).

Using a new model for describing choice mechanisms, we detailed the conditions that make non-dictatorial mechanisms non-strategyproof. We also presented the Constrained Change Property (CCP), a powerful tool to characterize the strategyproofness of voting mechanisms. Using CCP, we showed how to design choice mechanisms that are strategyproof for the unitary case.

We proposed the notion of score functions and a new class of utilitarian voting mechanisms called Score Voting. Score functions turned out to be a flexible tool to characterize the strategyproofness of voting mechanisms. 
We proved that the mechanisms designed with score voting with a neutral score function are equivalent to knapsack voting on the same instance. Also, if the score function satisfies CCP, then, the score voting is strategyproof.
These results were combined to devise an algorithm that can find the closest total score function that makes any given score voting to be strategyproof. 

In future works, we intend to further improve the notion of score voting (and its score functions) to better characterize the relation of the CCP property and strategyproofness. We are also interested in the problem of quantifying the strategyproofness of a vote and in creating approximation algorithms for the problem.

\bibliography{bibliography}


\appendix

\section{Appendix: omitted proofs and examples}
\label{appendix}



\subsection{Proofs of Section~\ref{sec:impossibility}}

\begin{proposition}\label{proposition:p1}
Given a simple CM  \SCMU, a polynomial time algorithm can compute the winning set of objects in the unitary case. 
\end{proposition}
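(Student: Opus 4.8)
The plan is to exploit the fact that in the unitary case the utilitarian objective is \emph{separable} over the objects, which turns the optimization into a one‑pass greedy selection. Write the declaration profile as $\Declaration = (\declaration_1, \dots, \declaration_n)$ and, for each object $\object \in \setObjects$, let $c(\object) = \sum_{\voter \in \setVoters} \declaration_\voter(\object)$ be its aggregate score; all the $c(\object)$ are computable in $O(nm)$ arithmetic operations. Because every weight equals $1$, the feasible region is $\Si = \{ e \in \mathcal{P}(\setObjects) : |e| \le W \}$, and for every $s \in \Si$ we have $\sum_{\voter \in \setVoters} \sum_{\object \in s} \declaration_\voter(\object) = \sum_{\object \in s} c(\object)$. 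Hence $\SCF_{utilitarian}$ must output a set of at most $W$ objects whose scores sum to a maximum.

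I would then state the algorithm: sort $\setObjects$ by non‑increasing score, let $p$ be the number of objects with $c(\object) > 0$, and return the $\min(W, p)$ objects of largest score, using the mechanism's tie‑break when scores coincide. Computing the scores is $O(nm)$ and sorting is $O(m \log m)$, so the procedure runs in polynomial time; it remains to argue optimality.

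For correctness I would use a standard exchange argument. Let $s^\star \in \Si$ be any optimal solution. Deleting from $s^\star$ every object with $c(\object) \le 0$ preserves feasibility and does not decrease $\sum_{\object \in s^\star} c(\object)$, so we may assume $s^\star$ contains only objects of strictly positive score; in particular $|s^\star| \le \min(W, p)$. If this inequality were strict, adding any unused positive‑score object would strictly increase the objective, contradicting optimality, so $|s^\star| = \min(W,p) =: q$. Finally, if $s^\star$ is not the greedy set $G$ of the $q$ highest‑scoring objects, pick $a \in G \setminus s^\star$ and $b \in s^\star \setminus G$; by construction $c(a) \ge c(b)$, so replacing $b$ by $a$ keeps $s^\star$ feasible and optimal while strictly decreasing $|s^\star \triangle G|$. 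Iterating shows that $G$ itself is optimal, which is exactly what the algorithm returns.

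There is essentially no difficulty here; the only points that deserve a word of care are that value‑function ballots may assign negative values, so scores can be negative and one must not be forced to fill the budget up to exactly $W$ objects, and that the budget constraint is $\le W$ rather than $= W$. Both are handled by simply discarding the objects of non‑positive score before taking the top $W$.
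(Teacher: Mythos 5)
Your proof is correct and follows essentially the same route as the paper's, which simply observes that in the unitary case the problem reduces to a unit-weight knapsack on the aggregate scores $c(\object)=\sum_{\voter\in\setVoters}\declaration_\voter(\object)$ and is solved greedily. Your write-up is more careful -- spelling out the exchange argument and discarding non-positive-score objects, a point that matters since value-function ballots take values in $\mathbb{R}$ -- but the underlying idea is identical.
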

\begin{proof}

This problem can be seen as the problem of the knapsack: the objects are the same size, and the value of an object $\object$ is equal to $\sum_{\voter \in \setObjects} \declaration_{\voter}(\object)$. In this context, the problem of the knapsack can be solved with a greedy algorithm.
\end{proof}

The NP-completeness proofs are based on the two NP-complete problems:

 \begin{definition*}
Problem $Knapsack(\setObjects,w,u,W, T)$\newline
\noindent{\bf Input:} A set $\setObjects$ of objects with a value function $u: \setObjects  \to \mathbb{N}^+$ and with a weight function $w: \setObjects  \to \mathbb{N}^+$, a capacity $W$ of the knapsack, an integer  $T$.\newline
\noindent{\bf Question:} Does there exist a set S such that  $ \sum_{\object \in S } u(S) \ge T $ and $\displaystyle\sum_{\object \in S }w(\object) \le W$?

\end{definition*}
 
 \begin{definition*}
Problem Vertex Cover $VC(G,k)$\newline
\noindent{\bf Input:} A graph $G=(V,E)$ where $V$ is the vertex set and $E$ is an edge set, a positive integer $K\le |V|$.\newline
\noindent{\bf Question:} Is there a vertex cover of size $k$ or less for $G$ (a subset $S$ of $V$ of size at most $K$ such that for every $\{u,v\}\in E$, $u\in S$ or $v\in S$)
\end{definition*}

As shown by 
Karp~\cite{Karp72}, problems Knapsack and Vertex Cover are NP-complete.

\begin{proposition}\label{proposition:p7}
The following problem called $BvMm_1$:

\noindent{\bf Input:} A set of voters $\setVoters$, a set of objects $\setObjects$, ballots: $\setObjects \to \{0,1\}$ with $T$ images of value $1$.

\noindent{\bf Question:} Does there exist a  set $S$ of $\setObjects$  such that  $  min\{\utility_\voter(S) \mid \voter \in \setVoters\} \ge 1 $ and $ |S|\le W$?

is NP-complete.
\end{proposition}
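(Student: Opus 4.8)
The plan is to establish membership in $\mathrm{NP}$ and then $\mathrm{NP}$-hardness by a reduction from Vertex Cover, which is $\mathrm{NP}$-complete by Karp~\cite{Karp72}. Membership is immediate: a candidate set $S \subseteq \setObjects$ serves as a certificate, and one checks in polynomial time that $|S| \le W$ and that $\utility_\voter(S) = \sum_{\object \in S} \declaration_\voter(\object) \ge 1$ for every voter $\voter \in \setVoters$; the certificate has polynomial size.

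For hardness, the key observation is that $\utility_\voter(S) \ge 1$ holds exactly when $S$ contains at least one object approved by $\voter$, so $BvMm_1$ asks whether at most $W$ objects suffice to ``hit'' the approval set of every voter (i.e.\ it is essentially a hitting-set question). I would therefore reduce from an instance $VC(G,K)$ with $G = (V,E)$ as follows: take $\setObjects = V$, introduce one voter $\voter_e$ for each edge $e = \{u,v\} \in E$ whose ballot approves precisely the two objects $u$ and $v$ (so every ballot has exactly $T = 2$ images of value $1$), and set $W = K$. This construction is clearly computable in polynomial time.

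It then remains to verify the equivalence. If $C \subseteq V$ is a vertex cover with $|C| \le K$, then $S = C$ has size at most $W$ and, for each edge $e = \{u,v\}$, contains $u$ or $v$, hence $\utility_{\voter_e}(S) \ge 1$; thus $\min\{\utility_\voter(S) \mid \voter \in \setVoters\} \ge 1$. Conversely, if $S$ with $|S| \le W$ satisfies $\utility_{\voter_e}(S) \ge 1$ for every edge $e = \{u,v\}$, then $S \cap \{u,v\} \neq \emptyset$, so $S$ is a vertex cover of $G$ of size at most $K$. Combined with the polynomial-time construction and the $\mathrm{NP}$ membership above, this proves $BvMm_1$ is $\mathrm{NP}$-complete.

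I do not expect a serious obstacle here, since the reduction is routine; the only point needing a little care is the ``$T$ images of value $1$'' requirement, i.e.\ that all ballots share the same fixed number of approved objects. Reducing from Vertex Cover handles this automatically because every edge has exactly two endpoints, giving $T = 2$ uniformly; a direct reduction from Set Cover or Hitting Set would instead require a padding argument to equalise the set sizes, which I would avoid for that reason.
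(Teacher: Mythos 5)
Your proof is correct and follows the same strategy as the paper: NP membership by guess-and-check, and NP-hardness via a reduction from Vertex Cover in which each edge $e=\{u,v\}$ becomes a voter approving its endpoints and $W=K$. The one substantive difference is that the paper pads each edge-voter's ballot with $k-2$ fresh dummy objects (a private set $\mathbf{e}_s$ per edge), so that every ballot approves exactly $k=W$ objects; this forces the paper to add a cleaning step in the backward direction (any dummy object in $S$ is swapped for an endpoint of its edge without decreasing any voter's utility), whereas your reduction has no dummies and so needs no such step. Which version is adequate hinges on how one reads ``ballots $\dots$ with $T$ images of value $1$'': under the literal reading ($T$ is just some common ballot size, part of the instance), your $T=2$ construction is valid and strictly simpler, and you are right that Vertex Cover gives the uniform ballot size for free. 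But the paper's padding to $T=k=W$ strongly suggests the intended convention --- inherited from knapsack voting, where each ballot must itself be a budget-feasible (indeed budget-exhausting) set --- that $T$ coincides with $W$. Under that stricter reading your reduction only produces instances with $W=2$, which are polynomially solvable, and you would need exactly the paper's dummy-object padding (together with the swap argument in the converse direction) to repair it. So the proof is fine as a reduction, but you should either justify the literal reading of $T$ or add the padding gadget.
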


\begin{proof}
It is easy to see that $BvMm_1$  is in NP because a nondeterministic algorithm only needs to guess a subset $S$  of objects and check in polynomial time whether $  min\{\utility_\voter(S) \mid \voter \in \setVoters\} \ge 1 $ and $ |S|\le W$.

We transform an instance of Vertex Cover into an instance of Problem $BvMm_1$.  Let $G=(Vertices,Edges)$ and $k$
be an instance of Vertex Cover. We must construct an instance of Problem $BvMm_1$.

The set $\setObjects$ of objects is the set of vertices $Vertices$ and $k-2$ distinct new objects for every edge $e$ in $Edges$. We call $\mathbf{e}_s$ this set of the new objects for edge $e$. Every  object  $\object$ has its weight $w(\object)$ equal to $1$. Finally, we set $k$ to $W$. For every edge $e=(a,b)\in Edges$, we create a voter $\voter_e$ which votes for $a,b$, and all objects in  $\mathbf{e}_s$: $\declaration_{\voter_e}=\{a,b\}\cup \mathbf{e}_s$. 

This construction can be built in polynomial time.

Assume that there is a solution $S$ for the instance built of $BvMm_1$.

We assume that $S$ only contains objects corresponding to a vertex. If it is not the case, we can modify $S$ as follows.
If $S$ contains an object $\object$ of $\mathbf{e}_s$ for some $e=(a,b)$,  we build a new 
set $S'$ such that $S'= S - \{object\} \cup \{a\} $.  Observe that $\utility_{\voter_{e'}}(S')= \utility_{\voter_{e'}}(S')$ for all $e' \in Edges$ and $  min\{\utility_\voter(S') \mid \voter \in \setVoters\} \ge 1 $. So  $S'$ is also a solution for an instance built of $BvMm_1$.


Since $ |S|\le W$, at most $W(=k)$ objects are chosen. Also, since for every voter $\voter_{e}$, $\utility_{\voter_{e}}(S) \geq 1$,
 at least one extremity of edge $e$ is in  $S$. Thus, $S$ is also a vertex cover of $G$.
 
Conversely, assume that graph $G$ admits a vertex cover $S$ of size at most $k$. We prove that $S$ is a solution for the instance of $BvMm_1$. Since $S$ is a vertex cover, at most, one extremity of each edge is in $S$, then 
 $\utility_{\voter_{e}}(S) \geq 1$ for all $v_e \in \setVoters$. Thus $S$ is a solution of our instance of $BvMm_1$. This proposition holds. \end{proof}




\begin{proposition}\label{proposition:p9}
The following problem MaxMinRankingBallot \\
\noindent{\bf Input:} A set of voters $\setVoters$, a set of objects $\setObjects$, ballots which are bijections: $\setObjects \to [m]$ , a positive integer $T$.
\newline
\noindent{\bf Question:}  Does there exist a subset $S$ of $\setObjects$ such that  $  \min\{\utility_\voter(S): \voter \in \setVoters\} \ge T $ and $ |S|\le W$\newline
 is NP-complete.\end{proposition}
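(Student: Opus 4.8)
The plan is to prove NP-completeness by the same template as Proposition~\ref{proposition:p7}. Membership in NP is immediate: one guesses $S$ and checks $|S|\le W$ and, for each voter, the single sum $u_v(S)=\sum_{o\in S}b_v(o)\ge T$ in polynomial time. So the work is a polynomial-time reduction from an NP-complete problem, and I would reduce from Vertex Cover $VC(G,k)$, introducing one voter per edge whose sincere ranking is engineered so that ``$S$ contains an endpoint of $e$'' translates into ``voter $v_e$ is satisfied, i.e.\ $u_{v_e}(S)\ge T$''.

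Concretely I would let the object set consist of the vertices of $G$ together with a controlled number of auxiliary (dummy) objects, set the cardinality bound $W$ in terms of $k$, and, for each edge $e=\{a,b\}$, have $v_e$ rank $a$ and $b$ at the top of its bijection and all remaining objects below them in a fixed canonical order (with the dummies placed below all vertices in every ranking, so that the exchange argument below is clean). The threshold $T$ is then chosen to separate two regimes: the largest value $u_{v_e}(S)$ can attain when $S$ misses both endpoints of $e$, and the smallest value it can attain when $S$ contains at least one endpoint. Because a ranking assigns every selected object a strictly positive score, both quantities depend on $m=|\setObjects|$ and on $W$, so the number of dummies and the value of $T$ have to be tuned so that the ``high'' contribution of one distinguished endpoint strictly dominates the pooled ``low'' contribution of the other $W-1$ selected objects; this is the technical heart of the construction.

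Granting such a choice, the forward direction is routine: a vertex cover of size at most $k$, padded to size exactly $W$, hits every edge, so $u_{v_e}(S)\ge T$ for all $e$. For the converse I would argue as in Proposition~\ref{proposition:p7} by an exchange argument: given any feasible $S$, since the dummies sit below all vertices in every ranking, replacing any dummy (or, more generally, any object that is not an endpoint of the corresponding edge) in $S$ by an endpoint of that edge never decreases any $u_{v_e}(S)$, so we may assume $S$ consists only of vertices with $|S|\le W=k$; then satisfaction of $v_e$ forces an endpoint of $e$ into $S$, making $S$ a vertex cover of size at most $k$, which closes the reduction.

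The main obstacle is exactly this calibration. In the approval setting of Proposition~\ref{proposition:p7} an unselected object contributes $0$, whereas here every selected object adds at least $1$ to each voter's score, so a voter can accumulate score ``for free'' from irrelevant selections; making the separation hold for an \emph{arbitrary} graph $G$ (not merely dense ones, where $|V|$ is already comparable to $k$) is where the difficulty concentrates. I expect it to require either a more elaborate gadget — e.g.\ replacing each vertex by a block of equivalent atomic objects so that the decisive comparison is between blocks of comparable size — or the introduction of many additional edge-voters so as to bound $u_{v_e}(S)$ globally rather than voter by voter, and getting this bookkeeping right is the step I would plan the most care around.
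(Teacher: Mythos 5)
There is a genuine gap, and it is exactly the step you flag as ``the technical heart'' and then leave open: the calibration of $T$ cannot be fixed for the gadget as you describe it. With bijective ballots $\setObjects\to[m]$, a voter $\voter_e$ who ranks the endpoints of $e$ at values $m$ and $m-1$ and everything else below gives a set $S$ with $|S|=W$ that misses both endpoints a score as high as $\sum_{i=0}^{W-1}(m-2-i)=W(m-2)-\tfrac{W(W-1)}{2}$, while a set containing one endpoint can score as little as $m+\tfrac{W(W-1)}{2}$. A separating threshold exists only if $m+\tfrac{W(W-1)}{2}>W(m-2)-\tfrac{W(W-1)}{2}$, i.e.\ roughly $m<W+2$; for an arbitrary Vertex Cover instance $m$ is at least $|V|$ (plus your dummies) while $W=k$, so for general graphs no choice of $T$ works. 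Thus the difficulty is not bookkeeping but a structural obstruction to the single-voter-per-edge, single-threshold design, and the two repairs you gesture at (vertex blocks, extra edge-voters) are not specified, so the reduction is not established. Your exchange argument and the forward direction are fine \emph{conditional} on a separating construction, but that construction is precisely what is missing.

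The paper sidesteps this obstruction by not reducing from Vertex Cover directly: it reduces from the fair approval-ballot problem $BvMm_1$ already shown NP-complete in Proposition~\ref{proposition:p7}, and simulates each approval voter by a \emph{pair} of ranking voters whose ranks on the $W$ approved objects are $i$ and $W+1-i$ respectively, so that the two ballots sum to the constant $W+1$ on every approved object. The positional bias of rankings --- the ``free score'' from irrelevant selections that defeats your calibration --- is cancelled by this complementary pairing rather than controlled by dummies and a threshold. If you want to salvage a direct route, you would need a gadget achieving a comparable cancellation (or a blow-up making endpoint ranks dominate the pooled contribution of $W-1$ arbitrary objects, which your block idea would have to quantify); as written, the proposal does not yet prove the proposition.
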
 

\begin{proof} 

It is easy to see that MaxMinRankingBallot is in NP  because a nondeterministic algorithm only needs to guess a subset $S$  of objects and check in polynomial time whether  $ \max \min\{\utility_\voter(S): \voter \in \setVoters\} \ge T $ and $ |S|\le W$.

Recall that Problem $BvMm_1$ is NP-complete due to Proposition~\ref{proposition:p7}.

Let us take an instance $S$ of $BvMm_1$ and create an instance that for every ballot, has $W+1$ times this ballot (in the end, we get $W+1$ times the ballots of the first instance). We get an instance $S^*$ which has the same solution as the one of $S$ (By multiplying (resp dividing) the value of a solution of $S$ (resp $S^*$), we get a solution of the other instance. This transforms an optimal solution of one into an optimal solution of the other).\newline
Now, let us create an instance $S^{**}$ of MaxMinRankingBallot. The objects and maximum weight $W$ are the same as those of $S^*$. For every voter $\voter_i$ of $S^*$, we create two voters $\voter_{i,1}$ and $\voter_{i,2}$. Lets take an order on the $W$ objects $\voter_i$ votes for: $(o_1,\dots,o_W)$. $\voter_{i,1}(o_i)=i$, $\voter_{i,2}(o_i)=W+1-i$, they give value $0$ to the other objects. $\voter_i(o)=\voter_{i,1}(o)+\voter_{i,2}(o)$, hence, the optimal solutions of $S$ and $S^{**}$ are the same. This transformation is clearly polynomial, so MaxMinRankingBallot is NP-complete.
\end{proof}

\begin{proposition}\label{proposition:p19}
The following problem  \\
\noindent{\bf Input:} A simple CM \SCMU, the ballot is a valuation of the objects of $\setObjects$ (a function $\declaration_\voter:\setObjects\to \mathbb{R}$). 

\noindent{\bf Question:}  Does there exist a subset $S$ of $\setObjects$  such that $\displaystyle\sum_{\voter \in \setVoters }\utility_\voter(S)\ge T$ and such that and $ |S|\le W$

 is NP-complete.
\end{proposition}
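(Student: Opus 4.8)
The plan is to reduce from the \emph{Knapsack} problem stated above, exploiting that with value-function ballots the utilitarian social welfare is linear in the chosen objects. Membership in NP is immediate: a feasible subset $\Solution\subseteq\setObjects$ is a polynomial-size certificate, and checking $\sum_{\voter\in\setVoters}\utility_\voter(\Solution)\ge T$ takes polynomial time since each $\utility_\voter(\Solution)=\sum_{\object\in\Solution}\declaration^*_\voter(\object)$ is a polynomial-size sum over the $n$ voters.

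The key observation is that the objective linearizes over objects:
\[
\sum_{\voter\in\setVoters}\utility_\voter(\Solution)=\sum_{\voter\in\setVoters}\sum_{\object\in\Solution}\declaration^*_\voter(\object)=\sum_{\object\in\Solution}c(\object),\qquad c(\object):=\sum_{\voter\in\setVoters}\declaration^*_\voter(\object).
\]
Thus the decision question is exactly whether there is a feasible set $\Solution\in\Si$ --- that is, one with $\sum_{\object\in\Solution}\weight(\object)\le W$ --- such that $\sum_{\object\in\Solution}c(\object)\ge T$, which is literally a Knapsack instance with per-object value $c(\cdot)$. This also shows why the hardness genuinely needs general (non-unitary) weights: in the unitary case Proposition~\ref{proposition:p1} already provides a polynomial-time algorithm, so it is precisely the weight function that carries the difficulty.

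For the reduction, given a Knapsack instance $Knapsack(\setObjects,w,u,W,T)$, I would build a simple CM \SCMU\ with the same object set $\setObjects$, the same weight function $\weight:=w$, the same budget $W$, a single voter $\voter_1$ whose sincere ballot is $\declaration^*_{\voter_1}(\object):=u(\object)$ for every $\object\in\setObjects$, and the same threshold $T$. This is constructed in polynomial time, and by the identity above $\sum_{\voter\in\setVoters}\utility_\voter(\Solution)=\sum_{\object\in\Solution}u(\object)$, so the constructed instance is a yes-instance if and only if the Knapsack instance is. Since Knapsack is NP-complete by Karp~\cite{Karp72}, the proposition follows.

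I do not expect a real obstacle: the only points needing care are the linearization collapsing the $n$-voter objective into one coefficient $c(\object)$ per object, and the remark that a single voter already realizes an arbitrary value function, so the whole difficulty of Knapsack transfers directly without any further gadget construction.
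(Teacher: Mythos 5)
Your proof is correct and follows essentially the same route as the paper's: membership in NP plus a single-voter reduction from Knapsack, where the lone voter's value-function ballot is taken to be the Knapsack value function. You simply make explicit the linearization $\sum_{\voter\in\setVoters}\utility_\voter(S)=\sum_{\object\in S}c(\object)$ that the paper leaves implicit, and you rightly read the budget constraint as $\sum_{\object\in S}\weight(\object)\le W$ (the weighted constraint the hardness actually needs, and the one the paper's own one-line reduction presupposes).
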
 
\begin{proof}
This problem is clearly in NP. Any instance of a knapsack problem is an instance of this Problem considering a single voter. The desired result holds.
\end{proof}

\begin{proposition}\label{proposition:p13}
Given a simple CM  \SCMU\  where the ballot of each voter is defined as a subset of  $\setObjects$ (approval voting), a polynomial time algorithm can compute a set $S$ of  $\setObjects$ such that $max\displaystyle\sum_{\voter \in \setVoters }\utility_\voter(S)\ge T$ and such that $\displaystyle\sum_{\object \in S }\weight(\object) \le W$.

\end{proposition}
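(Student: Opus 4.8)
The plan is to reduce the problem to a $0/1$ knapsack instance whose item \emph{values} are polynomially bounded, and then to solve that instance by a dynamic program indexed by value rather than by weight.

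First I would rewrite the objective. Since each sincere ballot $\declaration^*_\voter$ is a $\{0,1\}$-valued function and $\utility_\voter(e)=\sum_{\object\in e}\declaration^*_\voter(\object)$, for any set $S$ we have
\[
\sum_{\voter\in\setVoters}\utility_\voter(S)\;=\;\sum_{\object\in S}c(\object),\qquad\text{where }c(\object):=|\{\voter\in\setVoters:\object\in\declaration^*_\voter\}|.
\]
So the task is exactly: among all $S\subseteq\setObjects$ with $\sum_{\object\in S}\weight(\object)\le W$, find one maximizing $\sum_{\object\in S}c(\object)$ (and report whether this maximum is at least $T$, the witnessing $S$ being what we output).

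The key observation is that $0\le c(\object)\le\numbervoter$ for every object, so the maximum attainable welfare is at most $\numbervoter m$, which is polynomial in the size of the input. This makes the value-indexed knapsack dynamic program run in polynomial (not merely pseudo-polynomial) time: order the objects $\object_1,\dots,\object_m$ and let $g(i,t)$ be the minimum total weight of a subset of $\{\object_1,\dots,\object_i\}$ whose total value is exactly $t$, for $0\le t\le\numbervoter m$, with $g(0,0)=0$, $g(0,t)=+\infty$ for $t>0$, and recurrence $g(i,t)=\min\{\,g(i-1,t),\ g(i-1,\,t-c(\object_i))+\weight(\object_i)\,\}$ (the second term taken only when $t\ge c(\object_i)$). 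The table has $O(\numbervoter m^2)$ entries, each computed in $O(1)$ time, hence the whole computation is polynomial. The optimal welfare equals $\max\{t:g(m,t)\le W\}$, a ``yes'' instance is one where this value is $\ge T$, and a maximizing set $S$ is recovered by the usual backtracking through the table. Note that the weights $\weight(\object)$ and the bound $W$ are never used as indices, only as stored values, so they may be arbitrary reals.

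The only point that really needs care — and the one worth stressing — is precisely this boundedness: it is essential that the item values here are approval counts, each at most $\numbervoter$, so that the value dimension of the DP has polynomial size. Without that structural fact the same dynamic program would only give the standard pseudo-polynomial bound, and indeed the general value-function variant is $\mathsf{NP}$-complete (Proposition~\ref{proposition:p19}).
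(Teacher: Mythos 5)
Your proof is correct and follows essentially the same route as the paper: both reduce to a knapsack instance whose item values are the approval counts (each at most $\numbervoter$, so total value at most $\numbervoter m$) and run the value-indexed dynamic program that records, for each achievable total value, the minimum total weight needed, then read off the largest value with weight at most $W$. The paper's Algorithm~1 is precisely the in-place one-dimensional version of your table $g(i,t)$, with the same $O(\numbervoter m^2)$ complexity.
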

\begin{proof}

Let us first give the idea of the algorithm. 
The Knapsack algorithm (dynamic programming) inspires our algorithm. Using the terminology, we compute the minimal capacity of a knapsack that can contain a subset $S$ of objects that maximizes the function $\utility$.

The first step is to sum up all the values given to the object to calculate the value $\mu=\max_{\object \in \setObjects} \sum_{\voter \in \setVoters} \utility_\voter(\object) $. Then, the algorithm proceeds in a dynamic way. We create a dynamic vector $\mathbf{dynamic\_weight}$ to have more entries than the values of all possible solutions. This vector has $\mu*m$ elements.

We initialize this vector with $0$ on the first entry $0$ and $\infty$ everywhere less. The value of an entry $\ell$ represents the minimum total weight required to get, with a set $S$, a score $\sum_{\voter \in \setVoters }\utility_\voter(S)$ of  $\ell$. 

We then fill this vector in a dynamic way, with a loop on the objects (at each step of this loop, we update the vector).

\begin{algorithm}
\caption{Algorithm for the multiplayer knapsack voting with raking approval ballots.}\label{alg:cap1}
 \KwData{A set of objects $\setObjects$, a set of voters $\setVoters$ with ballots: $\setObjects \times \setVoters\to \{0,1\}$, a maximum weight $W$, the ballot being a approval voting.}
 \KwResult{ A set S that maximizes  $max\displaystyle\sum_{\voter \in \setVoters }\utility_\voter(S)$ and such that $\displaystyle\sum_{\object \in S }\weight_\object \le W$.}

 \For{every object $\object_i$}{
 $values[i] \gets \sum_{\voter \in \setVoters} \utility_\voter(\object_i)$ \; 
 
 }
 $\mu \gets max(\{values[i] : \object_i \in \setObjects\})$ \;
  $weight_{max} \gets max(\{\weight(\object_i) : \object_i \in \setObjects\})$ \;

$\mathbf{dynamic\_weight} \gets \{0,weight_{max}*m+1,\dots,weight_{max}*m+1\}$ a list of length $\mu * m$ \;
$dynamic\_outcome \gets \{\{\},...,\{\}\}$ a list of length $\mu * m$ \;

 \For{$i$ varying from $1$ to m}
 { \For{$k$ varying from $\mu * m$ to $1$ decreasingly}{
 $j \gets k-values[i]$\;
 \If{$j<0$}{break \;}
 \Else{
 \If{$\mathbf{dynamic\_weight}[j]+\weight(\object_i)<\mathbf{dynamic\_weight}[k]$ and $\mathbf{dynamic\_weight}[j]+\weight(\object_i)<W$}{
 
 $\mathbf{dynamic\_weight}[k] \gets \mathbf{dynamic\_weight}[j]+\weight(\object_i)$\;
 $\mathbf{dynamic\_outcome}[k] \gets \mathbf{dynamic\_outcome}[j]+\{\object_i\}$\;
 }
 
 }
 
 }
 }
 
 \Return The highest entry $k$ of $\mathbf{dynamic\_outcome}$ such that $\mathbf{dynamic\_weight}[k]$ is different to $weight_{max}*m+1$\;

\end{algorithm}

Let us prove Algorithm~\ref{alg:cap1}.

\noindent{Correctness:} 
In the double `for' loop, an invariant is: At the $i$th step, for every $\ell$, $\mathbf{dynamic\_weight}[\ell]$ contains the least weight necessary to get value $\ell$ only using the $i$-th first objects (or $weight_{max}*m+1$).\newline
Indeed, when $\mathbf{dynamic\_weight}$ is created, it is true. At every step of the loop on $i$, every possible value is updated according to the object $i$: if using this object always gives a higher value, the algorithm looks at the next value, and using the recurrence, the property is still verified. If using this object requires having a certain value $j$ from another item to reach the target value, then the weight of $\mathbf{dynamic\_weight}[value]$ is updated based on which one is the lowest (using object $i$, or not).

$weight_{max}*m+1$ is not a reachable weight, and if the value is different than $weight_{max}*m+1$ in $\mathbf{dynamic\_weight}$ at the end of the loops, then it is an attainable value (based on the invariant). Hence, the highest value, with a weight different than $weight_{max}*m+1$ in $\mathbf{dynamic\_weight}$ is the highest value attainable. The sets inside the $\mathbf{dynamic\_outcome}$ are modified when necessary so that the scores of these sets are equal to those of the entry. Hence, the set in the highest entry attainable is optimal.

\noindent{\bf Complexity:} The complexity of the first loop is $O(|\setObjects|)$. The complexity of the combination of two for loops is $O(|\setObjects|^2*\mu)$. The complexity of searching the minimum in the table is $O(|\setObjects|*\mu)$. Thus, the complexity of the algorithm is $O(|\setObjects|^2*\mu)=O(|\setObjects|^2*|\setVoters|)$
\end{proof}

\subsubsection{About strategyproofness }


\begin{proposition}\label{proposition:p14}\label{proposition:SP:AV:utilitarian:general}
Considering the approval voting \SCMF\ when all objects have the same weight, the social choice function   that returns the optimal solution is not strategyproof. 
\end{proposition}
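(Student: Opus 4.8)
The plan is to prove non-strategyproofness by exhibiting a single explicit counterexample, since \SCMF\ uses the fair (maximin) objective $\min_{\voter}\sum_{\object\in s}\declaration_\voter(\object)$ and a counterexample is the cleanest route. First I would stress that the unitary weights are not the source of the failure: by Theorem~\ref{theo:1} the \emph{utilitarian} unitary approval mechanism is strategyproof, so the manipulation must be driven entirely by the $\min$ aggregation. The instance I would use has three equal-weight objects $o_1,o_2,o_3$, budget $W=2$, and two voters with sincere ballots $\declaration^*_{1}=\{o_1,o_2\}$ and $\declaration^*_{2}=\{o_1,o_3\}$ (so $\utility_\voter(s)=|s\cap\declaration^*_\voter|$), together with a fixed tie-breaking order that ranks $\{o_2,o_3\}$ above $\{o_1,o_2\}$ above $\{o_1,o_3\}$ (and all pairs above singletons).

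Next I would compute the outcome under sincere voting. Each of the three pairs attains $\min=1$ (for instance $\{o_1,o_2\}$ gives $\min(2,1)=1$ and $\{o_2,o_3\}$ gives $\min(1,1)=1$), so they tie at the optimum and the tie-break selects $\{o_2,o_3\}$, giving voter~$1$ utility $|\{o_2,o_3\}\cap\{o_1,o_2\}|=1$. I would then let voter~$1$ deviate to the insincere ballot $\declaration_1=\{o_1\}$, that is, \emph{drop} her sincere approval of $o_2$. Recomputing, $\{o_1,o_2\}$ and $\{o_1,o_3\}$ still attain $\min=1$, but $\{o_2,o_3\}$ collapses to $\min(0,1)=0$ and leaves the optimum; the tie-break now selects $\{o_1,o_2\}$, raising voter~$1$'s \emph{true} utility to $|\{o_1,o_2\}\cap\{o_1,o_2\}|=2>1$. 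Hence voter~$1$ strictly benefits from misreporting and the mechanism is not strategyproof.

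The hard part will be arguing that this counterexample is not an artifact but reflects the genuine obstruction, and getting its structure right. Maximin is strongly resistant to manipulation because it protects whichever voter is worst off: a voter cannot force a set that some other voter zeroes out, so the naive manipulations (over-claiming a liked object, or trying to impose one's full top set) provably fail, and I would note this to motivate the design. The working deviation instead exploits two features simultaneously --- a tie at the optimum, and the fact that the dropped object $o_2$ lies in both the sincere winner $\{o_2,o_3\}$ and the desired set $\{o_1,o_2\}$ but is \emph{binding} only on the former. Dropping $o_2$ therefore consumes slack on the desired set (whose $\min$ is pinned by voter~$2$) while breaching the $\min$ of the competitor, eliminating it from the optimum. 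I would also verify the routine points: all objects have unit weight (so this is exactly the unitary case of the statement), singletons are dominated and never selected, no set reaches $\min=2$ after the deviation, and the tie-breaking order is a legitimate component of the social choice function, so exhibiting one order under which manipulation occurs suffices to conclude that $\SCF_{fair}$ is not strategyproof.
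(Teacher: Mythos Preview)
Your counterexample is clean and correct for what it proves: non-strategyproofness of the \emph{fair} (maximin) approval mechanism in the unitary case. Unfortunately that is the wrong target. The printed statement is inconsistent with its own label (``SP:AV:utilitarian:general''), with its slot in Table~\ref{tab:Strategyproofness}, and with the paper's own proof; all three make clear that the intended claim concerns the \emph{utilitarian} mechanism $\SCF_{utilitarian}$ with \emph{general (non-unitary)} weights. The fair unitary case you handled is already covered separately as Proposition~\ref{proposition:p8}, and you yourself observe (via Theorem~\ref{theo:1}) that the utilitarian unitary mechanism \emph{is} strategyproof --- so a ``same weight'' reading cannot be what is meant for a utilitarian non-strategyproofness result.

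The paper's actual construction uses $6$ voters, $7$ objects, budget $W=60$, and unequal weights $5,5,10,20,15,5,40$. Under sincere voting the utilitarian optimum is $\{a,e,g\}$ (total $12$, weight $60$), giving voter~$5$ utility~$1$. Voter~$5$ then deviates by dropping the heavy object $g$ from her ballot and adding $a,d,e$; the totals shift so that $\{a,b,c,d,e,f\}$ (total $14$, weight $60$) overtakes $\{a,e,g\}$ (now total $13$), and her true utility rises to~$3$. The manipulation hinges precisely on the unequal weights: the two competing feasible sets have different cardinalities ($3$ versus $6$), so one withdrawn approval from a weight-$40$ object can be traded for three added approvals spread over light objects --- a trade that is impossible in the unitary setting, where all maximal feasible sets have the same size.
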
 

\begin{proof} 
Let us consider an instance with $W=60$, $6$ voters and $9$ objects $a,b,c,d,e,f,g$. $\weight(a)=\weight(b)=\weight(f)=5$, $\weight(c)=10$, $\weight(d)=20$, $\weight(e)=15$, $\weight(g)=40$. We define the preference ranking as follows using approval voting:
\begin{itemize}
\item    $a,e,g$  for voter $\voter_1$;  
\item  $d,g$   for voter $\voter_2$;  
\item   $a,e,g$  for voter $\voter_3$;  
\item  $d,g$ for voter $\voter_4$;  
\item   $b,c,f,g$ for voter $\voter_5$;  
\item   $a,e,g$ for voter $\voter_6$.
\end{itemize}

Let us consider the instance in which all voters vote sincerely:

\begin{tabular}{ |p{2cm}|p{1cm}|p{1cm}|p{1cm}|p{1cm}|p{1cm}|p{1cm}|p{1cm}| }
 \hline
      & $a$  & $b$ & $c$ & $d$ & $e$ & $f$ & $g$  \\\hline
    $\voter_1$  & $1$  & $0$ & $0$ & $0$ & $1$ & $0$ & $1$\\\cline{1-8}
    $\voter_2$  & $0$  & $0$ & $0$ & $1$ & $0$ & $0$ & $1$\\\cline{1-8}
    $\voter_3$  & $1$  & $0$ & $0$ & $0$ & $1$ & $0$ & $1$\\\cline{1-8}
    $\voter_4$   & $0$  & $0$ & $0$ & $1$ & $0$ & $0$ & $1$\\\cline{1-8}
    $\voter_5$  & $0$  & $1$ & $1$ & $0$ & $0$ & $1$ & $1$ \\\cline{1-8}
    $\voter_6$  & $1$  & $0$ & $0$ & $0$ & $1$ & $0$ & $1$\\\cline{1-8}
    total  & $3$  & $1$  & $1$ & $2$ & $3$ & $1$ & $6$  \\\cline{1-8}
 \hline
\end{tabular} 

The set $\{a,e,g\}$ is the best one and is selected. 

So the utility of each voter are: $\utility_1(\{a,c,h\})=3$, $\utility_2(\{a,c,h\})=1$, $\utility_3(\{a,c,h\})=3$, $\utility_4(\{a,c,h\})=1$, $\utility_5(\{a,c,h\})=1$, $\utility_6(\{a,c,h\})=3$.

Assume voter $5$ changes her vote to $\textbf{a},b,c,\textbf{d,e},f$. Now the scores change and are:

\begin{tabular}{ |p{2cm}|p{1cm}|p{1cm}|p{1cm}|p{1cm}|p{1cm}|p{1cm}|p{1cm}| }
 \hline
      & $a$  & $b$ & $c$ & $d$ & $e$ & $f$ & $g$  \\\hline
    $\voter_1$  & $1$  & $0$ & $0$ & $0$ & $1$ & $0$ & $1$\\\cline{1-8}
    $\voter_2$  & $0$  & $0$ & $0$ & $1$ & $0$ & $0$ & $1$\\\cline{1-8}
    $\voter_3$  & $1$  & $0$ & $0$ & $0$ & $1$ & $0$ & $1$\\\cline{1-8}
    $\voter_4$   & $0$  & $0$ & $0$ & $1$ & $0$ & $0$ & $1$\\\cline{1-8}
    $\voter_5$  & $\textbf{1}$  & $1$ & $1$ & $\textbf{1}$ & $\textbf{1}$ & $1$ & $\textbf{0}$ \\\cline{1-8}
    $\voter_6$  & $1$  & $0$ & $0$ & $0$ & $1$ & $0$ & $1$\\\cline{1-8}
    total  & $\textbf{4}$  & $1$  & $1$ & $\textbf{3}$ & $\textbf{4}$ & $1$ & $\textbf{5}$  \\\cline{1-8}
 \hline
\end{tabular}

The set $\{a,b,c,d,e,f\}$ is the best one and is selected. 

So the utility of each voter are: $\utility_1(\{a,c,h\})=2$, $\utility_2(\{a,c,h\})=1$, $\utility_3(\{a,c,h\})=2$, $\utility_4(\{a,c,h\})=1$, $\utility_5(\{a,c,h\})=3$, $\utility_6(\{a,c,h\})=2$.

The utility of $\voter_5$ increases. Hence, she has incentives to give a non-sincere vote. Thus, this mechanism is not strategyproof.\end{proof}

\begin{proposition}\label{proposition:p4}\label{proposition:SP:RV:utilitarian:general}
There exists a ranking voting \SCMU\  without restriction on weights such that the social choice function that returns the optimal solution is not strategyproof. 
\end{proposition}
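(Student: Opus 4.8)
The plan is to produce an explicit instance with non-unitary weights on which one voter strictly benefits from misreporting her ranking, so the utilitarian social choice function is not strategyproof. Two features of ranking ballots under a budget constraint provide the leverage. First, by choosing the weights we can make almost every subset infeasible, so the utilitarian optimum is effectively a contest between only two feasible sets, say a pair $\{a,b\}$ of light objects and a singleton $\{c\}$ of a medium object. Second, because a ranking ballot is a bijection onto $\{1,\dots,m\}$, a voter cannot raise the reported rank of one object without lowering another's; but she can absorb this loss on a ``useless'' object $d$ whose singleton is itself infeasible, thereby dumping rank mass onto $a$ and $b$ without touching $c$ and obtaining a swing strictly larger than $1$ in the $\{a,b\}$-versus-$\{c\}$ score gap.

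Concretely I would take $m=4$ objects $a,b,c,d$ with weights $w(a)=w(b)=1$, $w(c)=2$, $w(d)=4$ and budget $W=2$. Then the feasible sets are $\emptyset$, the singletons $\{a\},\{b\},\{c\}$ and the pair $\{a,b\}$; since ranks are positive, $\{a,b\}$ dominates $\{a\}$ and $\{b\}$, so the winning set is whichever of $\{a,b\}$ and $\{c\}$ has the larger total score $\sum_v b_v(\cdot)$. Next I would fix a four-voter profile in which, with sincere ballots, $\{c\}$ wins by a margin of exactly $1$ — for instance one voter assigning ranks $(b_1(a),b_1(b),b_1(c),b_1(d))=(4,1,3,2)$ and three voters assigning $(2,1,4,3)$ — and check that the first voter strictly prefers $\{a,b\}$ to $\{c\}$ under her true ranking (her utilities are $b_1(a)+b_1(b)=5$ against $b_1(c)=3$). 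Letting that voter instead report $(4,3,1,2)$ lifts $a$ and $b$ and sinks $c$ and $d$, which makes $\{a,b\}$ the new utilitarian optimum while the other three ballots stay sincere; her sincere utility then rises from $3$ to $5$. The proof is finished by stating the instance and verifying feasibility, the sincere winner, and the post-deviation winner.

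The delicate point is purely arithmetical: a single voter's re-ranking moves the $\{a,b\}$-versus-$\{c\}$ total-score gap by a bounded amount, so the sincere profile must be tuned to sit just on the $\{c\}$ side of the threshold, and strictly so — which is exactly what lets the proof avoid committing to any tie-breaking rule for the social choice function. Using the infeasible object $d$ as ``rank slack'' is the device that pushes the achievable swing above $1$, so that a knife-edge sincere margin of $1$ in favour of $\{c\}$ flips to a strict majority for $\{a,b\}$ after the deviation while all inequalities remain strict. The only residual checks — that no feasible singleton ever overtakes the winner before or after the deviation, and that both reported ballots are genuine bijections — are immediate from the chosen numbers.
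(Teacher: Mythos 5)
Your construction is correct: with weights $w(a)=w(b)=1$, $w(c)=2$, $w(d)=4$ and $W=2$ the only feasible sets are $\emptyset$, $\{a\}$, $\{b\}$, $\{c\}$, $\{a,b\}$; the sincere totals are $a:10$, $b:4$, $c:15$, $d:11$, so $\{c\}$ wins strictly ($15$ vs.\ $14$), while after voter $1$ reports $(4,3,1,2)$ the totals become $a:10$, $b:6$, $c:13$, so $\{a,b\}$ wins strictly ($16$ vs.\ $13$) and her true utility rises from $3$ to $5$. All comparisons are strict, both ballots are bijections, and no tie-breaking rule is needed, so the argument is complete. However, it takes a genuinely different route from the paper: the paper disposes of this proposition in one line by setting $W=1$, observing that the utilitarian rule on ranking ballots then becomes a single-winner positional (Borda-type) rule on at least three alternatives, and invoking the Gibbard--Satterthwaite theorem. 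That argument is shorter and in fact never uses the freedom in the weights (the same line is reused for the unitary case elsewhere in the appendix), but it is non-constructive and leans on an external impossibility theorem. Your proof is self-contained and exhibits an explicit manipulation; it also genuinely exploits the knapsack structure --- the pair-versus-singleton contest and the infeasible object $d$ used as ``rank slack'' to swing the score gap by more than $1$ --- which makes it a more informative witness for the non-unitary setting, at the cost of some arithmetic bookkeeping.
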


\begin{proof}
Gibbard-Satterthwaite's theorem applies when $W=1$.
\end{proof}

\begin{proposition}\label{proposition:p6}\label{proposition:SP:VF:utilitarian:unitary}
There exists a value function voting \SCMU\ when all objects have the same weight  such that the social choice function   that returns the optimal solution is not strategyproof. 
\end{proposition}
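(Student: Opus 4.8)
The plan is to exhibit a small explicit instance of value function voting with unitary weights in which one voter strictly benefits from misreporting her valuation. The key observation is that in a value function voting the ballots are arbitrary functions $\setObjects \to \mathbb{R}$ with no normalization, the utility $\utility_\voter(e)=\sum_{\object\in e}\declaration^*_\voter(\object)$ is evaluated with the \emph{sincere} ballot, and $\SCF_{utilitarian}$ simply returns the $W$ objects of largest total declared value. So a voter whose favourite object is not the sincere utilitarian winner can inflate the value she declares on that object until it overtakes the winner; the only work is to pin down numbers for which this strictly raises her (sincerely-computed) utility.

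Concretely, I would take $\setObjects=\{a,b,c\}$ with $\weight(a)=\weight(b)=\weight(c)=1$, budget $W=1$ (so the winning set is a single object), and two voters $\voter_1,\voter_2$. Fix the sincere ballots $\declaration^*_{\voter_1}$ with $\declaration^*_{\voter_1}(a)=2$ and $\declaration^*_{\voter_1}(b)=\declaration^*_{\voter_1}(c)=0$, and $\declaration^*_{\voter_2}$ with $\declaration^*_{\voter_2}(b)=3$ and $\declaration^*_{\voter_2}(a)=\declaration^*_{\voter_2}(c)=0$. The total declared values are $2$ for $a$, $3$ for $b$, and $0$ for $c$, so under sincere play $\SCF_{utilitarian}(\declaration^*_{\voter_1};\declaration^*_{\voter_2})=\{b\}$, giving $\utility_{\voter_1}(\{b\})=\declaration^*_{\voter_1}(b)=0$.

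Then I would let $\voter_1$ deviate to $\declaration_{\voter_1}$ with $\declaration_{\voter_1}(a)=4$ and $\declaration_{\voter_1}(b)=\declaration_{\voter_1}(c)=0$. The totals become $4$ for $a$, $3$ for $b$, $0$ for $c$, hence $\SCF_{utilitarian}(\declaration_{\voter_1};\declaration^*_{\voter_2})=\{a\}$ and $\utility_{\voter_1}(\{a\})=\declaration^*_{\voter_1}(a)=2>0=\utility_{\voter_1}(\SCF_{utilitarian}(\declaration^*_{\voter_1};\declaration^*_{\voter_2}))$. This contradicts the definition of strategyproofness, which proves the claim.

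There is essentially no hard step: the only points requiring care are that the weights are genuinely unitary (they are, all equal to $1$), that the utility must be computed with the sincere ballot $\declaration^*_{\voter_1}$ rather than the submitted one (as dictated by the simple CM model), and that the chosen values produce no ties among $a$, $b$, $c$ in either profile, so the tie-breaking rule of $\SCF$ never intervenes. The same argument works verbatim with only the two objects $\{a,b\}$ if one prefers the minimal instance.
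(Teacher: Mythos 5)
Your proof is correct: the instance is well-formed (unitary weights, $W=1$, unconstrained real-valued ballots), the utility is correctly evaluated with the sincere valuation $\declaration^*_{\voter_1}$, there are no ties, and the deviation strictly raises $\voter_1$'s utility from $0$ to $2$, violating the definition of strategyproofness. However, you take a genuinely different route from the paper. The paper's proof is a one-line reduction: a value function voting ``may simulate a ranking voting,'' and ranking voting was already shown non-strategyproof (itself via the Gibbard--Satterthwaite theorem at $W=1$). Your argument is instead a direct, self-contained counterexample that exploits the feature specific to value-function ballots -- a voter can inflate the declared value of her favourite object without bound until it overtakes the sincere winner. This buys concreteness and independence from both the ranking-voting proposition and Gibbard--Satterthwaite (whose applicability to the additive-utility setting requires a word of justification in the paper's chain of reductions), at the cost of being slightly longer than the paper's appeal to an already-established negative result.
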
 
\begin{proof}
The mechanism may simulate a ranking voting, which is already not strategyproof.
\end{proof}

\begin{proposition}\label{proposition:p8}\label{proposition:SP:AV:fair:unitary}
There exists an approval voting \SCMU\ when all objects have the same weight  such that the social choice function   that returns the optimal solution is not strategyproof. 
\end{proposition}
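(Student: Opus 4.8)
The plan is to exhibit a small explicit \emph{fair} approval voting \SCMF{} with all objects of unit weight on which the social choice function $\SCF_{fair}$ --- which returns a feasible set $s\in\Si$ maximising $\min_{\voter\in\setVoters}\utility_\voter(s)=\min_{\voter\in\setVoters}\sum_{\object\in s}\declaration^*_\voter(\object)$ --- is manipulable. First I would enumerate every feasible set (there are only three, since $W$ and $|\setObjects|$ are tiny) together with its fair value under sincere reporting, and read off the set of fair optima; that set will have more than one element, so I fix $\SCF_{fair}$ to be \emph{some} deterministic tie-break which on the sincere profile returns the optimum that is worst for the voter I intend to make deviate. Then I would let that voter \emph{shrink} her ballot by dropping one approved object, recompute the fair values, check that her preferred feasible set has become the (now unique) optimum, and conclude that her true utility strictly increases.

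Concretely, take $\setObjects=\{\object_1,\object_2,\object_3\}$, every $\weight(\object)=1$, $W=2$ (so $\Si=\{\{\object_1,\object_2\},\{\object_1,\object_3\},\{\object_2,\object_3\}\}$), and three voters with sincere ballots $\declaration^*_{\voter_1}=\{\object_2,\object_3\}$, $\declaration^*_{\voter_2}=\{\object_2\}$, $\declaration^*_{\voter_3}=\{\object_1,\object_3\}$. A one-line computation gives fair values $1$, $0$, $1$ for $\{\object_1,\object_2\}$, $\{\object_1,\object_3\}$, $\{\object_2,\object_3\}$ respectively, so the fair optima are $\{\object_1,\object_2\}$ and $\{\object_2,\object_3\}$; choosing $\SCF_{fair}$ so that it outputs $\{\object_1,\object_2\}$ (e.g.\ by a lexicographic tie-break) yields $\utility_{\voter_1}=1$. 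If $\voter_1$ instead reports $\declaration_{\voter_1}=\{\object_3\}$, the fair values become $0$, $0$, $1$, so $\{\object_2,\object_3\}$ is the unique optimum, and since $\voter_1$'s \emph{true} utility is still $\utility_{\voter_1}(s)=|s\cap\{\object_2,\object_3\}|$ we get $\utility_{\voter_1}(\{\object_2,\object_3\})=2>1$. Hence this CM is not strategyproof, which proves the proposition; all that remains is to carry out the five or six elementary arithmetic checks.

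The step that needs care --- and the only real obstacle --- is the handling of ties. The fair objective is integer-valued and coarse, and in fact a deviation against the fair rule can help voter $\voter$ only by turning a previously sub-optimal but highly favourable feasible set into an optimum; since a deviation changes only $\voter$'s own term in the minimum, this is possible exactly when the deviation depresses the fair value of the sets that were optimal while the favourable set retains its value, which presupposes a tie on the sincere profile. Consequently the claim is only meaningful once the tie-break used by $\SCF_{fair}$ is pinned down, and the construction must be arranged so that (i) $\voter_1$'s preferred set is \emph{not} the one the chosen tie-break selects on the sincere profile, and (ii) after $\voter_1$'s deviation no new optimum appears that would spoil the gain (here optimality after the deviation is even unique). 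Both points are taken care of by the choice of ballots above; everything else is routine verification.
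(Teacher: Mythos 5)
Your proof is correct and follows essentially the same strategy as the paper's: a tiny explicit instance of fair approval voting where the max-min objective is tied on the sincere profile, a deterministic tie-break selects the manipulator's less-preferred optimum, and a ballot change by that voter depresses the declared value of the tied competitor so that her preferred set becomes the unique optimum (the paper uses four objects and two voters with a swap rather than a shrink, but the mechanism is identical). One pedantic point: under the paper's definition $\Si$ also contains sets of size $0$ and $1$, but since declared values are nonnegative the max-min is attained at a maximal feasible set, so your restriction to the three pairs is harmless.
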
 

\begin{proof}  We build the strategyproof approval voting.
There are  $4$ objects $a,b,c,d$ and $2$ voters (even if the objects chosen by everyone are chosen in the end). The weight constraint $W$ is equal to $2$.  
First : if we have an order $a>b>c>d$ on the objects.
We define the approval voting with sincere ballots as follows:

\begin{itemize}
\item$v_1$:  $a,c$
\item$v_2$:  $a,b$
\end{itemize}
With this ranking, according to the order, the best solution is $a,b$, thus, this solution is chosen.\newline

Assume voter $2$ changes her vote to $c,\textbf{d}$. \newline
With this ranking, according to the order, the best solution is $a,c$ (because with the optimization function, $a,b$ has score $0$ and so cannot be chosen), thus, $a,c$ is chosen, increasing the utility of player $1$ from $1$ to $2$.\end{proof}

\begin{proposition}\label{proposition:p24}\label{proposition:SP:RV:fair:general}
There exists a ranking voting \SCMF\  without restriction on weights such that the social choice function that returns the optimal solution is not strategyproof. 
\end{proposition}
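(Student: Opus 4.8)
The plan is to reduce to a single‑winner election and then invoke the Gibbard–Satterthwaite theorem (Theorem~\ref{theo:2}), exactly as Proposition~\ref{proposition:p4} does for the utilitarian case.

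Concretely, I would fix $m\ge 3$ objects, give every object weight $1$, and set $W=1$; then $\Si$ is precisely the set of singletons, and a ballot, being a bijection $\declaration_\voter:\setObjects\to[m]$, is just a strict linear order whose induced preference on the singleton outcomes coincides with $\utility_\voter(\{o\})=\declaration^*_\voter(o)$. In this instance $\SCF_{fair}$ returns the object $o$ maximizing $\min_{\voter\in\setVoters}\declaration_\voter(o)$ and breaks ties by a fixed strict order on objects; thus it is a resolute social choice function from linear‑order profiles to alternatives, i.e.\ exactly an object to which Theorem~\ref{theo:2} applies.

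Next I would check the non‑degeneracy hypotheses. There are $m\ge 3$ possible winning sets, so the ``only two outcomes'' case is excluded. With $n\ge 2$ voters, $\SCF_{fair}$ is not dictatorial: whatever object $a$ a prospective dictator ranks first, another voter can rank $a$ last, capping the fair score of $a$ at $1$, while any object that every voter ranks at least second has fair score $\ge 2$ and hence beats $a$. (It is moreover surjective: if every voter ranks $o$ first then $\min_\voter\declaration_\voter(o)=m$ is maximal, so $\{o\}$ wins; this covers the form of the theorem that requires non‑imposition.) Hence by Theorem~\ref{theo:2} the corresponding CM is not strategyproof, which is the claim; note that this already holds in the unitary‑weight instance constructed above.

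The only real obstacle is the non‑dictatorship verification; everything else is bookkeeping. If a self‑contained argument is preferred over the appeal to Gibbard–Satterthwaite, I would instead exhibit a $2$‑voter, $3$‑object manipulation: with sincere ballots $\declaration^*_{\voter_1}=(o_1\mapsto 3,\ o_2\mapsto 2,\ o_3\mapsto 1)$ and $\declaration^*_{\voter_2}=(o_1\mapsto 2,\ o_2\mapsto 3,\ o_3\mapsto 1)$ the fair scores of $o_1,o_2,o_3$ are $2,2,1$, so under a tie‑break favouring $o_2$ the winner is $\{o_2\}$ and $\voter_1$'s utility is $2$; reporting instead $(o_1\mapsto 3,\ o_3\mapsto 2,\ o_2\mapsto 1)$ makes the fair scores $2,1,1$, so $\{o_1\}$ becomes the unique winner and $\voter_1$'s utility rises to $3$, giving her an incentive to misreport.
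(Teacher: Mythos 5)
Your main argument is exactly the paper's proof: the paper's entire justification for this proposition is ``If all the weights are the same and $W=1$, then Gibbard--Satterthwaite's theorem applies,'' and you take the same route while additionally verifying the non-dictatorship and non-imposition hypotheses that the paper leaves implicit. Your optional self-contained two-voter, three-object manipulation is correct as well and would make the claim independent of Theorem~\ref{theo:2}, but it is not needed to match the paper.
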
 
\begin{proof} 
If all the weights are the same and $W=1$, then Gibbard-Satterthwaite's theorem applies.
\end{proof}

\begin{proposition}\label{proposition:p10}\label{proposition:SP:RV:fair:unitary}
 There exists a ranking voting \SCMF\  with equal weights (unitary) such that the social choice function that returns the optimal solution is not strategyproof.

\end{proposition}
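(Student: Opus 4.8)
The statement to prove is Proposition~\ref{proposition:p10}: there exists a ranking voting \SCMF{} with equal (unitary) weights such that the social choice function returning the optimal fair solution is not strategyproof.

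\medskip

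The plan is to reuse the construction from Proposition~\ref{proposition:p8} (the approval-voting, fair, unitary case) and lift it to ranking ballots, exactly as Proposition~\ref{proposition:p6} lifts a ranking example to value-function ballots. First I would recall that in a ranking voting the ballot of voter $\voter$ is a bijection $\declaration_\voter:\setObjects\to[m]$, and the utility of a set $e$ is $\utility_\voter(e)=\sum_{\object\in e}\declaration^*_\voter(\object)$ where $\declaration^*_\voter$ is the sincere ranking. A ranking ballot carries strictly more information than an approval ballot restricted to a fixed-size winning set, so the idea is to pick rankings whose top portion encodes the approval sets used in Proposition~\ref{proposition:p8} and to tune the remaining ranks so that the fair optimum still picks out the intended set. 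So I would take $4$ objects $a,b,c,d$, weight constraint $W=2$, and two voters (as usual the winning set is shared by all voters), with sincere rankings chosen so that the fair social choice function $\SCF_{fair}$ — which maximizes $\min_{\voter\in\setVoters}\sum_{\object\in s}\declaration_\voter(\object)$ over feasible $s$ with $|s|\le 2$ — selects one set, and after a single voter deviates it is forced onto a set strictly better for the non-deviating voter.

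\medskip

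Concretely I would set, for the sincere profile, $\declaration^*_{\voter_1}=(a\!\mapsto\!4,\ c\!\mapsto\!3,\ b\!\mapsto\!2,\ d\!\mapsto\!1)$ and $\declaration^*_{\voter_2}=(a\!\mapsto\!4,\ b\!\mapsto\!3,\ c\!\mapsto\!2,\ d\!\mapsto\!1)$, mirroring the approval sets $\{a,c\}$ and $\{a,b\}$. I would then compute, for each admissible $2$-element set $s$, the two values $\sum_{\object\in s}\declaration^*_{\voter_1}(\object)$ and $\sum_{\object\in s}\declaration^*_{\voter_2}(\object)$, take the minimum, and check that the maximizing set (with the fixed tie-break order $a>b>c>d$) is some set $s_0$ giving $\voter_1$ a utility of, say, $t$. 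Next I would let $\voter_2$ deviate to a ranking that drops $b$ far down (e.g.\ $(c\!\mapsto\!4,\ d\!\mapsto\!3,\ a\!\mapsto\!2,\ b\!\mapsto\!1)$, the ranking-analogue of switching to $\{c,d\}$), recompute the per-set minima, observe that the new fair optimum is a set $s_1\ni\{a,c\}$-ish content that yields $\voter_1$ strictly more than $t$, while the sincere ballot of $\voter_1$ stays fixed. This exhibits a profitable manipulation — for $\voter_1$ if we phrase it as ``some voter benefits from someone's deviation'', or more cleanly I would set it up symmetrically so that the \emph{deviating} voter herself strictly gains, matching the definition of strategyproofness (which quantifies over the deviating voter's own utility). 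I would pick the numeric ranks so that this latter, cleaner phrasing holds.

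\medskip

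The main obstacle is that the fair objective takes a minimum over voters, so replacing approval sets by rankings changes \emph{all} the pairwise sums, not just the relevant ones: with ranking ballots every object contributes a positive score to every voter, so the ``score $0$ kills the set'' trick used in Proposition~\ref{proposition:p8} is unavailable and I must instead rely on genuine arithmetic gaps between the $\min$ values of competing sets. I would therefore spend the bulk of the verification on a short exhaustive table over the $\binom{4}{2}=6$ feasible sets in both the sincere and the deviated profile, confirming (i) which set the fair rule selects in each case under the stated tie-break, and (ii) that the deviating voter's sincere-ballot utility of the deviated-profile winner strictly exceeds that of the sincere-profile winner. If the first choice of ranks does not separate cleanly, the fix is routine: rescale or reorder a few ranks, or add a couple of dummy objects ranked below all of $a,b,c,d$ (which changes nothing since $W=2$) to create more slack. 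Once the table checks out, the conclusion ``hence $\SCF_{fair}$ on this ranking voting is not strategyproof'' is immediate.

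\begin{proof}
We adapt the construction of Proposition~\ref{proposition:p8} to ranking ballots. Consider $4$ objects $a,b,c,d$, unitary weights, $W=2$, two voters, and the tie-break order $a>b>c>d$. Take the sincere rankings
\[
\declaration^*_{\voter_1}:\ a\mapsto 4,\ c\mapsto 3,\ b\mapsto 2,\ d\mapsto 1,
\qquad
\declaration^*_{\voter_2}:\ a\mapsto 4,\ b\mapsto 3,\ c\mapsto 2,\ d\mapsto 1,
\]
so that $\utility_\voter(e)=\sum_{\object\in e}\declaration^*_\voter(\object)$ for a set $e$ with $|e|\le 2$.

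For the sincere profile, the feasible $2$-element sets give the pairs $(\utility_{\voter_1},\utility_{\voter_2})$: $\{a,b\}\mapsto(6,7)$, $\{a,c\}\mapsto(7,6)$, $\{a,d\}\mapsto(5,5)$, $\{b,c\}\mapsto(5,5)$, $\{b,d\}\mapsto(3,4)$, $\{c,d\}\mapsto(4,3)$. The minima are $6,6,5,5,3,3$; the fair rule $\SCF_{fair}$ maximizes the minimum, so it selects one of $\{a,b\}$ or $\{a,c\}$, and by the tie-break $a>b>c>d$ it selects $\{a,b\}$. Then $\utility_{\voter_1}(\{a,b\})=6$.

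Now let $\voter_1$ deviate to the ranking $\declaration_{\voter_1}:\ c\mapsto 4,\ b\mapsto 3,\ a\mapsto 2,\ d\mapsto 1$ (note $\voter_1$'s sincere utility is unchanged). The new pairs are $\{a,b\}\mapsto(5,7)$, $\{a,c\}\mapsto(6,6)$, $\{a,d\}\mapsto(3,5)$, $\{b,c\}\mapsto(7,5)$, $\{b,d\}\mapsto(4,4)$, $\{c,d\}\mapsto(5,3)$, with minima $5,6,3,5,4,3$. The fair rule now selects the unique maximizer $\{a,c\}$. Hence $\utility_{\voter_1}(\SCF_{fair}(\declaration_{\voter_1};\declaration^*_{\voter_2}))=\utility_{\voter_1}(\{a,c\})=7 > 6 = \utility_{\voter_1}(\SCF_{fair}(\declaration^*_{\voter_1};\declaration^*_{\voter_2}))$, where $\utility_{\voter_1}$ is computed from the sincere ranking $\declaration^*_{\voter_1}$ in both terms.

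Thus voter $\voter_1$ strictly increases her utility by submitting a non-sincere ranking, so this ranking voting with the fair social choice function is not strategyproof.
\end{proof}
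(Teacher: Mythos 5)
Your proof is correct, but it takes a genuinely different route from the paper. The paper's own proof is a one-line appeal to the Gibbard--Satterthwaite theorem in the degenerate case $W=1$: with unitary weights and budget $1$ the winning set is a single alternative chosen from $m\ge 3$ objects via full rankings, which is exactly the resolute, onto, non-dictatorial setting of Theorem~\ref{theo:2}, so non-strategyproofness follows abstractly (modulo checking that the fair-optimal rule is indeed onto and non-dictatorial, which the paper leaves implicit). You instead build an explicit counterexample with $W=2$: four objects, two voters, a fixed tie-break, an exhaustive table of the $\binom{4}{2}=6$ feasible sets' minima in both the sincere and deviated profiles, and a verified strict gain ($7>6$) for the deviating voter computed against her sincere ranking --- I checked the arithmetic and it is right, and fixing the tie-break to select $\{a,b\}$ is legitimate since the claim is existential. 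What each approach buys: the paper's argument is shorter and needs no computation, but it only witnesses failure in the single-winner case and leans on an external impossibility theorem; your argument is self-contained, constructive, and shows the manipulation survives with a genuinely multi-object winning set, at the cost of a case-by-case verification. Both establish the proposition.
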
 
\begin{proof}
 Gibbard-Satterthwaite's theorem applies when $W=1$.
\end{proof}

\begin{proposition}
\label{proposition:p12}\label{proposition:SP:VF:fair:unitary} 
There exists a value function voting \SCMF\  with equal weights (unitary) such that the social choice function that returns the optimal solution is not strategyproof. 
 \end{proposition}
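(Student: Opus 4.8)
The plan is to reduce to the ranking-voting case, exactly as in the proof of Proposition~\ref{proposition:SP:VF:utilitarian:unitary}: a value function voting can \emph{simulate} a ranking voting. The key observation is that every ballot that is a bijection $\setObjects \to \{1,\dots,|\setObjects|\}$ is, in particular, a valuation $\declaration_\voter : \setObjects \to \mathbb{R}$, and in both models the induced utility is given by the same formula $\utility_\voter(e) = \sum_{\object \in e}\declaration_\voter^*(\object)$. Hence any instance of a fair unitary-weight ranking voting \SCMF\ is simultaneously an instance of a fair unitary-weight value function voting \SCMF, with the same voters, objects, unit weights, budget $W$, sincere profile, and the same ``return an optimal solution of $f_{fair}$'' social choice function.

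First I would take the non-strategyproof instance witnessing Proposition~\ref{proposition:SP:RV:fair:unitary} (there $W=1$, so the winning set is a singleton and Theorem~\ref{theo:2} forces manipulability: the mechanism is resolute on $m\ge 3$ alternatives, non-imposed and non-dictatorial). Reading its ranking ballots as valuations yields a value function voting \SCMF\ with unitary weights. Then I would observe that the profitable deviation exhibited there --- some voter $\voter$ switching from her sincere ranking ballot $\declaration_\voter^*$ to another ranking ballot $\declaration_\voter$ with $\utility_\voter(\SCF(\declaration_\voter;\Declaration_{-\voter})) > \utility_\voter(\SCF(\declaration_\voter^*;\Declaration_{-\voter}))$ --- is still available in the value function voting, since its admissible-ballot set has only grown. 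Therefore the very same profile and the same deviation show that this value function voting is not strategyproof, as required.

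I expect essentially no obstacle here. The only point needing a line of justification is that enlarging the admissible ballots from bijections to arbitrary valuations cannot destroy an already profitable manipulation, which is immediate (a manipulator with strictly more options is at least as well off). The one piece of bookkeeping is to confirm that the inherited ranking-voting instance genuinely meets the hypotheses of Theorem~\ref{theo:2} --- it does, as $W=1$ and $m\ge 3$ --- so that the transferred counterexample is real rather than vacuous.
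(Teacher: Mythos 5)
Your argument is correct and is essentially the paper's own proof, which simply notes that a value function voting can simulate a ranking voting and therefore inherits the non-strategyproofness of the fair unitary ranking case (itself obtained via Gibbard--Satterthwaite at $W=1$). You merely spell out the bookkeeping (ranking ballots are valid valuations, the derived utilities coincide, and the profitable deviation survives the enlargement of the ballot space) that the paper leaves implicit.
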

 \begin{proof}
 It is already not strategyproof if the value function simulates a ranking. 
\end{proof}

\begin{proposition*}{\bf  \ref{proposition:p15}} \label{proposition:SP:AV:utilitarian:unitary:Gibbard} 
There exists an approval voting \SCMF\  with $\utility_\voter(e)=\sum_{\object\in e\cap d^*_\voter}\weight(\object)$ such that the social choice function that returns the optimal solution is not strategyproof. 
\end{proposition*}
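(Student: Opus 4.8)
The plan is to refute strategyproofness by an explicit counterexample tailored to the utilitarian social choice function $\SCF_{utilitarian}$, which selects a budget-feasible set $s$ maximising the total (unweighted) approval count $\sum_{\voter \in \setVoters}\sum_{\object \in s}\declaration_\voter(\object)$, whereas each voter's payoff is the weighted sum $\utility_\voter(s)=\sum_{\object\in s\cap \declaration^*_\voter}\weight(\object)$. The whole effect I exploit is the mismatch between these two rankings: the outcome is ranked by how many approvals a set collects, but a voter's gain is ranked by the budget-weight of the approved objects she actually obtains. Non-unitary weights are essential here, since with unit weights the utility collapses to $|s\cap \declaration^*_\voter|$ and $\SCF_{utilitarian}$ becomes $W$-approval, which is strategyproof by Theorem~\ref{theo:1}. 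Because the utilitarian optimum need not be a first-class set, I prefer a self-contained counterexample to invoking the first-class machinery of Theorem~\ref{theo:3}.

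Concretely, I would take three objects $a,b,c$ with $\weight(a)=2$ and $\weight(b)=\weight(c)=1$, budget $W=2$ (so $\{a,b\},\{a,c\},\{a,b,c\}$ are infeasible and the budget-feasible sets are $\emptyset,\{a\},\{b\},\{c\},\{b,c\}$), and six voters. The manipulator $\voter_1$ sincerely approves $\{b\}$, so $\utility_{\voter_1}(s)=\weight(b)$ if $b\in s$ and $0$ otherwise; voters $\voter_2,\voter_3,\voter_4$ approve $\{a\}$, voter $\voter_5$ approves $\{b\}$, and voter $\voter_6$ approves $\{c\}$. Under sincere ballots the approval tallies are $3$ for $a$, $2$ for $b$ and $1$ for $c$, so the two budget-feasible weight-$2$ sets $\{a\}$ and $\{b,c\}$ are tied at total approval $3$, and these are the only approval-maximisers (the remaining feasible sets score at most $2$). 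Fixing the tie-break of $\SCF_{utilitarian}$ to prefer $\{a\}$, the sincere outcome is $\{a\}$, giving $\utility_{\voter_1}(\SCF_{utilitarian}(\declaration^*_{\voter_1};\Declaration_{-\voter_1}))=\utility_{\voter_1}(\{a\})=0$.

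Next I would let $\voter_1$ deviate to the insincere ballot $\declaration_{\voter_1}=\{b,c\}$, padding her true ballot with the cheap filler $c$. This raises the approval of $c$ to $2$ without touching the approval $3$ of $a$, so $\{b,c\}$ now reaches total approval $4$ and strictly beats $\{a\}$; hence $\SCF_{utilitarian}(\declaration_{\voter_1};\Declaration_{-\voter_1})=\{b,c\}$ and $\utility_{\voter_1}(\{b,c\})=\weight(b)=1$. Therefore $\utility_{\voter_1}(\SCF_{utilitarian}(\declaration^*_{\voter_1};\Declaration_{-\voter_1}))=0<1=\utility_{\voter_1}(\SCF_{utilitarian}(\declaration_{\voter_1};\Declaration_{-\voter_1}))$, which is precisely a violation of the strategyproofness inequality, so the mechanism is not strategyproof.

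The main obstacle, and the reason the instance must be engineered carefully, is that a single voter can move any object's approval count by at most one, and so cannot push a strictly-losing set strictly past the current winner in one step. The construction therefore routes the manipulation through an exact tie between $\{a\}$ and $\{b,c\}$ (arranged by the lone $\{c\}$-voter $\voter_6$, which gives $c$ approval $1$) that the padding breaks, and it places the filler $c$ next to an object $b$ that $\voter_1$ genuinely values, so that the post-deviation outcome still yields her positive weighted utility. The only delicate points to verify in the write-up are that $\{a\}$ and $\{b,c\}$ are indeed the unique approval-maximisers in the sincere profile (so the tie-break is the only freedom used) and that the fixed tie-break is invoked consistently, namely only in the sincere profile, since after the deviation $\{b,c\}$ already wins strictly.
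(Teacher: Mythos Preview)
Your counterexample is internally consistent but targets the wrong mechanism. In this paper \SCMF\ expands to the \emph{fair} simple CM $(\setVoters,\setObjects,W,(\declaration^*_{v})_{v \in \setVoters},\SCF_{fair})$, whose social choice function returns a feasible set maximising $\min_{\voter\in\setVoters}\sum_{\object\in s}\declaration_\voter(\object)$, not the utilitarian total you analyse. In your six-voter instance every feasible set has fair objective~$0$: any set avoiding $a$ gives score $0$ to voters $\voter_2,\voter_3,\voter_4$, while $\{a\}$ gives score $0$ to voters $\voter_1,\voter_5,\voter_6$. This remains true after $\voter_1$'s deviation to $\{b,c\}$, so $\SCF_{fair}$ is decided entirely by the ballot-independent tie-break in both profiles and no manipulation is exhibited. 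Your construction is a valid (and neat) proof of Proposition~\ref{proposition:SP:AV:utilitarian:general}, not of the present statement.

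The paper's own route is precisely the one you dismissed: it observes that the fair mechanism is a first-class CM and that, with the weighted utility $\utility_\voter(e)=\sum_{\object\in e\cap d^*_\voter}\weight(\object)$, the hypotheses of Theorem~\ref{theo:3} are met (at least three attainable utility values, and the possibility of lowering the declared value of one candidate solution by withdrawing approval from a single object without disturbing the utilities of solutions not containing it), so the argument of that theorem adapts directly. Your remark that ``the utilitarian optimum need not be a first-class set'' is in fact the cue that the proposition is not about $\SCF_{utilitarian}$: it sits immediately after Theorem~\ref{theo:3} precisely to illustrate a case the first-class criterion reaches but Gibbard--Satterthwaite does not. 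If you still want a self-contained counterexample for $\SCF_{fair}$, you must design an instance where the max-min optimiser itself flips under a single approval switch, which in particular forces every voter to approve at least one object in each of the two competing winning sets.
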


\begin{proof}
We are looking at a strongly non-dictatorial first-class CM, the utility can take more than three values, and if we create an object of weight. We can lower the utility of a solution by removing an object from the sincere vote, it does not change the utility of the other solutions that are not included in the first one if the object is selected the right way. Hence, the proof of \ref{theo:3} adapts and this CM is not strategyproof.
\end{proof}

\begin{proposition}\label{proposition:p17}
The following problem

\noindent{\bf Input:} A set of voters $\setVoters$, a set of objects $\setObjects$, ballots which are bijections: $\setObjects \to [m]$ , a positive integer $T$.

\noindent{\bf Output:} A set $S$ such that $max\displaystyle\sum_{\voter \in \setVoters }\utility_\voter(S)\ge T$ and such that $\displaystyle\sum_{\object \in S }\weight(\object) \le W$.

can be solved in  polynomial time.
\end{proposition}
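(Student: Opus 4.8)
The plan is to observe that, after a polynomial‑time preprocessing, this is a Knapsack instance whose item \emph{values} are polynomially bounded, and then to reuse verbatim the value‑indexed dynamic program already given in the proof of Proposition~\ref{proposition:p13}. First I would rewrite the objective: since the mechanism is utilitarian and $\utility_\voter(S)=\sum_{\object\in S}\declaration_\voter(\object)$, we have
\[
\sum_{\voter\in\setVoters}\utility_\voter(S)=\sum_{\object\in S}v(\object),\qquad\text{where } v(\object):=\sum_{\voter\in\setVoters}\declaration_\voter(\object).
\]
Thus the problem asks for a set $S$ with $\sum_{\object\in S}\weight(\object)\le W$ and $\sum_{\object\in S}v(\object)\ge T$, i.e.\ exactly a Knapsack instance with weights $\weight(\object)$ and values $v(\object)$.

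The crucial observation --- the reason this is polynomial rather than merely pseudo‑polynomial --- is that each ballot $\declaration_\voter$ is a bijection onto $[m]$, hence $\declaration_\voter(\object)\le m$ for every object and every voter. Consequently $v(\object)\le nm$, and the total value of any feasible set is at most $nm^2$, which is polynomial in the size of the input. This is the feature that distinguishes the ranking‑ballot case from the value‑function case (Proposition~\ref{proposition:p19}), where values are unbounded and the problem is NP‑complete.

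I would then run Algorithm~\ref{alg:cap1} from the proof of Proposition~\ref{proposition:p13} essentially unchanged: it builds a table indexed by attainable total value (here of length $O(nm^2)$) in which entry $\ell$ stores the minimum total weight of a subset of the objects processed so far whose total value is exactly $\ell$, together with a witnessing set in $\mathbf{dynamic\_outcome}[\ell]$; processing the objects one at a time and scanning the value axis in decreasing order gives the usual $0$--$1$ Knapsack recurrence. After all $m$ objects are processed, the instance is a \textbf{yes}-instance iff some entry $\ell\ge T$ has stored weight at most $W$, and $\mathbf{dynamic\_outcome}[\ell]$ is then the desired set $S$. The running time is $O(m^2\cdot\mu)$ with $\mu=\max_{\object}v(\object)\le nm$, hence $O(nm^3)$, which is polynomial.

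The only real difficulty is conceptual rather than technical: since general Knapsack is NP‑complete (Karp~\cite{Karp72}), one must pinpoint which feature of the ranking‑ballot setting makes the problem tractable. That feature is precisely that rankings cap each contribution $\declaration_\voter(\object)$ at $m$, so the value axis of the dynamic program stays polynomially bounded even when the weights are arbitrarily large --- exactly the regime in which the value‑indexed dynamic program of Algorithm~\ref{alg:cap1} runs in polynomial time. No new algorithmic idea is needed beyond that of Proposition~\ref{proposition:p13}; the contribution of this proposition is the bound $v(\object)\le nm$.
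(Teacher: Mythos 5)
Your proposal is correct and follows essentially the same route as the paper, whose proof consists of the single remark that the algorithm of Proposition~\ref{proposition:p13} still works here; you simply make explicit the bound $\declaration_\voter(\object)\le m$ (hence $\mu\le nm$) that keeps the value-indexed dynamic program polynomial, which the paper leaves implicit.
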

\begin{proof}
The algorithm in the proof of  Proposition \ref{proposition:p13} still works in this case.
\end{proof}



\begin{proposition}\label{proposition:p11}
The following problem $MMmK_1$ 

\noindent{\bf Input:} A set of voters $\setVoters$, a set of objects $\setObjects$, ballots: $\setObjects\to\mathbb{R}$, a positive integer $T$.

\noindent{\bf Question:} Does there exists  a set S such that  $max min\{\utility_\voter(S): \voter \in \setVoters\} \ge T $ and $ |S|\le W$?

 is NP-complete.
\end{proposition}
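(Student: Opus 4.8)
The plan is to establish membership in NP and then NP-hardness by observing that $MMmK_1$ directly generalizes the problem $BvMm_1$ from Proposition~\ref{proposition:p7}.

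\emph{Membership in NP.} A nondeterministic algorithm guesses a candidate subset $S \subseteq \setObjects$ and verifies in polynomial time that $|S| \le W$ and that $\sum_{\object \in S} \declaration_{\voter}(\object) \ge T$ for every voter $\voter \in \setVoters$; the latter is precisely $\min_{\voter \in \setVoters} \utility_{\voter}(S) \ge T$. Hence $MMmK_1$ belongs to NP.

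\emph{NP-hardness.} I would argue that every instance of $BvMm_1$ is, essentially verbatim, an instance of $MMmK_1$: a $\{0,1\}$-valued ballot is in particular a real-valued ballot, and the $BvMm_1$ question---whether some $S$ with $|S|\le W$ satisfies $\min_{\voter \in \setVoters}\utility_{\voter}(S) \ge 1$---is exactly the $MMmK_1$ question with acceptance threshold $T := 1$. Thus the identity map is a polynomial-time reduction from $BvMm_1$, which is NP-complete by Proposition~\ref{proposition:p7}, and therefore $MMmK_1$ is NP-hard, hence NP-complete. Equivalently, one may simply rerun the Vertex Cover reduction in the proof of Proposition~\ref{proposition:p7} verbatim, reading the approval ballots it constructs as real-valued ballots and setting $T = 1$; the correctness argument carries over unchanged.

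The main point to watch---and the closest thing to an obstacle here---is purely notational bookkeeping: the symbol $T$ plays two unrelated roles, denoting the fixed number of $1$'s per ballot in the statement of $BvMm_1$ but the acceptance threshold in $MMmK_1$; once this is disentangled the reduction is immediate. It is also worth recording why this hardness needs the multi-voter structure rather than reducing straight from a single Knapsack instance: with a single voter the optimum is obtained greedily by taking the $W$ objects of largest value, which is polynomial, so the ``min over voters'' inherited from $BvMm_1$ is essential.
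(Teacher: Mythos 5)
Your proof is correct, but it takes a different route from the paper's. You reduce from $BvMm_1$ (Proposition~\ref{proposition:p7}) by the identity map, observing that a $\{0,1\}$-valued ballot is a real-valued ballot and that the $BvMm_1$ question is literally the $MMmK_1$ question with threshold $T=1$; since the utilities, the cardinality constraint $|S|\le W$, and the $\min$-over-voters objective coincide, this is a valid polynomial-time reduction, and your handling of the clashing uses of the symbol $T$ is the right bookkeeping. The paper instead gives a self-contained reduction from \emph{PARTITION}, building $n+2$ voters whose real-valued utilities encode both the balance and the pairing constraints of that problem. Your argument is shorter and exploits the containment of ballot classes (approval voting is a special case of value-function voting), which is exactly the strategy the paper itself uses for Proposition~\ref{proposition:p9}; the paper's PARTITION construction buys a hardness instance that genuinely uses non-$\{0,1\}$ values and a small, structured voter set, but nothing in the statement requires that, so both proofs establish the proposition. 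Your closing remark that the single-voter unitary case is solvable greedily, so the multi-voter $\min$ is what drives the hardness, is accurate and consistent with Proposition~\ref{proposition:p1}.
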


\begin{proof}

First, it is easy to see that \emph{MMmK}$_1$ is in NP.  
A non-deterministic algorithm can exhibit a solution and check whether a subset of $\setObjects$ is a solution.

Now, we transform  \emph{PARTITION} to $MMmK_1$. We recall the definition of this NP-complete problem~\cite{garey1979computers}:

\noindent{\bf Input:} A finite set $A$ of integers.
 
\noindent{\bf Question:} Is there a subset  $A' \subseteq A$ such that $\sum_{a \in A'} a = \sum_{a \in A\backslash A'} a $ and $|A'| = |A\backslash A'|$?

Let us take an instance $A$ of the  \emph{PARTITION} problem. Then, we create $n+2$ voters and $|A|$ objects.  Thus $\setObjects = \{ \object_j : j \in A \} $. Every object $object$ has its weight equal to $1$ ($\weight(\object)=1$),  $W=|A|/2$, and $T= 1/2 \sum_{a \in A} a$.  We build the utility.

\begin{itemize}
\item For voter $\voter_{1}$, $\utility_{\voter_{1}}(\object_{j})=a_{j}$, for every $j \le |A|$.
\item For voter $\voter_{2}$, $\utility_{\voter_{2}}(\object_{2j})=a_{2j+1}$ for every $j$ such that $2j\le |A|$,  and $\utility_{\voter_{2}}(\object_{2j+1})=a_{2j}$, for every $j$ such that  $2j+1\le |A|$. 
\item  For all other voters $\voter_{i}$, ($i\ge 3$), $\utility_i(\object_{j})=0$ for every object $object_{j}$ with $j \le |A|$,  except for $\utility_i(\object_{2\cdot  (i-2)})=\utility_i(\object_{2\cdot  (i-2)+1})=(\sum_{a\in A} a)$.
\end{itemize}


This transformation can be performed in polynomial time.

Assume  problem~\emph{PARTITION}  has a solution $A'$.  Let 
$\setObjects'$ be the set $\setObjects'$ of the $\frac{|A|}2$ objects such that 
$\object_k \in \setObjects'$ if $j\in A'$. The utility of  voters $\voter_{1}$ and $\voter_2$ gives $\sum_{\object\in \setObjects'}\utility_\voter(\object)=\frac{\sum_{a\in A}a}2$, and  the utility of  all the other voters  is $\sum_{a\in A} a$.

Conversely, assume there is a set of objects  
$\setObjects'$  such that the utility of all the voters 
for $\setObjects'$ is greater than $T=\frac{\sum_{a\in A}a}2$.

Every voter gives the selected set of objects the value $\sum_{\object\in \setObjects'}\utility_i(\object)=\frac{\sum_{a\in A}a}2$. Then, for every one of the $n$ last voters, the object $\object_{2k-1}$ or the object $\object_{2k}$ has to be selected. Hence, the selected set of objects respects the
constraint: ``the elements of $A$ are ordered as $a_1,\dots,a_{2n}$ and we require that $A'$ contains exactly one of $a_{2k-1}$ or $a_{2k}$ for $1\le k\le n$''. Plus, because of the $n/2$ voters $\object_{2k-1}$ or $\object_{2k}$ selected, for the two first voters, $\sum_{a\in A'}a=\sum_{a\in A\backslash A'}a$. Thus, there is a solution to the problem~\emph{PARTITION}.

Thus the desired result follows.
\end{proof}

\subsection{Proofs of Section~\ref{sec:scorevoting}}

\begin{proposition*}{\bf \ref{proprodile} }
If ($M$, $\SCF$) is a neutral score function, then $M=D +\lambda_1 A^{(1)} +\dots +\lambda_m A^{(m)}_m$ with $\lambda_1,\dots,\lambda_m \in \mathbb{R}$, $D$ is a positive constant diagonal matrix and $\forall i,j,k \in [m], i\neq j$ $A^{(j)} $ is such that $A^{(j)}_{k,j}=1$ and $A^{(j)}_{k,i}=0$.
\end{proposition*}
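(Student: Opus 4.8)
The plan is to read the structure of $M$ directly off the neutrality conditions, applied to three well-chosen families of index sets: the full set $[m]$, the co-singletons $[m]\setminus\{p\}$, and the pairs $\{a,b\}$. Throughout, I write $C_j = M\mathbf{\delta_j}$ for the $j$-th column of $M$, so that $M(\mathbf{\delta_{i_1}}+\dots+\mathbf{\delta_{i_k}}) = C_{i_1}+\dots+C_{i_k}$, whose $i$-th coordinate is $\sum_{s=1}^{k} M_{i,i_s}$. Recall that we work with $m\ge 3$.

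First I would apply the first bullet of neutrality with $k=m$, i.e.\ with $\{i_1,\dots,i_m\}=[m]$: the vector $M\mathbf 1 = C_1+\dots+C_m$ has all coordinates equal, and since its $i$-th coordinate is the $i$-th row sum of $M$, there is a scalar $s$ with $\sum_j M_{i,j}=s$ for every $i$. Next, fix $p\in[m]$ and apply neutrality with the co-singleton $\{i_1,\dots,i_{m-1}\}=[m]\setminus\{p\}$. The $i$-th coordinate of $\sum_{j\ne p}C_j = M\mathbf 1 - C_p$ equals $s-M_{i,p}$; the first bullet forces all coordinates with $i\ne p$ to be equal, hence (using $m\ge 3$, so that there really are at least two such rows) $M_{i,p}$ takes a common value $\lambda_p$ for all $i\ne p$; the second bullet forces the $p$-th coordinate to be strictly smaller, i.e.\ $s-M_{p,p} < s-\lambda_p$, so $d_p := M_{p,p}-\lambda_p > 0$. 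At this stage every column $j$ of $M$ is known to equal $\lambda_j$ off the diagonal and $\lambda_j+d_j$ on the diagonal, i.e.\ $M = D' + \lambda_1 A^{(1)}+\dots+\lambda_m A^{(m)}$ with $D'$ the positive diagonal matrix whose $(p,p)$ entry is $d_p$ and with the $A^{(j)}$ as in the statement.

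It remains to show that $D'$ is a \emph{constant} diagonal matrix. For this I would apply neutrality with a pair $\{a,b\}$, $a\ne b$: equating the $a$-th and $b$-th coordinates of $C_a+C_b$ gives $M_{a,a}+M_{a,b}=M_{b,a}+M_{b,b}$, and substituting $M_{a,a}=\lambda_a+d_a$, $M_{b,b}=\lambda_b+d_b$, $M_{a,b}=\lambda_b$, $M_{b,a}=\lambda_a$ collapses this to $d_a=d_b$. As $a,b$ are arbitrary, all the $d_p$ coincide with a single positive constant $d$, so $D'=dI$; renaming this matrix $D$, we obtain $M=D+\lambda_1 A^{(1)}+\dots+\lambda_m A^{(m)}$ with $D$ a positive constant diagonal matrix, as claimed.

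I do not expect a genuine obstacle beyond picking the right index sets to plug in; the only points that need a little care are that $m\ge 3$ is precisely what makes ``all coordinates equal'' in the co-singleton step yield a true constant rather than a vacuous statement, and that the strict inequality in the second bullet of the neutrality definition is exactly what delivers positivity of the diagonal correction $d_p$.
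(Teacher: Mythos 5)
Your proof is correct, and it takes a genuinely different and more direct route than the paper's. The paper proceeds by induction on the size of the leading principal submatrix: at each stage it uses the singleton conditions ($k=1$), the pair conditions ($k=2$), and the full-set condition on the first $n+1$ indices to propagate the constants $\lambda_j$ and the common diagonal excess $c$ from the $n\times n$ corner to the $(n+1)\times(n+1)$ one. You instead read the structure of $M$ off three families of index sets applied once, globally: the full set $[m]$ gives equal row sums $s$; the co-singletons $[m]\setminus\{p\}$ give, via the first bullet, that column $p$ is constant off the diagonal (this is exactly where $m\ge 3$ is needed, so that there are at least two coordinates to equate), and, via the second bullet, that the diagonal excess $d_p=M_{p,p}-\lambda_p$ is strictly positive; the pairs $\{a,b\}$ then force all the $d_p$ to coincide. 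The two arguments exploit the neutrality hypothesis on different subsets --- the paper never invokes the co-singletons, you never invoke the singletons --- but both are legitimate instances of the definition, and yours avoids the bookkeeping of the induction entirely, which makes it shorter and cleaner. The one caveat is that your argument genuinely requires $m\ge 3$ (for $m=2$ the co-singleton step is vacuous and positivity of $D$ would instead have to come from the $k=1$ condition, as in the paper's base case); since the paper restricts attention to at least three objects anyway, this costs nothing here, but it is worth flagging that the $m=2$ case needs the singleton conditions.
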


\begin{proof}\label{proof:2}
We will prove this result by recurrence on the size of the matrix, i.e., the number of objects of the vote. In this proof, we will also prove this result in the case of only $2$ objects (it helps with the recurrence).

If $M$ is the $2 \times 2$ left corner submatrix, then $M_{1,1}\ge M_{2,1}$ because $\forall i ~ \mathcal{A}lgo(\delta_i)=i$. $M(\delta_{1}+\delta_{2})_1=M(\delta_{1}+\delta_{2})_2$ so $M_{1,1}+M_{1,2}=M_{2,1}+M_{2,2}$ so, by taking $c=M_{1,1}-M_{2,1}$, then, $M_{2,2}-M_{1,2}=c$, hence, with $\lambda_1=M_{2,1}$ and $\lambda_2=M_{1,2}$ we proved the result for $n=2$.

If $M$ is the $n+1*n+1$ left corner submatrix, then let us suppose that the proposition is true for the $n*n$ left corner submatrix.

$M(\delta_{1}+\dots+\delta_{i_{n+1}})_1=\dots=M(\delta_{1}+\dots+\delta_{i_{n+1}})_{n+1}$. Hence, for $k$ such that $2 \le k<n+1$, $M_{1,1}+\dots+M_{1,n+1}=M_{k,1}+\dots+M_{k,n+1}$. But due to the result at rank $n$, $M_{1,1}+\dots+M_{1,n}=M_{k,1}+\dots+M_{k,n}$. Hence, $M_{1,n+1}=M_{k,n+1}=\lambda_{n+1}$.

Now take $k$ such that $1 \le k<n+1$; $M(\delta_{k}+\delta_{i_{n+1}})_k=M(\delta_{k}+\delta_{i_{n+1}})_{n+1}$ 
so $M_{k,k}+M_{k,n+1}=M_{n+1,k}+M_{n+1,n+1}$, $c+\lambda_k+\lambda_{n+1}=M_{n+1,k}+M_{n+1,n+1}$. Let us take $c'=M_{n+1,n+1}-\lambda_{n+1}$. Now, $M_{n+1,k}=\lambda_k+c-c'$.

Due to $M(\delta_{1}+\dots+\delta_{i_{n+1}})_1=\dots=M(\delta_{1}+\dots+\delta_{i_{n+1}})_{n+1}$,
$M_{1,n+1}+\dots+M_{n+1,n+1}=M_{1,1}+\dots+M_{n+1,1}=c+\lambda_1+\dots+\lambda_{n+1}$. 
$\lambda_1+\dots+\lambda_{n+1}+n*c-(n-1)c'=c+\lambda_1+\dots+\lambda_{n+1}$. Hence, $c=c'$, so $M_{n+1,n+1}=c+\lambda_{n+1}$ and $M_{n+1,k}=\lambda_k+c-c'=\lambda_k$, which proves the proposition at rank $n+1$
\end{proof}

\begin{theorem*}{\bf \ref{th:neutral}}
If ($M$, $\SCF$) is a neutral score function, then the winning set of the social score function is the same as knapsack voting ($I_m$).
\end{theorem*}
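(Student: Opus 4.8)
The plan is to use Proposition~\ref{proprodile} to pin down the exact form of a neutral score matrix, and then observe that the score vector it produces is, coordinate by coordinate, a strictly increasing affine image of the raw count vector used by knapsack voting; order-preservation of an affine map then forces the two mechanisms to select the same $W$ objects.

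First I would invoke Proposition~\ref{proprodile}: since $(M,\SCF)$ is neutral, $M = D + \lambda_1 A^{(1)} + \dots + \lambda_m A^{(m)}$, where $D$ is a positive constant diagonal matrix, i.e.\ $D = c\,I_m$ for some $c>0$, and each $A^{(j)}$ has all entries of its $j$-th column equal to $1$ and all other entries equal to $0$. Reading off entries gives the closed form $M_{k,j} = c\cdot[k=j] + \lambda_j$ for all $k,j \in [m]$.

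Next, I would fix an arbitrary ballot profile $\Declaration = (\declaration_1,\dots,\declaration_n)$ and let $\mathbf{e}$ be the associated count vector, so that knapsack voting $(I_m,\SCF)$ operates on the score vector $I_m\mathbf{e} = \mathbf{e}$. For each object $o_k$ one computes
\[(M\mathbf{e})_k = \sum_{j=1}^{m} M_{k,j}\mathbf{e}_j = c\,\mathbf{e}_k + \sum_{j=1}^{m}\lambda_j\mathbf{e}_j = c\,\mathbf{e}_k + \Lambda,\]
where $\Lambda := \sum_{j}\lambda_j\mathbf{e}_j$ does not depend on $k$. Since $t \mapsto c\,t + \Lambda$ is strictly increasing (because $c>0$), for all objects $o_k,o_l$ we get $(M\mathbf{e})_k \ge (M\mathbf{e})_l \iff \mathbf{e}_k \ge \mathbf{e}_l$, with equality on one side exactly when it holds on the other. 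Hence the total preorder on the objects induced by the score vector $M\mathbf{e}$ is identical --- ties included --- to the one induced by $\mathbf{e}$, so the family of size-$W$ sets of largest total score is the same for $(M,\SCF)$ and for $(I_m,\SCF)$; and since the ties fall at exactly the same positions, the common tie-break $\SCF$ resolves them the same way. Therefore the neutral score voting and knapsack voting return the same winning set on every instance.

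The computation is routine and the real content was already extracted in Proposition~\ref{proprodile}, so I do not expect a genuine mathematical obstacle. The one point needing care is the bookkeeping around the semantics of a score function: one must make explicit that ``the winning set of $W$ objects maximizing $M\mathbf{e}$, with tie-break $\SCF$'' really means selecting a top-$W$ coordinate set of the vector $M\mathbf{e}$, which is precisely where the affine-monotonicity argument is applied; getting that reduction stated cleanly (and observing it holds for the degenerate cases $W=0$ and $W=m$ as well) is the main thing to be careful about.
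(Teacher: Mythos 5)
Your proof is correct and follows essentially the same route as the paper's: both invoke Proposition~\ref{proprodile} and deduce that, because the off-diagonal contribution of each $A^{(j)}$ adds the same amount $\lambda_j\mathbf{e}_j$ to every coordinate, the induced ranking of objects (ties included) coincides with the ranking by raw vote counts. Your explicit affine formula $(M\mathbf{e})_k = c\,\mathbf{e}_k + \Lambda$ is a cleaner and more rigorous rendering of the paper's argument, but it is the same idea.
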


\begin{proof}\label{proof:3}
Let us choose two objects $a$ and $b$ and $@a$ (resp $@b$) the number of times that $a$ (resp $b$) appears in the ballots. If $@a=@b+k$ with k a constant, due to $M(@a,@b)_a=M(@a,@b)_b$ and that voting for another object $c$ gives the same score to $a$ and $b$ (Proposition \ref{proprodile}), the score of $a$ is higher than the one of $b$. Hence, the winners are the objects with the highest amount of vote. This is exactly the mechanism of knapsack voting.
\end{proof}

\begin{lemma*}{\bf~\ref{lemma:a1}}
If ($M$, $\SCF$) is a total score function with a matrix of size at least $3 \times 3$ and $x,y,z$ are three objects, then  there exists a set of ballot $\Declaration=k_1\delta_1+\dots+k_x\delta_x+k_y\delta_y+k_z\delta_z+\dots+k_m\delta_m$ such that $M(\Declaration) =\begin{pmatrix}
t_1 \\
\dots \\
t_x \\
t_y \\
t_z \\
\dots
\end{pmatrix}$ with $t_z>t_y>t_x$, $t_z>t_i$ for every other object~$i$ and $k_1,k_2,k_3,\dots\neq 0$.
\end{lemma*}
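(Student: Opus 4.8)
The plan is to lean on the \emph{totality} hypothesis, which in the easiest instantiation (budget $W=1$, so that the feasible winning sets are exactly the singletons) tells us that \emph{every} object can be forced to be the winner, and then to glue together such ballot profiles with rapidly growing multiplicities so that the aggregate score vector $M(\Declaration)$ acquires the required ordering. Concretely, I would first fix, for each of $x,y,z$, a ballot profile whose aggregate column vector $\mathbf{e}_x$ (resp.\ $\mathbf{e}_y$, $\mathbf{e}_z$) satisfies $\SCF(M\mathbf{e}_z)=\{o_z\}$, i.e.\ $z$ is a (tie‑broken) argmax of $M\mathbf{e}_z$, and similarly for $x$ and $y$. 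Since multiplying an aggregate by a positive integer multiplies all scores by the same factor and changes neither the argmax nor the tie outcome, I may take $\mathbf{e}_x,\mathbf{e}_y,\mathbf{e}_z$ as large as convenient; and since $M$ acts linearly on aggregates, any nonnegative integer combination of them is again the aggregate of an honest ballot profile.

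The heart of the argument is a \emph{strict separation} statement: for any two distinct objects $a,b$ there is a nonnegative integer aggregate, with every coordinate strictly positive, on which $a$ scores strictly above $b$. This is where totality must be used in an essential way, and it is the step I expect to be the main obstacle. The point is that totality forbids the degeneracies that would otherwise kill the lemma: if rows $a$ and $b$ of $M$ were equal, or more generally if $a$ and $b$ were always tied on the cone of aggregates where $a$ (or $b$) is a weak argmax, then the singletons $\{o_a\}$ and $\{o_b\}$ could not both be achievable winners, because the fixed tie‑break strict order of $\SCF$ cannot simultaneously rank $a$ above $b$ and $b$ above $a$ — contradicting totality. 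Unwinding this (using the profiles $\mathbf{e}_a,\mathbf{e}_b$ and, where needed, the case $W=2$ to place $a$ or $b$ away from the top), I would extract an aggregate realizing $(M\cdot)_a>(M\cdot)_b$ strictly, and then bump it by one ballot per object so that all coordinates become nonzero while keeping the strict inequality.

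Finally I would assemble the target profile as a lexicographic combination $\Declaration = N_z\,\mathbf{e}_z + N_y\,\mathbf{e}_y + N_x\,\mathbf{e}_x + (\mathbf{\delta_1}+\dots+\mathbf{\delta_m})$ with $N_z \gg N_y \gg N_x \gg 1$ chosen successively; the last term forces $k_i\ge 1$ for all $i$, and linearity gives $M(\Declaration)=N_z M\mathbf{e}_z + N_y M\mathbf{e}_y + N_x M\mathbf{e}_x + M(\mathbf{\delta_1}+\dots+\mathbf{\delta_m})$. For $N_z$ large, coordinate $z$ already beats every coordinate $i$ with $(M\mathbf{e}_z)_i < (M\mathbf{e}_z)_z$; the coordinates tied with $z$ under $\mathbf{e}_z$ are resolved by the lower‑order terms, and I would use the separation statement (applied to the pairs $(z,i)$ for the tied $i$, and to $(y,x)$) to choose the $\mathbf{e}$'s and the scales $N_x,N_y,N_z$ so that in the end $t_z$ is the strict maximum and $t_y>t_x$ strictly, the remaining objects being irrelevant except that they stay below $t_z$. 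The routine parts are the bookkeeping of how large each $N$ must be and the verification that the perturbation $(\mathbf{\delta_1}+\dots+\mathbf{\delta_m})$ is too small to disturb any of the strict inequalities obtained; the delicate part is, as noted, deducing strict separability from totality.
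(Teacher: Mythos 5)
There is a genuine gap, and it sits exactly where you flagged it: the ``strict separation'' step. The justification you sketch (two objects cannot be \emph{always} tied, else the fixed tie-break could not make both singletons win) only shows that rows $x$ and $y$ of $M$ are not identical; it does not produce a nonnegative aggregate on which $y$ scores \emph{strictly above} $x$ in the particular direction the lemma demands. Totality at $W=1$ is asymmetric here: if the tie-break favours $y$ over $x$, then achieving the winning set $\{o_y\}$ only certifies $(M\mathbf{e})_y\ge (M\mathbf{e})_x$, never a strict inequality, and if row $y$ is dominated coordinatewise by row $x$ then no nonnegative aggregate whatsoever gives $t_y>t_x$ --- a situation that singleton-achievability alone does not exclude. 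Consequently the whole lexicographic assembly collapses when $x$ and $y$ are tied under both $\mathbf{e}_z$ and $\mathbf{e}_y$; worse, your term $N_x\,\mathbf{e}_x$ (a profile on which $x$ is the top scorer) actively pushes $x$ \emph{above} $y$ in exactly the tied case where the lower-order terms are supposed to decide, so it works against the inequality $t_y>t_x$ you are trying to create.

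The ingredient the paper uses and you are missing is totality at $W=2$: it takes one profile whose $W=1$ winner is $\{o_z\}$ and a second profile whose $W=2$ winning set is $\{o_y,o_z\}$, the latter being what places \emph{both} $y$ and $z$ (weakly) above $x$ and every other object simultaneously. The combination is then not lexicographic but calibrated: the coefficient of the first profile is chosen \emph{minimally} so that $z$ overtakes $y$, which keeps $t_y$ within a bounded distance of $t_z$, while the coefficient of the second profile is chosen large enough that $t_z$ (and hence $t_y$) sits far above $t_x$. That is how the strict chain $t_z>t_y>t_x$ is extracted from weak information, without ever needing a standalone aggregate separating $y$ strictly from $x$. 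To repair your argument you would have to either prove the separation claim (which, as noted, does not follow from singleton totality and in the required direction need not hold) or import the $W=2$ winning sets as the paper does.
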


\begin{proof}
$M$ is total, hence there exists $k_{x,1},k_{y,1},k_{z,1},\dots,k_{m,1}$ such that \newline \textcolor{black}{\fbox{$M(k_{x,1}\delta_x+k_{y,1}\delta_y+k_{z,1}\delta_z+\dots+k_{m,1}\delta_i)=\{z\}$}}.\newline\newline

$M$ is total, hence there exists
  $k_{x,2},k_{y,2},\dots,k_{m,1}$ such that\newline
  \textcolor{black}{\fbox{$M(k_{x,2}\delta_x+k_{y,2}\delta_y+k_{z,2}\delta_z+\dots+k_{m,2})=\{y,z\}$}}
  \newline
  
   (\textcolor{black}{\fbox{w.l.o.g, for $i\in [|1,m|]$, $k_{i,2}\neq 0$}}, else, just multiply $k_{x,2}$, $k_{y,2}$,\dots enough so that $M(k_{x,2}\delta_x+k_{y,2}\delta_y+k_{z,2}\delta_z+\dots)=\{y,z\}$ and set $k_{i,2}=1$).\newline\newline
We will now show that there exists $\lambda_1,\lambda_2\neq 0$ such that, with $W=2$,  $M(\lambda_1k_{x,1}\delta_x+\lambda_1k_{y,1}\delta_y+\lambda_1k_{z,1}\delta_z+\dots+\lambda_1k_{m,1}\delta_m+\lambda_2k_{x,2}\delta_x+\lambda_2k_{y,2}\delta_y+\lambda_2k_{z,2}\delta_z+\dots+\lambda_2k_{m,1}\delta_m)=\{y,z\}$

and with $W=1$, $M(\lambda_1k_{x,1}\delta_x+\lambda_1k_{y,1}\delta_y+\lambda_1k_{z,1}\delta_z+\dots+\lambda_1k_{m,1}\delta_m+\lambda_2k_{x,2}\delta_x+\lambda_2k_{y,2}\delta_y+\lambda_2k_{z,2}\delta_z+\dots+\lambda_2k_{m,1}\delta_m)=\{z\}$.\newline

To have that, if with $W=1$, $M(k_{x,2}\delta_x+k_{y,2}\delta_y+k_{z,2}\delta_z+\dots+k_{m,2}\delta_m)=\{z\}$, lets just take $\lambda_1=0,\lambda_2=1$.
\newline

Else, note $score_1(x),score_1(y),score_1(z)$ the scores of $x$, $y$ and $z$ on votes $k_{x,1}\delta_x+k_{y,1}\delta_y+k_{z,1}\delta_z+\dots+k_{m,1}\delta_m$ and $score_2(x),score_2(y),score_2(z)$ the scores of $x$, $y$ and $z$ on votes $k_{x,2}\delta_x+k_{y,2}\delta_y+k_{z,2}\delta_z+\dots+k_{m,2}\delta_m$.\newline

Take $\lambda_2$ such that $\lambda_2score_2(z)-\lambda_2score_2(x)>2*score_1(z)-score_1(x)$ ($score_2(z)-score_2(x)\neq 0$). 

Then, with $\lambda_1=\lceil \frac{\lambda_2score_2(y)-\lambda_2score_2(z)}{score_1(z)-score_1(y)}\rceil$ (or $\lambda_1=\frac{\lambda_2score_2(y)-\lambda_2score_2(z)}{score_1(z)-score_1(y)}+1$ if $\frac{\lambda_2score_2(y)-\lambda_2score_2(z)}{score_1(z)-score_1(y)}$ is an integer),

\textcolor{black}{\fbox{$\lambda_1score_1(z)+\lambda_2score_2(z)>\lambda_1score_1(y)+\lambda_2score_2(y)$ }}\newline
by definition of $\lambda_1$. \newline\newline
Now we will show that
\textcolor{black}{\fbox{$\lambda_1score_1(y)+\lambda_2score_2(y)>\lambda_1score_1(x)+\lambda_2score_2(x)$ }}\newline
  $\lambda_1$ is the smallest possible integer to get $\lambda_1score_1(z)+\lambda_2score_2(z)>\lambda_1score_1(y)+\lambda_2score_2(y)$ and \newline
$\lambda_2score_2(z)-\lambda_2score_2(x)>2*score_1(z)-*score_1(x)$ so \newline\newline
$\lambda_2score_2(y)+\lambda_1score_1(y)\ge$ \newline
$\lambda_2score_2(z)+\lambda_1score_1(z)-score_1(z)>$ ($\lambda_1\neq 0$ due to $M(k_{x,2}\delta_x+k_{y,2}\delta_y+k_{z,2}\delta_z)=\{y\}$)\newline
$\lambda_2score_2(x)+score_1(z)-score_1(x)+\lambda_1score_1(z)>$\newline
$\lambda_2score_2(x)+score_1(z)-score_1(x)+\lambda_1score_1(x)>$\newline
$\lambda_1score_1(x)+\lambda_2score_2(x)$\newline
Hence, there exists $k_{x,3},k_{y,3},k_{z,3},\dots,k_{m,3}=\lambda_1k_{x,1}+\lambda_2k_{x,2},\dots,\lambda_1k_{m,1}+\lambda_2k_{m,2}$ such that $M(k_{x,3}\delta_x+k_{y,3}\delta_y+k_{z,3}\delta_z)$ is a vector $\begin{pmatrix}
t_{x,3} \\
t_{y,3} \\
t_{z,3} \\
..
\end{pmatrix}$ with $t_{z,3}>t_{y,3}>t_{x,3}$, $t_{z,3}>t_{i,3}$ for every other object $i$ and $k_{x,3},k_{y,3},k_{z,3},\dots\neq 0$
\end{proof}

\begin{proposition*}{\bf~\ref{proof:4}}
For the total score functions, respecting CCP property implies respecting the set of constraints $\Delta$.
\end{proposition*}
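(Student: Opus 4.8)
The plan is to argue by contraposition. Supposing $(M,\SCF)$ is a total score function that violates at least one inequality of $\Delta$, I would exhibit a ballot profile and a single voter whose change of one approved object from $\alpha$ to $\beta$ moves the winning set in a way no clause of CCP allows. The one feature of CCP I would use throughout is its dichotomy: if $O$ and $O'$ are the winning sets before and after such a switch, then $O'=O$, or $O$ loses exactly $\alpha$ and gains exactly one other object, or $O$ gains exactly $\beta$ and loses exactly one other object. In particular CCP fails as soon as $\beta$ lies in $O$ but not in $O'$, or the symmetric difference of $O$ and $O'$ is neither empty nor of one of those two forms. A violation of $\Delta$ is of one of two kinds, which I would handle separately.

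First, suppose a column inequality fails, so that for some $i\neq j$ the score vector $M\delta_j$ ranks $o_i$ at least as high as $o_j$; here the strict-versus-non-strict split in $\Delta$ is exactly the bookkeeping needed to phrase this in terms of the tie-break order of $\SCF$. I would take $W=1$ and the one-voter profile in which the voter casts the ballot $\{o_j\}$; the winning set is then $\SCF(M\delta_j)$, hence $\{o_i\}$ and not $\{o_j\}$, so a single voter's winning set contains an object outside her ballot. That is precisely the failure of the first requirement of CCP in its single-voter reading (the one spelled out right after the definition), so $\SCF$ does not respect CCP. This step is the CCP analogue of the contrapositive of Proposition~\ref{proposition:a1}, with CCP playing the role of strategyproofness.

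Second, suppose an inequality $M_{i,i}-M_{i,j}\ge M_{k,i}-M_{k,j}$ fails, i.e.\ $\varepsilon:=(M_{k,i}-M_{k,j})-(M_{i,i}-M_{i,j})>0$, which forces $i\neq j$ and $k\neq i$ (possibly $k=j$): transferring one approval from $o_j$ to $o_i$ raises $o_k$'s score by $\varepsilon$ more than it raises $o_i$'s. Using totality --- and Lemma~\ref{lemma:a1} with $z=i$ and $y=k$ to control the remaining coordinates --- I would build, for $W=1$, a profile whose winning set is $\{o_i\}$, realized so that some positive number $c$ of voters cast exactly $\{o_j\}$ and so that $o_i$ beats $o_k$ by a margin smaller than $c\,\varepsilon$. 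Then I would have those $c$ voters switch from $o_j$ to $o_i$ one at a time: each such switch shrinks $o_k$'s deficit behind $o_i$ by $\varepsilon$, so after at most $c$ switches $o_k$ strictly overtakes $o_i$ and the winning set is no longer $\{o_i\}$. At the first step where $o_i$ leaves the winning set, the switch has $\alpha=o_j$ and $\beta=o_i$; there $\beta=o_i$ disappears from the winning set, $\alpha=o_j$ was not in it beforehand, and the winning set changes --- none of the three clauses of CCP applies, so again $\SCF$ does not respect CCP. Together the two cases prove the contrapositive.

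The hard part will be the profile construction in the second case, which must simultaneously secure (a) a winning set equal to $\{o_i\}$, (b) an actual supply of $o_j$-approvals, and (c) an $o_i$-over-$o_k$ margin below $c\,\varepsilon$. This seems to require more than a single black-box invocation of Lemma~\ref{lemma:a1} --- for instance combining that profile with one realizing $\{o_k\}$, which totality supplies, and then scaling and padding --- and the whole construction has to be carried out while keeping the tie-break order of $\SCF$ consistent with the strict/non-strict distinction in $\Delta$. By contrast, the first case is essentially immediate from the single-voter reading of CCP.
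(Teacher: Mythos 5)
Your second case is, up to relabeling, the paper's own argument: isolate a violated inequality $M_{i,i}-M_{i,j}<M_{k,i}-M_{k,j}$, observe that each single $o_j\to o_i$ switch advances $o_k$ by $\varepsilon$ more than $o_i$, use Lemma~\ref{lemma:a1} to build a profile with $o_i$ first, $o_k$ second and every object receiving approvals, tune the $o_i$--$o_k$ margin below (number of $o_j$-approvals)$\,\times\,\varepsilon$, and swap one ballot at a time; at the first step where the winner changes, $\alpha=o_j$ was not in the old winning set and $\beta=o_i$ already was, so none of the three clauses of CCP applies. The paper performs the margin tuning by scaling the Lemma~\ref{lemma:a1} profile by roughly $\lceil\epsilon_2/\epsilon_1\rceil+1$ and then padding with extra approvals of the runner-up until its deficit is at most $\epsilon_2=M_{k,k}-M_{i,k}$, while the scaled supply of $o_j$-approvals guarantees enough swaps; your ``combine with a profile realizing $\{o_k\}$, then scale and pad'' plan is a workable variant of the same construction, and you correctly identify it as the crux.

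The gap is in your first case. The claim that ``a single voter's winning set contains an object outside her ballot'' is \emph{by itself} a CCP violation does not follow from the definition: CCP constrains transitions (what may happen when one voter swaps one object), not states, and the sentence you lean on after the definition is a motivating gloss, not a clause of the property. Concretely, with $W=1$ the matrix $\bigl(\begin{smallmatrix}10&10\\0&1\end{smallmatrix}\bigr)$ elects $o_1$ on every nonempty profile, so every switch falls under ``not change the output solution'' and CCP holds, yet a lone voter casting $\{o_2\}$ gets $\{o_1\}$ and $M_{1,2}>M_{2,2}$. What rescues the proposition is totality, which your first case never invokes (the matrix above is not total). You must either route through the paper's chain --- a violated column constraint implies non-strategyproofness by Proposition~\ref{proposition:a1}, hence non-CCP by Theorem~\ref{theo:4} --- or redo Proposition~\ref{proposition:a1}'s walk directly in CCP terms: totality supplies a profile electing $\{o_j\}$; with $W=1$ replace its singleton ballots by $\{o_j\}$ one at a time until reaching the all-$\{o_j\}$ profile, whose winner is not $\{o_j\}$; at the first step where the winner changes, the switch has $\beta=o_j$ already chosen and $\alpha\neq o_j$ not chosen, so again no clause of CCP applies. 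Either repair is short, but as written the case is unproved.
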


\begin{proof}

Assume that $M$ does not verify $\Delta$. We already proved that if $M$ does not respect $\forall i,j,j\neq i~ M_{i,j}<M_{j,j}$, then it is not strategyproof, so it does not verify CCP, else, there exists $ x,y,z ~ s.t.~M_{y,z}-M_{y,x}\ge M_{z,z}-M_{z,x}$.\newline\newline
Lets take $ x,y,z ~ s.t.~M_{y,z}-M_{y,x}- M_{z,z}+M_{z,x}=\epsilon_1 >0$,\newline\newline
Due to Lemma \ref{lemma:a1}, there exist a vote $k_{x,1}\delta_x+k_{y,1}\delta_y+k_{z,1}\delta_z+\dots+k_{m,1}\delta_m$ such that $M(k_{x,1}\delta_x+k_{y,1}\delta_y+k_{z,1}\delta_z+k_{m,1}\delta_m) =\begin{pmatrix}
t_{x,1} \\
t_{y,1} \\
t_{z,1} \\
\dots
\end{pmatrix}$ with $t_{z,1}>t_{y,1}>t_{x,1}$, $t_{z,1}>t_{i,1}$ for every other object $i$ and $k_{x,1},k_{y,1},k_{z,1},\dots\neq 0$.

Lets set $\epsilon_2=M_{y,y}-M(z,y)$. Lets multiply $k_{x,1},k_{y,1},k_{z,1},\dots,t_{x,1},t_{y,1},t_{z,1},\dots$ by $\lceil \frac{\epsilon_2}{\epsilon_1} \rceil +1$ to obtain $k_{x,2},k_{y,2},k_{z,2},\dots,t_{x,2},t_{y,2},t_{z,2},\dots$. \newline Now, increase $k_{y,2}$ to $k_{y,3}$ and induce $k_{x,3},k_{z,3},\dots,t_{x,3},t_{y,3},t_{z,3},\dots$ so that $0<t_{z,3}-t_{y,3}\leq \epsilon_2$; it is possible because of the definition of $\epsilon_2$, and that keeps the order $t_{x,3}<t_{y,3}<t_{z,3}$, because $M_{y,y}>M_{x,z}$.\newline\newline
$k_{x,3}\ge \lceil \frac{\epsilon_2}{\epsilon_1} \rceil +1$, so we can change $k_{x,3}$ and $k_{z,3}$ to $k_{x,4}=k_{x,3}-\lceil \frac{\epsilon_2}{\epsilon_1} \rceil +1$ and $k_{z,4}=k_{z,3}+\lceil \frac{\epsilon_2}{\epsilon_1} \rceil +1$.\newline With $W=1$, $M(k_{x,3}\delta_x+k_{y,3}\delta_y+k_{z,3}\delta_z)_z>M(k_{x,3}\delta_x+k_{y,3}\delta_y+k_{z,3}\delta_z)_y$ but $M(k_{x,4}\delta_x+k_{y,3}\delta_y+k_{z,4}\delta_z)_y>M(k_{x,3}\delta_x+k_{y,3}\delta_y+k_{z,3}\delta_z)_z$. \newline Let W be the number of object that have a better score than $M(k_{x,3}\delta_x+k_{y,3}\delta_y+k_{z,3}\delta_z)_z$ plus one ($x$ and $y$ are not in the solution). By swapping the votes one by one, when $k_{x,3}$ gets changed to $k_{x,4}$ and $k_{z,3}$ to $k_{z,4}$, there is a swap from $x$ to $z$ that changes the solution. During the swap of a solution, $x$ cannot leave the solution since it was not in the solution, and $z$ cannot enter it, since it was already in the solution.  Thus, CCP is not verified.
\end{proof}

\begin{proposition*}{\bf~\ref{proposition:3:CCP}}
For the total score functions, with $3$ objects, respecting the set of equations $\Delta$ implies respecting property $CCP$.
\end{proposition*}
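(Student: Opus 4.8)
The plan is to use that with exactly three objects the (unitary) budget $W$ lies in $\{0,1,2,3\}$. For $W=0$ the winning set is always $\emptyset$ and for $W=3$ it is always $\setObjects$, so in both cases every change of ballot leaves the output unchanged and CCP holds vacuously. It then remains to treat $W=1$, where the winning set is the single object of largest score (ties broken by the fixed strict order of $\SCF$), and $W=2$, where the winning set is $\setObjects$ with the single object of \emph{smallest} score removed (again using $\SCF$'s tie-break to decide which one is removed).

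I would first record a monotonicity lemma. Fix a switch of one vote from $\alpha$ to $\beta$, and write $\gamma$ for the third object. This changes the aggregate count vector $\mathbf{e}$ by $-\mathbf{\delta}_\alpha+\mathbf{\delta}_\beta$, so the score $s_i$ of object $i$ changes by $M_{i,\beta}-M_{i,\alpha}$; consequently $s_\beta-s_\gamma$ and $s_\gamma-s_\alpha$ change by $(M_{\beta,\beta}-M_{\beta,\alpha})-(M_{\gamma,\beta}-M_{\gamma,\alpha})$ and $(M_{\alpha,\alpha}-M_{\alpha,\beta})+(M_{\gamma,\beta}-M_{\gamma,\alpha})$, each of which is $\ge 0$ by the first family of constraints of $\Delta$, namely $M_{i,i}-M_{i,j}\ge M_{k,i}-M_{k,j}$, taken at $(i,j,k)=(\beta,\alpha,\gamma)$ and $(\alpha,\beta,\gamma)$; hence $s_\beta-s_\alpha$, being their sum, is also non-decreasing. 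So a switch $\alpha\to\beta$ can only move the scores along the ``direction'' $\beta\succeq\gamma\succeq\alpha$. Because $\SCF$'s tie-break is a single fixed strict order, this yields the usable form: if before the switch object $p$ is at least as good as object $q$ (winning the tie-break when they are equal) and $(p,q)$ is one of $(\beta,\gamma),(\gamma,\alpha),(\beta,\alpha)$, then the same holds after the switch.

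With this lemma I would case on the ``distinguished'' object $x$ — the winner when $W=1$, the unique omitted object when $W=2$ — before and after the switch, writing $x'$ for its value after. If $x=x'$ the output is unchanged (third CCP clause). For $W=1$: if $x=\alpha$ then $\alpha$ either stays the winner (no change) or leaves the output and is replaced by $x'$ (first clause); if $x=\beta$ the lemma shows $\beta$ stays the winner; if $x=\gamma$ the lemma keeps $\gamma$ at least as good as $\alpha$, so $x'\ne\alpha$, hence $x'\in\{\gamma,\beta\}$, and $x'=\beta$ means $\beta$ becomes chosen while the single object $\gamma$ is ejected (second clause). For $W=2$ the roles are dual: if $x=\alpha$ the lemma keeps $\alpha$ the (unique) minimum, so $x'=\alpha$; if $x=\beta$ then either $x'=\beta$ (no change) or $\beta$ enters the output while exactly one previously chosen object, namely $x'$, is ejected — exactly one because the winning set always has size $W=2$ (second clause); if $x=\gamma$ the lemma keeps $\beta$ at least as good as $\gamma$, so $x'\ne\beta$, hence $x'\in\{\gamma,\alpha\}$, and $x'=\alpha$ means $\alpha$ is ejected and $\gamma$ takes its place (first clause). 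In every branch one of the three CCP clauses holds.

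The step I expect to require the most care is the handling of ties: everything rests on the fact that if two objects have equal scores before the switch and equal scores after it, then $\SCF$'s fixed tie-break ranks them the same way both times, so the distinguished object cannot ``hop'' onto an object it was tied with and that the tie-break had placed on the wrong side. The rest — that $\mathbf{e}$ moves by exactly $-\mathbf{\delta}_\alpha+\mathbf{\delta}_\beta$, and that ``exactly one previously chosen object is ejected'' is forced by the winning set always having size $W$ — is bookkeeping. The second family of constraints of $\Delta$, $M_{i,j}\le M_{j,j}$ with strict inequality when $i>j$, enters only to guarantee that a single voter's winning set is contained in her ballot, matching the single-voter situation that motivates CCP's first clause.
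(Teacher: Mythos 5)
Your proof is correct and takes essentially the same route as the paper's: both rest on showing that a switch from $\alpha$ to $\beta$ can only move each pairwise score comparison in the direction $\beta\succeq\gamma\succeq\alpha$ (via the constraints $M_{i,i}-M_{i,j}\ge M_{k,i}-M_{k,j}$) and that the fixed tie-break then forbids any change of winning set other than the three CCP clauses; you merely make explicit the enumeration over $W$ and the distinguished object that the paper leaves as ``with all these things coming together.'' One small inaccuracy in your closing remark: the paper does invoke the second family of constraints ($M_{a,a}>M_{b,a}$, $M_{b,b}>M_{a,b}$) to control the $\alpha$-versus-$\beta$ comparison itself, not only the single-voter containment property, although your derivation of that comparison from the first family alone is equally valid.
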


\begin{proof}

Let us suppose that
\begin{itemize}
    \item $M$ satisfies $\Delta$;
    \item A voter $\voter$ changes her ballot by removing $a$ and putting $b$ instead.
\end{itemize}
Let us denote $order_1$ (resp $order_2$) the order of the objects according to their score before (resp after) $\voter$ changes her vote.

Due to $ M_{c,b}-M_{c,a}<M_{b,b}-M_{b,a}$, if $c$ has a lower rank than $b$ in $order_1$, it still has a lower rank in $order_2$.\newline

Due to $M_{c,a}-M_{c,b}<M_{a,a}-M_{a,b}$, if $a$ has a higher rank than $c$ in $order_2$, it has a higher rank in $order_1$. Hence, if $c$ has a higher rank than $a$ in $order_1$, it still has a higher rank in $order_2$.\newline

Due to $M_{a,a}>M_{b,a}$ and $M_{b,b}>M_{a,b}$ (or with cases of equality and the tie break breaking for the object that is not voted), if $a$ has a higher rank than $b$ in $order_2$, it has a higher rank than $b$ in $order_1$, thus, if $b$ has a higher rank in $order_1$, it has a higher rank in $order_2$. 

The ties are broken the same way for $order_1$ and $order_2$.

With all these things coming together, the only  change that can happen in the winning set are the ones of CCP.
\end{proof}

\begin{proposition*}{\bf \ref{proof:6} }
Strategyproof total score functions respect $\Delta$. 
\end{proposition*}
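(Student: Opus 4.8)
The plan is to combine Proposition~\ref{proposition:a1} with a manipulation argument distilled from the proof of Proposition~\ref{proposition:a2}. Fix a strategyproof total score function $(M,\SCF)$ and relabel the objects so that the tie-break strict order of $\SCF$ is $>$ on $[m]$. Proposition~\ref{proposition:a1} then immediately gives the last two families of $\Delta$ --- $M_{i,j}<M_{j,j}$ whenever $i>j$ and $M_{i,j}\le M_{j,j}$ otherwise --- so it remains to prove $M_{i,i}-M_{i,j}\ge M_{k,i}-M_{k,j}$ for all $i,j,k$, which I would do by contraposition.

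Suppose some triple violates it, say $M_{k,i}-M_{k,j}>M_{i,i}-M_{i,j}$. First I would note that, given the tie-break inequalities already in hand, the violating triple must be pairwise distinct: the inequality is trivially false when $k=i$ or $i=j$, and in the remaining degenerate case $k=j$ it reads $M_{j,i}+M_{i,j}>M_{i,i}+M_{j,j}$, which contradicts the sum of the two tie-break inequalities for $\{i,j\}$ (one of which is strict, since either $i>j$ or $j>i$). So rename the triple $x,y,z$ with $\epsilon_1:=M_{y,z}-M_{y,x}-M_{z,z}+M_{z,x}>0$; since $m\ge 3$, Lemma~\ref{lemma:a1} supplies a ballot profile in which $z$ is the top-scoring object, $y$ the second and $x$ the third, with every object count nonzero. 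Then I would rerun the scaling step from the proof of Proposition~\ref{proposition:a2}: after multiplying the counts so that enough votes for $x$ are available and slightly increasing $y$'s count, choose a weight bound $W$ under which $z$ is the lowest-ranked object of the winning set while $x$ and $y$ sit just outside it.

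Now move votes from $x$ to $z$ one at a time. Each such move increases $y$'s score relative to $z$'s by exactly $\epsilon_1>0$, so after finitely many moves $y$ overtakes $z$; hence at some single move --- carried out by a voter $\voter_0$ who replaces $x$ with $z$ in her ballot --- the winning set changes from some $S$ to some $S'\ne S$. Throughout the earlier moves the winning set had been the constant $S$, so at the initial configuration it is $S$, whence $z\in S$; and $x\notin S$ by construction. Since $|S|=|S'|=W$ and $S\ne S'$, the set $S'\setminus S$ is nonempty and, because $z\in S$, contains some $a\ne z$. I would finish by exhibiting the manipulation: let $\voter_0$'s sincere ballot be $b^*=\{x,a\}$ (understood as $\{x\}$ when $a=x$), and let the other voters report so that the aggregate vote-count equals the one reached just before the decisive move --- possible because that count is positive in coordinates $x$ and $a$, so one $a$-vote may be relocated onto $\voter_0$'s ballot, using the freedom in the counts of Lemma~\ref{lemma:a1} and leaving $S$ and $S'$ unchanged. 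Voting sincerely, $\voter_0$ obtains winning set $S$ and utility $|S\cap b^*|=0$, since $x\notin S$ and $a\notin S$; reporting $(b^*\setminus\{x\})\cup\{z\}$ instead changes only her $x$- and $z$-votes, yields $S'$, and gives utility $|S'\cap b^*|\ge 1$ because $a\in S'\cap b^*$. So $\voter_0$ strictly gains by misreporting, contradicting strategyproofness; hence $\Delta$'s first family holds and $(M,\SCF)$ respects $\Delta$.

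The step I expect to be the main obstacle is the last one: arranging the voters other than $\voter_0$ so that the aggregate vote-count vector --- and with it the two winning sets $S$ and $S'$ that witness the decisive change --- is exactly the one delivered by the Lemma~\ref{lemma:a1}-plus-scaling construction while simultaneously forcing $\voter_0$'s sincere ballot to contain $a$. The rest is the scaling and bookkeeping already carried out in the proof of Proposition~\ref{proposition:a2}.
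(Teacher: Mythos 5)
Your overall blueprint matches the paper's: dispose of the last two families of $\Delta$ via Proposition~\ref{proposition:a1}, then refute the first family by contraposition using Lemma~\ref{lemma:a1}, a scaled profile, a sequence of single-vote moves that exploits $\epsilon_1>0$, and a manipulation at the first move where the winning set changes. Your observation that a violating triple must be pairwise distinct (the case $k=j$ contradicting the sum of the two tie-break inequalities) is correct and is a detail the paper glosses over. The realizability worry you flag at the end is not the real obstacle: Lemma~\ref{lemma:a1} gives strictly positive counts in every coordinate, so after scaling the residual count $\mathbf{e}-\delta_x-\delta_a$ is a nonnegative integer vector and can be distributed among the other voters.

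The genuine gap is in the endgame, and it comes from the direction you shift votes. You move votes from $x$ to $z$ so that $y$ overtakes $z$; but Lemma~\ref{lemma:a1} places $z$ strictly above \emph{every} other object, so for your requirement ``$z\in S$, $y\notin S$'' you are forced to take $W=1$ when $m=3$ (and in general whenever no third object separates $z$ from $y$, which the lemma does not guarantee). With $W=1$ your manipulator's sincere ballot $\{x,a\}$ is a two-object ballot in a budget-one instance, which the paper's model does not admit (its own proofs insist on ``coherence between the number of objects in the ballot and the number of objects in the solution'', and the induction behind Theorem~\ref{theo:4} presumes fixed-size ballots); and if you shrink the ballot to the admissible singleton $\{x\}$, the forward deviation $\{x\}\to\{z\}$ yields utility $0$ both before and after, since $x\notin S$ and generically $x\notin S'$. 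Two repairs are available. One is to run your manipulation in reverse: take the sincere ballot $\{z\}$ with the others summing to the post-decisive count, so the outcome is $S'$ with utility $0$, and deviate to $\{x\}$ to restore $S=\{z\}$ and utility $1$ --- this respects $W=1$ and closes your argument. The other is the paper's route: shift votes from $z$ to $x$ instead, so that $y$ overtakes $x$ at the boundary between ranks $W$ and $W{+}1$ while $z$ stays safely selected; then $W\ge 2$, the manipulator's sincere ballot is the entire better winning set (containing $z$ and $y$), and replacing $z$ by $x$ in it triggers the improvement. Relatedly, your gap-control step (``slightly increasing $y$'s count'') is borrowed from the proof of Proposition~\ref{proposition:a2} and fails when $M_{z,y}=M_{y,y}$ and can perturb the other objects' ranks; the paper's proof of this proposition instead combines two Lemma~\ref{lemma:a1} profiles so that the gap to be closed is pinned to a fixed constant $dist$ while the available votes grow without bound --- mere rescaling of a single profile leaves the ratio of gap to available votes unchanged.
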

\begin{proof}
Assume that $M$ does not verify $\Delta$. We already proved that if $M$ does not respect $\forall i,j,j\neq i~ M_{i,j}<M_{j,j}$, then it is not strategyproof, hence, there exists $ x,y,z ~ s.t.~M_{y,x}-M_{y,z}\ge M_{x,x}-M_{x,z}$.\newline\newline
Lets take $ x,y,z ~ s.t.~M_{y,x}-M_{y,z}- M_{x,x}+M_{x,z}=\epsilon_1 >0$,\newline\newline
Due to Lemma~\ref{lemma:a1}, there exists a set of ballots $\Declaration_1$, cast by a set of voters as large as necessary, such that $k_{x,1}\delta_x+k_{y,1}\delta_y+k_{z,1}\delta_z+\dots+k_{m,1}\delta_m$ and such that $M(k_{x,1}\delta_x+k_{y,1}\delta_y+k_{z,1}\delta_z+\dots+k_{m,1}\delta_m) =\begin{pmatrix}
t_{x,1} \\
t_{y,1} \\
t_{z,1} \\
\dots
\end{pmatrix}$ with $t_{z,1}>t_{y,1}>t_{x,1}$, $t_{z,1}>t_{i,1}$ for every other object $i$ and $k_{x,1},k_{y,1},k_{z,1},\dots\neq 0$.

There also exists a set of ballots $\Declaration_2$, cast by a set of voters as large as necessary, such that
$k_{x,2}\delta_x+k_{y,2}\delta_y+k_{z,2}\delta_z+\dots+k_{m,2}\delta_m$ and such that $M(k_{x,2}\delta_x+k_{y,2}\delta_y+k_{z,2}\delta_z+\dots+k_{m,2}\delta_m) =\begin{pmatrix}
t_{x,2} \\
t_{y,2} \\
t_{z,2} \\
\dots
\end{pmatrix}$, with $t_{z,2}>t_{x,2}>t_{y,2}$, $t_{z,2}>t_{i,2}$ for every other object $i$ and $k_{x,2},k_{z,2},k_{y,2},\dots\neq 0$
\newline

Let  $dist$ be $k_{x,2}-k_{y,2}$. For any $k_{\declaration_1}$, there exists $k_{\declaration_2}$ such that if \newline
$k_{x,3}=k_{\declaration_1}*k_{x,1}+k_{\declaration_2}*k_{x,2}$, $k_{y,3}=k_{\declaration_1}*k_{y,1}+k_{\declaration_2}**k_{y,2}$, $k_{z,3}=k_{\declaration_1}*k_{z,1}+k_{\declaration_2}*k_{z,2}$, etc\dots then \newline
$M(k_{x,3}\delta_x+k_{y,3}\delta_y+k_{z,3}\delta_z+k_{m,3}\delta_m) =\begin{pmatrix}
t_{x,3} \\
t_{y,3} \\
t_{z,3} \\
\dots
\end{pmatrix}$ with $t_{z,3}>t_{x,3}>t_{y,3}$, $t_z>t_i$ for every other object $i$, $k_{x,3},k_{z,3},k_{y,3},\dots\neq 0$, $t_{x,3}-t_{y,3}\le dist$ and due to the definition of $\Declaration_1$ and $\Declaration_2$, $t_{z,3}-t_{x,3}\ge k_{\declaration_1}*(t_{z,1}-t_{y,1})$. $t_{z,1}-t_{y,1}>0$ and $k_{\declaration_1}$ is as high as we want, so we will now assume that $t_{z,3}-t_{x,3}$ is the infinity and that the amount of votes for each object is also the infinity.
\newline

$dist$ is finite. Due to $M_{y,x}-M_{y,z}\ge M_{x,x}-M_{x,z}$, there exists a number $exch$ such that exchanging $exch$ votes from $z$ to $x$ leads to a better score for $y$ than for $x$. \newline

Let us multiply all the $t_{.,3}$ by $m!$ to obtain $t_{.,4}$. Now, note that any number between $1$ and $m$ divides the sum of votes for the objects, that is to say that for any $W$, with enough voters, such a set of ballots $\Declaration_4$ is possible to obtain, while keeping coherence between the number of objects in the ballot and the number of objects in the solution. Hence, it is now possible to choose $W$ as we want.\newline

Now denote by $order_1$ the list of objects ordered by the score given by $M(\Declaration_4)$. Now exchange $m! * exch$ votes from $z$ to $x$ (it is possible because there is an infinity of votes for $z$). This gives a new set of ballots $\Declaration_5$ in which we take $order_2$, the list of objects ordered by the score given by $M(\Declaration_5)$.\newline

Now let us set $W$ to be the rank of $y$ (in case of equality, which cannot happen between $x$ and $y$, by choice of $exch$, set $W$ so that every object in  equality with $y$ is selected) in $order_2$ and assume that a voter $\voter$ favorite outcome would be the one of $order_2$ (her strategyproof vote would be to vote for the objects selected in $order_2$). $W<m$ because $x$ is not selected. $z$ is selected both in $order_1$ and $order_2$, because it cannot have a lower score than $y$ in $order_2$ since it was at infinity in $order_4$ and we only changed a finite ($exch * m!$) amount of vote. The solution as changed between $order_1$ and $order_2$, because either $x$ was selected in $order_1$ and is not in $order_2$, either $x$ was not selected in $order_1$, hence, $y$ was not selected either, and $y$ is selected in $order_2$. Thus, the utility of $\voter$ is higher in $order_2$ than in $order_1$. There is one change in a ballot from $z$ to $x$ (from a set of ballots $\Declaration_5$ to a set of ballots $\Declaration_6$) that led to this improvement of utility. Assume that $\voter$ votes a strategyproof way and that the other ballots lead to $\Declaration_5$ (it is possible because there is an infinity of votes for every object). $z$ is part of the solution of $order_2$, which is optimal for $\voter$; hence, in her strategyproof vote, she votes for $z$ The same way, $x$ is not in her strategyproof vote. $\voter$ can now change her vote from $z$ to $x$ and increase her utility.
\end{proof}

\begin{proposition*}{\bf \ref{prop:936} }
With strictly more than $3$ objects, there exist some score functions that respect the set of constraints  $\Delta$ and are not strategyproof.
\end{proposition*}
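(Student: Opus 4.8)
The plan is to make concrete the remark that the equivalence ``$\Delta \Leftrightarrow$ strategyproof'' cannot be extended past $m=3$. When $m=3$, the inequalities of $\Delta$ force that, after a voter moves one unit of support from an object $\alpha$ to an object $\beta$, the score of $\alpha$ drops by the most and that of $\beta$ rises by the most; since there is then only one remaining object, its rank cannot be disturbed, so only the changes allowed by CCP can occur (this is the content of Proposition~\ref{proposition:3:CCP}). With $m\ge 4$ there are two ``spare'' objects whose relative order may flip, and $\Delta$ does not forbid this. So I would exhibit an explicit $4\times 4$ score matrix realising such a flip.

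First I would fix $m=4$ and take
$$M=\begin{pmatrix}3&0&0&0\\0&3&0&0\\2&0&2&0\\0&2&0&2\end{pmatrix}$$
together with an arbitrary tie-break $\SCF$. The entries $M_{3,1}=2$ and $M_{4,2}=2$ (against $M_{3,2}=M_{4,1}=0$) are the crux: a vote for $o_1$ raises the score of $o_3$, a vote for $o_2$ raises the score of $o_4$, and these effects are asymmetric; the diagonal entries $M_{3,3}=M_{4,4}=2$ are taken just large enough that the inequalities $M_{i,i}-M_{i,j}\ge M_{k,i}-M_{k,j}$ still hold. Step one of the proof is the finite, routine verification that $M$ satisfies all of $\Delta$: the family $M_{i,i}-M_{i,j}\ge M_{k,i}-M_{k,j}$, the strict dominance $M_{i,j}<M_{j,j}$ for $i>j$, and the non-strict version otherwise. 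One also checks easily that $M$ is total (every pair, and every subset of the relevant size, is achievable as a winning set for a suitable integer profile), so the example in fact refutes the equivalence even for total score functions.

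Step two is a profile witnessing a profitable manipulation. Take $W=2$; let three voters cast the ballot $\{o_1\}$, four voters cast $\{o_4\}$, and one further voter $v$ have sincere ballot $\{o_2,o_3\}$, so that $u_v(e)=[o_2\in e]+[o_3\in e]$. If $v$ votes sincerely, the vote-count vector is $(3,1,1,4)^{\top}$ and $M(3,1,1,4)^{\top}=(9,3,8,10)^{\top}$, so the winning pair is $\{o_1,o_4\}$ and $u_v=0$. If instead $v$ casts the ballot $\{o_1,o_3\}$, the count becomes $(4,0,1,4)^{\top}$ and $M(4,0,1,4)^{\top}=(12,0,10,8)^{\top}$, so the winning pair is $\{o_1,o_3\}$ and $u_v=1$; in both profiles all four scores are distinct, so the tie-break never intervenes. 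Hence sincere voting is not optimal for $v$, i.e.\ the score voting is not strategyproof although its score function respects $\Delta$, which establishes the proposition.

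The only delicate point is the joint tuning of $M$ and the profile: forcing the asymmetric ``helping'' entries $M_{3,1},M_{4,2}$ together with the $\Delta$-dominance inequalities pushes the diagonal entries $M_{3,3},M_{4,4}$ up, which makes the per-vote score increments coarse and squeezes the winning scores into a narrow band; one must then check that an integer profile nevertheless exists in which the rank swap $o_3\leftrightarrow o_4$ flips the winning pair in $v$'s favour — the transition $(3,1,1,4)\to(4,0,1,4)$ above does exactly this, $v$ effectively trading her (useless) vote for $o_2$ for a vote for $o_1$, which boosts $o_3$ more. Once such a matrix and profile are pinned down, the remainder is just the two displayed $4\times 4$ matrix--vector products.
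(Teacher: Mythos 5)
Your proof is correct: the matrix satisfies both families of constraints in $\Delta$ (I checked all column-pair differences and the diagonal-dominance conditions), the two matrix--vector products are right, no ties arise, and the deviation strictly raises $v$'s utility. This is essentially the same approach as the paper's proof, which likewise exhibits an explicit $4\times 4$ matrix respecting $\Delta$ but violating the equality constraint of $\Delta^+$ (there via the single entry $M_{4,2}=2$) together with a two-object swap in one ballot that flips the ranking of two "spare" objects.
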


\begin{proof} 
Let $M$ be a score matrix $
	\begin{pmatrix}
101 & 0 & 0 & 0\\
0 & 3 & 0 & 0 \\
0 & 0 & 100 & 0\\
0 & 2 & 0 & 99
\end{pmatrix}$. There are $2$ voters. The rows of the matrix correspond to objects $a,b,c,d$.

\begin{itemize}
    \item $\voter_1$ votes $a,d$
    \item $\voter_2$ votes $a,c$
\end{itemize}

$\{a,c\}$ wins. If, instead, the votes are:

\begin{itemize}
    \item $\voter_1$ votes $b,d$
    \item $\voter_2$ votes $a,c$
\end{itemize}

now $\{a,d\}$ wins. Hence, if $\voter_1$ wants $\{a,d\}$, she should lie on her ballot.

One can easily verify that this matrix respects $\Delta$
\end{proof}

\begin{proposition*}{\bf~\ref{proposition:p28}}
If $m>3$, every strategyproof total score function satisfies $\Delta^+$.
\end{proposition*}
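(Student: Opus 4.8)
The plan is to build on Proposition~\ref{proof:6}: a strategyproof total score function $(M,\SCF)$ already satisfies $\Delta$, so the only thing left is the extra family of equalities of $\Delta^+$, namely $M_{c,a}-M_{c,b}=M_{d,a}-M_{d,b}$ for all $a,b,c,d$. First I would observe that, given $\Delta$, these equalities collapse to a single family: applying the first inequalities of $\Delta$ with $(i,j)=(a,b)$ and with $(i,j)=(b,a)$ gives $M_{b,a}-M_{b,b}\le M_{k,a}-M_{k,b}\le M_{a,a}-M_{a,b}$ for every row $k$, so the extra constraint of $\Delta^+$ holds for the column pair $(a,b)$ if and only if the two ends coincide, i.e.\ $M_{a,a}+M_{b,b}=M_{a,b}+M_{b,a}$. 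Hence it suffices to prove that this holds for all $a\neq b$, and I would argue by contradiction: assume $\epsilon:=(M_{a,a}-M_{a,b})-(M_{b,a}-M_{b,b})>0$ for some $a\neq b$ and produce a profitable deviation.

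The manipulation template I would use is: if one voter moves her approval from an object $q$ to an object $p$, the total ballot vector changes by $\delta_p-\delta_q$, so the score of each object $k$ changes by $M_{k,p}-M_{k,q}$, and the gap between two objects $r,s$ changes by $(M_{r,p}-M_{r,q})-(M_{s,p}-M_{s,q})$. Using $m>3$ I would first locate \emph{four distinct} objects $p,q,r,s$ with $(M_{r,p}-M_{r,q})-(M_{s,p}-M_{s,q})>0$ (this is where the hypothesis $m>3$ is genuinely used; the reduction relies on the algebraic identity $M_{k,a}-M_{k,b}=(M_{k,a}-M_{k,e})-(M_{k,b}-M_{k,e})$ to move the witness of the $\Delta^+$-violation onto four pairwise distinct indices, which is impossible when $m=3$). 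Then, using totality together with a four-object strengthening of Lemma~\ref{lemma:a1} (proved by iterating the two-profile interpolation-and-rescaling argument in the proof of that lemma), I would build a ballot profile $\Declaration$ in which $p$ lies strictly inside the winning set of size $W$, $q$ lies strictly outside it, $s$ is the last object in and $r$ the first object out with their score gap smaller than the positive quantity above, and every other object is far from the boundary. Designating one voter whose sincere ballot approves $r$ and $q$ (plus whatever else is needed to realize $\Declaration$) but not $s$ and not $p$, her transfer of approval from $q$ to $p$ makes $r$ overtake $s$ while leaving $p$, $q$ and all other objects on their side of the boundary; so the winning set loses $s$ (which she does not approve) and gains $r$ (which she does), her utility strictly increases, and strategyproofness is contradicted. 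Therefore $M_{a,a}+M_{b,b}=M_{a,b}+M_{b,a}$ for all $a\neq b$, i.e.\ $M$ satisfies $\Delta^+$.

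The step I expect to be the main obstacle is precisely the part that forces $m>3$: proving that whenever $\Delta$ holds but $\Delta^+$ fails and $m\ge 4$ one can always choose the four objects $p,q,r,s$ pairwise distinct in the configuration the manipulation needs (in particular ruling out the "degenerate" situation in which every violation of $\Delta^+$ only ever involves rows equal to one of the two columns), and the accompanying four-object version of Lemma~\ref{lemma:a1} that lets the profile place exactly $r$ and $s$ at the boundary with an arbitrarily small prescribed gap while keeping $p$, $q$ and all remaining objects arbitrarily far away. Both are expected to follow by adapting the techniques already present in the proofs of Lemma~\ref{lemma:a1} and Proposition~\ref{proof:6}, but they demand careful combinatorial bookkeeping.
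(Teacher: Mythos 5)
Your manipulation template---a voter transferring her approval from $q$ to $p$ shifts every score by $M_{k,p}-M_{k,q}$, so a positive differential $(M_{r,p}-M_{r,q})-(M_{s,p}-M_{s,q})$ can be exploited at the boundary of the winning set---is exactly the engine of the paper's proof. The gap is in your very first reduction. You read the extra constraint of $\Delta^+$ with the rows $c,d$ ranging over all of $\setObjects$, including $c=a$ and $d=b$, and conclude from the sandwich $M_{b,a}-M_{b,b}\le M_{k,a}-M_{k,b}\le M_{a,a}-M_{a,b}$ that it suffices to prove $M_{a,a}+M_{b,b}=M_{a,b}+M_{b,a}$ for every $a\neq b$. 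That target is false: knapsack voting itself, i.e.\ $M=I_m$, and more generally any positive diagonal matrix such as $\mathrm{diag}(1,2,3,4)$, is a total, strategyproof score function with $M_{a,a}+M_{b,b}>0=M_{a,b}+M_{b,a}$. The equality the paper actually exploits (see how it is invoked in the proofs of Propositions~\ref{proposition:p28} and~\ref{proposition:p29}) is $M_{c,a}-M_{c,b}=M_{d,a}-M_{d,b}$ only for rows $c,d\notin\{a,b\}$; the rows $a$ and $b$ themselves are controlled by the inequalities of $\Delta$ alone, which a positive diagonal matrix satisfies with strict slack.

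As a consequence, the step you flag as ``the main obstacle'' is not combinatorial bookkeeping but an impossibility: from $M_{a,a}+M_{b,b}>M_{a,b}+M_{b,a}$ one cannot in general extract four pairwise distinct $p,q,r,s$ with $(M_{r,p}-M_{r,q})-(M_{s,p}-M_{s,q})>0$. For $\mathrm{diag}(1,2,3,4)$ every four-distinct-index difference equals $0$ although your reduced target fails for every pair $(a,b)$, and indeed no profitable deviation exists there because the matrix satisfies CCP. The repair is to drop the reduction entirely: a violation of the correctly quantified extra constraint of $\Delta^+$ already comes with two rows $c\neq d$ lying outside $\{a,b\}$, hence with four pairwise distinct indices, and from that point your construction (Lemma~\ref{lemma:a1} plus the interpolation of two profiles so that the two rows sit at the boundary with a small gap while the two columns stay far from it) coincides with the paper's argument and goes through.
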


\begin{proof}
Assume that $M$ does not verify $\Delta$. We already proved that if $M$ does not respect $\forall i,j,j\neq i~ M_{i,j}<M_{j,j}$ or $x,y,z ~ s.t.~M_{y,x}-M_{y,z}\le M_{x,x}-M_{x,z}$, then it is not strategyproof, hence, there exists $a,x,y,z \in \setObjects , M_{x,a}-M_{x,z}>M_{y,a}-M{y,z}$.\newline\newline
Lets take $ a,x,y,z\in \setObjects ~ s.t.~M_{x,a}-M_{x,z}- M_{y,a}+M{y,z}=\epsilon_1 >0$,\newline\newline
Due to Lemma \ref{lemma:a1}, there exists a set of ballots $\Declaration_1$ with as many voters as necessary such that $k_{x,1}\delta_x+k_{y,1}\delta_y+k_{z,1}\delta_z+\dots+k_{m,1}\delta_m$ and such that $M(k_{x,1}\delta_x+k_{y,1}\delta_y+k_{z,1}\delta_z+\dots+k_{m,1}\delta_m) =\begin{pmatrix}
t_{x,1} \\
t_{y,1} \\
t_{z,1} \\
\dots
\end{pmatrix}$ with $t_{z,1}>t_{y,1}>t_{x,1}$, $t_{z,1}>t_{i,1}$ for every other object $i$ and $k_{x,1},k_{y,1},k_{z,1},\dots\neq 0$

There also exists a set of ballots $\Declaration_2$ with as many voters as necessary such that
$k_{x,2}\delta_x+k_{y,2}\delta_y+k_{z,2}\delta_z+\dots+k_{m,2}\delta_m$ and such that $M(k_{x,2}\delta_x+k_{y,2}\delta_y+k_{z,2}\delta_z+\dots+k_{m,2}\delta_m) =\begin{pmatrix}
t_{x,2} \\
t_{y,2} \\
t_{z,2} \\
\dots
\end{pmatrix}$ with $t_{z,2}>t_{x,2}>t_{y,2}$, $t_{z,2}>t_{i,2}$ for every other object $i$ and $k_{x,2},k_{z,2},k_{y,2},\dots\neq 0$
\newline\newline
Let $dist$ be $k_{x,2}-k_{y,2}$. For any $k_{\declaration_1}$, there exists $k_{\declaration_2}$ such that if \newline
$k_{x,3}=k_{\declaration_1}*k_{x,1}+k_{\declaration_2}*k_{x,2}$, $k_{y,3}=k_{\declaration_1}*k_{y,1}+k_{\declaration_2}**k_{y,2}$, $k_{z,3}=k_{\declaration_1}*k_{z,1}+k_{\declaration_2}*k_{z,2}$, etc\dots then \newline
$M(k_{x,3}\delta_x+k_{y,3}\delta_y+k_{z,3}\delta_z+k_{m,3}\delta_m) =\begin{pmatrix}
t_{x,3} \\
t_{y,3} \\
t_{z,3} \\
\dots
\end{pmatrix}$ with $t_{z,3}>t_{x,3}>t_{y,3}$, $t_z>t_i$ for every other object $i$, $k_{x,3},k_{z,3},k_{y,3},\dots\neq 0$, $t_{x,3}-t_{y,3}\le dist$ and due to the definition of $\Declaration_1$ and $\Declaration_2$, $t_{z,3}-t_{x,3}\ge k_{\declaration_1}*(t_{z,1}-t_{y,1})$. $t_{z,1}-t_{y,1}>0$ and $k_{\declaration_1}$ is as high as we want, so we will now assume that $t_{z,3}-t_{x,3}$ is the infinity and that the amount of votes for each object is also the infinity.
\newline

$dist$ is finite. Due to $M_{y,x}-M_{y,z}\ge M_{x,x}-M_{x,z}$, there exists a number $exch$ such that exchanging $exch$ votes from $z$ to $a$ leads to a better score for $y$ than for $x$. \newline

Let us multiply all the $t_{.,3}$ by $m!$ to obtain $t_{.,4}$. Now, note that any number between $1$ and $m$ divides the sum of votes for the objects, that is to say that for any $W$, with enough voters, such a vote $\Declaration_4$ is possible to obtain, while keeping coherence between the number of objects in the ballot and the number of objects in the solution. Hence, it is now possible to choose $W$ as we want.\newline

Now denote by $order_1$ the list of objects ordered by the score given by $M(\Declaration_4)$. Now exchange $m! * exch$ votes from $z$ to $a$ (it is possible because there is an infinity of votes for $z$). This gives a new vote $\Declaration_5$ in which we take $order_2$, the list of objects ordered by the score given by $M(\Declaration_5)$.\newline

Now let us set $W$ to be the rank of $y$ (in case of equality, which cannot happen between $x$ and $y$, by choice of $exch$, set $W$ so that every object in equality with $y$ is selected) in $order_2$ and assume that a voter $\voter$ favorite outcome would be the one of $order_2$ (her strategyproof vote would be to vote for the objects selected in $order_2$). $W<m$ because $x$ is not selected. $z$ is selected both in $order_1$ and $order_2$, because it cannot have a lower score than $y$ in $order_2$ since it was at infinity in $order_4$ and we only changed a finite ($exch * m!$) amount of vote. The solution as changed between $order_1$ and $order_2$, because either $x$ was selected in $order_1$ and is not in $order_2$, either $x$ was not selected in $order_1$, hence, $y$ was not selected either, and $y$ is selected in $order_2$. Thus, the utility of $\voter$ is higher in $order_2$ than in $order_1$. There is one change a vote from $z$ to $x$ (from a vote $\Declaration_5$ to a vote $\Declaration_6$) that led to this improvement of utility. Assume that $\voter$ votes a strategyproof way and that the other ballots lead to $\Declaration_5$ (it is possible because there is an infinity of votes for every object). $z$ is part of the solution of $order_2$, which is optimal for $\voter$; hence, in her strategyproof vote, she votes for $z$ The same way, $x$ is not in her strategyproof vote. $\voter$ can now change her vote from $z$ to $x$ and increase her utility. 
\end{proof}

\begin{proposition*}{ \bf~\ref{proposition:p29} }
If $m>3$, every total score function that satisfies $\Delta^+$ satisfies CCP.
\end{proposition*}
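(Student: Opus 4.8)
The plan is to extend to arbitrary $m$ the argument used for Proposition~\ref{proposition:3:CCP}, the only feature of $m=3$ used there being that a single object lies outside the switched pair $\{\alpha,\beta\}$, while the extra clause of $\Delta^+$ is exactly what lets several such objects be handled uniformly. Fix a matrix $M$ satisfying $\Delta^+$, an integer $W$, a ballot profile with aggregate count vector $\mathbf e$, and a voter who changes her ballot from one containing $\alpha$ to one containing $\beta$, $\alpha\neq\beta$. The score of $o_i$ is $(M\mathbf e)_i$, and after the change it varies by $\delta_i:=M_{i,\beta}-M_{i,\alpha}$. I would first invoke the additional constraint of $\Delta^+$: for any two objects $p,q\notin\{\alpha,\beta\}$ the indices $p,q,\alpha,\beta$ are pairwise distinct (this is where $m\ge 4$ enters), so $M_{p,\alpha}-M_{p,\beta}=M_{q,\alpha}-M_{q,\beta}$, i.e.\ $\delta_p=\delta_q=:\delta$: every object outside $\{\alpha,\beta\}$ receives the same shift. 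Then, applying the first family of inequalities of $\Delta$ with $(i,j,k)=(\alpha,\beta,p)$ and with $(i,j,k)=(\beta,\alpha,p)$ for one such $p$ gives $\delta_\alpha=M_{\alpha,\beta}-M_{\alpha,\alpha}\le\delta$ and $\delta_\beta=M_{\beta,\beta}-M_{\beta,\alpha}\ge\delta$, so $\delta_\alpha\le\delta\le\delta_\beta$.

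Next I would translate this into the combinatorics of the winning set. Let $\prec_1$ and $\prec_2$ be the strict orders on objects given by score (descending) with ties broken by the fixed tie-break of $\SCF$, before and after the change. Subtracting the constant $\delta$ from every score leaves $\prec_2$ unchanged, and under this normalisation only $\alpha$'s score moves, weakly down, and only $\beta$'s score moves, weakly up. Inspecting the equal-score cases against the fixed tie-break, this yields: $\prec_1$ and $\prec_2$ agree on $\setObjects\setminus\{\alpha,\beta\}$; for every $o\neq\alpha$, $o\succ_1\alpha$ implies $o\succ_2\alpha$; and for every $o\neq\beta$, $\beta\succ_1 o$ implies $\beta\succ_2 o$. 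Hence each winning set $S_i$ (the $\prec_i$-largest $W$ objects) is the set of the $g_i$ highest objects of $\setObjects\setminus\{\alpha,\beta\}$ in the common fixed order, together with a set $A_i\subseteq\{\alpha,\beta\}$ with $g_i+|A_i|=W$; moreover $\beta\in S_1$ implies $\beta\in S_2$, and $\alpha\notin S_1$ implies $\alpha\notin S_2$.

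Finally I would run the case analysis on $(A_1,A_2)$. Since the generic parts of $S_1$ and $S_2$ are nested prefixes of a single fixed order, the two implications above force $\big||A_1|-|A_2|\big|\le 1$ and hence $|S_1\triangle S_2|\le 2$; enumerating the possibilities for $(A_1,A_2)$ shows that the change is always one of: nothing happens; $\alpha$ leaves and one further object (a generic one, or $\beta$) enters; $\beta$ enters and one previously chosen object leaves. These are exactly the three moves permitted by CCP, so $\SCF$ satisfies CCP. I expect the delicate points to be the tie-handling in the second step and the bookkeeping of this final enumeration; the algebraic content is just the two applications of the first clause of $\Delta$. As in the $m=3$ case it is worth stressing that $\Delta$ alone does not suffice: Proposition~\ref{prop:936} exhibits a $\Delta$-matrix whose objects outside $\{\alpha,\beta\}$ fail to move uniformly, which is precisely what the first step here rules out.
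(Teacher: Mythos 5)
Your proposal is correct and follows essentially the same route as the paper's proof: use the extra clause of $\Delta^+$ to show all objects outside $\{\alpha,\beta\}$ shift uniformly, then the first and second families of $\Delta$ to show $\beta$ can only rise and $\alpha$ can only fall relative to everything else, and conclude that the winning set can only change by a CCP-permitted move. Your write-up merely makes explicit (via the sandwich $\delta_\alpha\le\delta\le\delta_\beta$ and the final enumeration on $(A_1,A_2)$) what the paper compresses into ``with all these things coming together.''
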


\begin{proof}
Let us suppose that
\begin{itemize}
    \item $M$ satisfies $\Delta^+$;
    \item A voter $\voter$ changes her ballot by removing $a$ and putting $b$ instead.
\end{itemize}
Let us denote $order_1$ (resp $order_2$) the order of the objects according to their score before (resp after) $\voter$ changes her vote.

If $\forall c,d \in \setObjects , M_{d,b}-M_{d,a}=M_{c,b}-M{c,a}$, then every $c,d \in\setObjects$ have the same relative rank. \newline

Due to $b,c,a ~ ~M_{c,b}-M_{c,a}\le M_{b,b}-M_{b,a}$, for every object $c\neq a,b$, if $c$ has a lower rank than $b$ in $order_1$, it still has a lower rank in $order_2$.\newline

Due to $M_{a,b}<M_{b,b}$ and $M_{b,a}<M_{a,a}$ (or with cases of equality and the tie break breaking for the object that is not voted), if $a$ has a lower rank than $b$ in $order_1$, it still has a lower rank in $order_2$.\newline

Due to $b,c,a ~ ~M_{c,a}-M_{c,b}\le M_{a,a}-M_{a,b}$, for every object $c\neq a$, if $a$ has a higher rank than $c$ in $order_2$, it has a higher rank in $order_1$. Hence, if $c$ has a higher rank than $a$ in $order_1$, it still has a higher rank in $order_2$.\newline

The ties are broken the same way for $order_1$ and $order_2$.

With all these things coming together, the only  change that can happen in the winning set are the ones of CCP.
\end{proof}

\end{document}